\newtheorem{theorem}{Theorem}
\newtheorem{lemma}[theorem]{Lemma}
\newtheorem{definition}[theorem]{Definition}
\newtheorem{proposition}[theorem]{Proposition}
\newtheorem{result}[theorem]{Result}
\title{Kalman Filtering with Intermittent Observations: Weak Convergence to a Stationary Distribution}
\author{Soummya Kar, Bruno Sinopoli, and Jos\'e M.~F.~Moura$^{*}$
\thanks{The authors are with the Dep.~Electrical and Computer Engineering, Carnegie Mellon University, Pittsburgh, PA 15213, USA (e-mail:
soummyak@andrew.cmu.edu, brunos@ece.cmu.edu, moura@ece.cmu.edu, ph: (412)268-6341, fax: (412)268-3890.)}
\thanks{Work supported by NSF under grants \#~ECS-0225449
and~\#~CNS-0428404.}}
\begin{document}
\maketitle\thispagestyle{empty}\maketitle

\begin{abstract}
The paper studies the asymptotic behavior of discrete time Random Riccati Equations~(RRE) arising in Kalman filtering when the
arrival of the observations is described by a Bernoulli
i.i.d.~process. We model the RRE as an order-preserving, strongly
sublinear random dynamical system~(RDS). Under a sufficient
condition, stochastic boundedness, and using a limit-set dichotomy
result for order-preserving, strongly sublinear RDS, we establish
the asymptotic properties of the RRE: the sequence of random
prediction error covariance matrices converges weakly to a unique
invariant distribution, whose support exhibits fractal behavior. For stabilizable and detectable systems, stochastic boundedness (and hence weak convergence) holds for any non-zero observation packet arrival probability and, in particular, we can establish weak convergence at operating arrival rates well below the critical probability for mean stability (the resulting invariant measure in that situation does not possess a first moment.)
We apply the weak-Feller property
of the Markov process governing the RRE to characterize the
support of the limiting invariant distribution as the topological
closure of a countable set of points, which, in general, is not
dense in the set of positive semi-definite matrices. We use the
explicit characterization of the support of the invariant
distribution and the almost sure ergodicity of the sample paths to
easily compute statistics of the invariant distribution. A one
dimensional example illustrates that the support is a fractured
subset of the non-negative reals with self-similarity properties.
\end{abstract}

\section{Introduction}
\label{introduction}

\subsection{Background and Motivation}
\label{backmot} Named after Count Jacopo Francesco Riccati, the
man who conceived and studied it first, the Riccati equation has
received great interest in science and engineering. In particular,
its applications to control theory are widespread, ranging from
optimal to robust and stochastic control. In  Kalman
filtering, \cite{kalman}, the Riccati equation describes the evolution of the
state error covariance for linear Gaussian systems.
% The differential and the difference Riccati equations describe the covariances of both continuous and discrete-time systems, respectively.
 We focus in the paper on the discrete
version of the Riccati equation. Kalman showed that, for a linear
time-invariant system, under detectability conditions, the Riccati
equation converges to a fixed point, which is also unique if
certain stabilizability conditions are satisfied. The result is
very powerful as it asserts that the estimation steady state error
is constant. As a consequence, the steady state estimator gain is
also constant, providing a very practical result for
implementation. The problem is more involved when the system
matrices are time-varying, and it is further complicated if, in
addition, they are random.

Initial study of Random Riccati Equations (RRE)\footnote{In the sequel the term RRE refers to the discrete time Random Riccati Equations considered in this paper.} was
motivated by the Linear Quadratic Gaussian (LQG) in optimal
control when the system parameters are random. This  leads
naturally to a RRE. In adaptive control, where the parameters of
the system are unknown and need to be identified, RRE also
arises. Initial studies of RRE are in~\cite{ckvs89}, where the
authors consider linear stochastic systems with additive white
Gaussian noise, with the added generality that the system matrices
are random and adapted to the observation process. The paper shows
that the sufficient conditions for the Kalman Filter to provide
mean and covariance of the conditionally Gaussian state estimate
are that the random matrices are finite with probability one at
each time instant. This result applies to control problems of a
linear stochastic system in the case its parameters need to be
identified recursively. More recently, Wang and Guo~\cite{wg99}
provide sufficient conditions on the stochastic Grammian to
guarantee stability of RREs.

In the past few years, RRE has
received renewed interest in the area of networked control
systems. This is concerned with
estimation and control where components, namely, sensors,
controllers, and actuators are connected via general purpose
communication channels, such as ethernet, W-LANs, or Personal area
networks (PANs), e.g., IEEE 802.15.4-based networks. In this realm,
the stochastic characteristics of the channels introduce
additional sources of randomness, non Gaussian, in the control
problem. Special interest has been given to analog erasure
channels. Under this model, the observation packet is either
dropped with probability $\bar{\gamma}$, or reaches the receiver
with probability $1-\bar{\gamma}$.
One limitation of this model is that it does not take into
account quantization. Limits of control in the presence of
quantization have been addressed
in~\cite{tm04,tm04_2,mfdn09,ms04,mde06}. Fundamental results show
that systems can be stabilized with quantization level easily
achievable by common off-the-shelf A/D converters. This makes the
infinite precision assumption realistic. The results provided in this paper therefore neglect quantization effects.

In Matveev and Savkin~\cite{ms03}, the authors consider Kalman
filtering where observations can suffer bounded delay in
communication between the sensors and the estimator.  Sinopoli et
al.~\cite{Bruno}  consider a discrete-time system in which the
arrival of an observation at the estimator is modeled as a
Bernoulli i.i.d.~random process $\gamma_t$. The observation is
received by the estimator with probability $\bar{\gamma}$. They
show that under this model the Kalman Filter is still the optimal
estimator and study the time evolution of the error covariance.
Differently from the standard Kalman Filter, the error covariance
is now a random matrix, depending on the realization of the
process $\{\gamma_{t}\}$. This is described by a RRE. They study
the asymptotic behavior (in time) of its mean to determine
stability of the filter and show that, depending on the
eigenvalues of the matrix and on the structure of the matrix,
there exists a critical value $\overline{\gamma}^{\mbox{\scriptsize{bim}}}$, such that,
if the probability of arrival of an observation at time t is
$\bar{\gamma}>\overline{\gamma}^{\mbox{\scriptsize{bim}}}$, then the expectation of the
estimation error covariance is always finite (under
stabilizability and detectability hypotheses.)  The authors
provide upper and lower bounds for this critical probability and
compute it in closed form for a few special cases. Subsequent work
~\cite{ms08} characterizes the critical value for a large class of
linear systems, showing a direct relationship with the spectral
radius of the dynamic matrix $A$.

The model proposed in~\cite{Bruno} has been widely adopted and
extended by several
authors~\cite{Liu:04,Gupta:05,Xu:05,Minyi:07,kp-fb:07j,Craig:07,xx07}.
Although many present extensions to general Markov chains and
account for smart sensors sending local estimates instead of
observations, all the results are established with respect to mean
stability, i.e., boundedness of the mean covariance. This metric
is unsatisfactory in many applications, as it does not provide
information about the fluctuations of the error covariance that
could grow and be unusable for long time intervals. We would like
to characterize the asymptotic behavior of its distribution--the
goal of this paper.

In this work, we characterize the asymptotics
 of the state error covariance for a linear Gaussian
system where observations are lost according to a Bernoulli
process, as in~\cite{Bruno}. Based on
stochastic boundedness (see Subsection~\ref{def_crit}) of the
sequence of random prediction error covariance matrices, we provide a sufficient condition (which is also necessary under broad assumptions, including stabilizability and detectability of the system in question) for the existence and uniqueness of an attracting invariant (stationary) distribution for the RRE. We show that stochastic
boundedness implies weak convergence of the sequence of random
prediction error covariance matrices to a unique invariant
distribution, irrespective of the initial condition. We show that
the mean stability considered in~\cite{Bruno} implies stochastic
boundedness and hence it is possible to operate at packet arrival
probabilities below the threshold for mean stability and converge
to an invariant distribution. In particular, for stabilizable and detectable systems, stochastic boundedness is ensured by operating at any non-zero packet arrival probability leading to weak convergence, whereas, the critical probability for boundedness in mean can be very high, depending on the instability of the system. However, operating above the
critical probability for mean stability ensures that the invariant
distribution has a finite mean, which may not hold if operated
below. Our approach is based on modelling the RRE as an
order-preserving random dynamical system (RDS)
(see~\cite{ArnoldChueshov}), possessing the property of strong
sublinearity (to be explained later.) We use a limit-set dichotomy
result for such order-preserving, strongly sublinear RDS to
establish asymptotic properties of the RRE concerning existence
and uniqueness of invariant distributions. We contrast our work with Vakili and
Hassibi~\cite{vh08} and Censi~\cite{censi09}. In~\cite{vh08}, the authors take a completely different and very interesting
approach. They use the Stieltjes transform to compute a fixed
point for the RRE associated with intermittent loss of
observation due to a Bernoulli process. Although this is numerically sound, it assumes the existence of a stationary
distribution for the error covariance, and it is applicable only to
large matrices, i.e., as $N$ tends to infinity, which are also
asymptotically free~\cite{voiculescu}.
 When the first draft of our paper was complete, we came across~\cite{censi09},
which studies weak convergence of the RRE using the theory of Iterated Function Systems (IFS) (e.g., \cite{DiaconisFreedman}.) When the
system matrix~$A$ is invertible and a non-overlapping condition is
satisfied, the RRE satisfies a mean
contraction property, leading to existence and uniqueness of an
attracting invariant distribution (see~\cite{DiaconisFreedman}). Reference~\cite{censi09} uses these results to show weak convergence of the RRE to a unique invariant distribution if the system is operated above the
critical probability for mean stability and the resulting
invariant distribution has fractal support.
By contrast, our paper shows weak convergence to an attracting invariant
distribution for the general case and at operating
points below the critical probability for mean stability.

The weak-Feller property of the Markov process governing the RRE
enables us to explicitly characterize its support of the resulting
invariant distribution. We show that its support is the topological closure (in the metric space of
positive semidefinite matrices) of a countable set of points
(given explicitly as functionals of the
deterministic fixed point of the corresponding algebraic Riccati
equation.) The above set of points is not, in
general, a dense subset of the set of positive semidefinite
matrices. A detailed study of a scalar example shows that the
support is a highly fractured subset of the non-negative reals
with self-similarity properties, thus exhibiting the
characteristics of a fractal set. Finally, the explicit
identification of the support of the invariant distribution in the
general case and almost sure
(a.s.) ergodicity of the sample paths enable us to easily compute numerically the moments (and probabilities) of the invariant distribution. In this context, we note that a complete analytic characterization of the resulting invariant measures (for example, probabilities of large excursions under the invariant measures) has been addressed more recently in the follow-up paper (\cite{Riccati-moddev}), which characterizes moderate deviations properties of the invariant measures as the packet arrival probability $\overline{\gamma}$ approaches 1.

% We provide a condition
%for the existence and uniqueness of a stationary distribution of
%the random error covariance matrix. In addition we can provide
%qualitative results on the support of such distribution and
%discuss a method, based on the Ergodicity theorem, to compute it
%numerically. Our approach is based on the theory of Random
%Dynamical Systems (RDS)~\cite{ArnoldChueshov}. Soummya from here
%on

The paper is organized as follows.
Subsection~\ref{notprel} sets notation and summarizes preliminary
results. Section~\ref{prob_form} presents
a rigorous formulation of the weak convergence problem and the
main results of the paper are stated in Section~\ref{main_res}.
The RDS formulation of the RRE is carried out in
Section~\ref{RDS_form}, while Section~\ref{sec:prop_RDS} establishes
various properties of the RRE in the context of RDS theory. The
proofs of the main results are presented in
Section~\ref{main_proof}. Subsection~\ref{scal_num} analyzes a
scalar example in detail, while numerical studies on the invariant
distribution for the general case are presented in
Subsection~\ref{num_stud}. Finally Section~\ref{conclusion}
concludes the paper.

\subsection{Notation and Preliminaries}
\label{notprel}
Denote by: $\mathbb{R}$, the reals; $\mathbb{R}^{M}$,
the $M$-dimensional Euclidean space; $\mathbb{T}$, the integers; $\mathbb{T}_{+}$, the non-negative integers; $\mathbb{N}$, the natural numbers; and $\mathcal{X}$, a generic space.
For a subset $B\subset \mathcal{X}$,  $\mathbb{I}_{B}:\mathcal{X}\longmapsto\{0,1\}$ is the indicator
function, which is~$1$ when the argument is in~$B$ and zero otherwise; and $id_{\mathcal{X}}$ is the identity function on
$\mathcal{X}$.

%We summarize facts and definitions on the structure of
%cones in partially ordered Banach spaces and operator
%theoretic concepts needed in the sequel.

\textbf{Cones in partially ordered Banach spaces.}
 We summarize facts and definitions on the structure of
cones in partially ordered Banach spaces. Let $V$ be a Banach
space (over the field of the reals) with a closed (w.r.t. the
Banach space norm) convex cone $V_{+}$ and assume $V_{+}\cap
(-V_{+})=\{0\}$. The cone $V_{+}$ induces a partial order in $V$,
namely, for $X,Y\in V$, we write $X\preceq Y$, if $Y-X\in V_{+}$.
In case $X\preceq Y$ and $X\neq Y$, we write $X\prec Y$. The cone
$V_{+}$ is called solid, if it has a non-empty interior
$\mbox{int}\,V_{+}$; in that case, $V_{+}$ defines a strong
ordering in $V$, and we write $X\ll Y$, if
$Y-X\in\mbox{int}\,V_{+}$. The cone $V_{+}$ is normal if the norm
$\|\cdot\|$ of $V$ is semi-monotone, i.e., $\exists\, c>0$, s.t.
$0\preceq X\preceq Y \Rightarrow \|X\|\leq c\|Y\|$. There are
various equivalent characterizations of normality, of which we
note that the normality of $V_{+}$ ensures that the topology in
$V$ induced by the Banach space norm is compatible with the
ordering induced by $V_{+}$, in the sense that any norm-bounded
set $B\subset V$ is contained in a conic interval of the form
$[X,Y]$, where $X,Y\in V$. Finally, a cone is said to be
minihedral, if every order-bounded (both upper and lower bounded)
finite set $B\subset V$ has a supremum (here bounds are w.r.t. the
partial order.)% A notion stronger than normality is
%regularity. A cone $V_{+}$ is regular in $V$,
%if every upper bounded (w.r.t. the partial order) monotone
%sequence $\{X_{n}\}_{n\in\mathbb{T}_{+}}$
%\begin{equation}
%X_{1}\preceq X_{2}\preceq\cdots\preceq X_{n}\preceq\cdots
%\end{equation}
%converges w.r.t.~the Banach space norm. It can be shown that
%regularity implies normality.

We focus on the separable Banach
space of symmetric $N\times N$ matrices, $\mathbb{S}^{N}$,
equipped with the induced 2-norm. The subset $\mathbb{S}^{N}_{+}$
of positive semidefinite matrices is a closed, convex, solid,
normal, minihedral cone in $\mathbb{S}^{N}$, with non-empty
interior $\mathbb{S}^{N}_{++}$, the set of positive definite
matrices. The conventions above denote the
partial and strong ordering in $\mathbb{S}^{N}$ induced by
$\mathbb{S}_{+}^{N}$. For example, we use the notation $X\gg 0$ to denote that the matrix $X\in\mathbb{S}^{N}$ is positive definite, whereas $X\succeq 0$ denotes positive semidefiniteness and $X\succ 0$ indicates that $X$ is positive semidefinite and different from the zero matrix.

\textbf{Operator theoretic preliminaries.}
 We review operator-theoretic concepts needed to
analyze the Markov process generated by the random covariance
equations, details in, for
example, \cite{Zaharopol}. Let: $(\mathcal{X},d)$ a locally
compact separable metric space $\mathcal{X}$ with
metric~$d$;  $\mathbb{B}(\mathcal{X})$ its Borel algebra;
 $B(\mathcal{X})$ the Banach space of real-valued bounded
functions on $\mathcal{X}$, equipped with the sup-norm, i.e.,
$f\in B(\mathcal{X}), \:\|f\|=\sup_{x\in\mathcal{X}}|f(x)|$; and $C_{b}(\mathcal{X})$  the subspace of $B(\mathcal{X})$ of continuous
functions.

Let $\mathcal{M}(\mathcal{X})$ be the space of finite Borel
measures on $\mathcal{X}$. It is a Banach space under the total
variation norm (see~\cite{Zaharopol} for details.) For
$\mu\in\mathcal{M}(\mathcal{X})$, we define the support of $\mu$,
$\mbox{supp}(\mu)$, by
\begin{equation}
\label{def_support}
\mbox{supp}(\mu)=\left\{x\in\mathcal{X}\left|\right.\mu(B_{\varepsilon}(x))>0,
\:\:\forall\varepsilon>0\right\}
\end{equation}
where $B_{\varepsilon}(x)$ is the open ball of radius
$\varepsilon$ centered at $x$. It follows that
$\mbox{supp}(\mu)$ is a closed set.
 An element $\mu\in\mathcal{M}(\mathcal{X})$ is positive,
i.e., $\mu\geq 0$, if
\begin{equation}
\label{MarFel1} \mu(A)\geq 0, \forall~A\in\mathbb{B}(\mathcal{X})
\end{equation}

For $f\in B(\mathcal{X})$, $\mu\in\mathcal{M}(\mathcal{X})$, we define the bilinear form
$<\cdot,\cdot>:B(\mathcal{X})\times\mathcal{M}(\mathcal{X})\longmapsto\mathbb{R}$
\begin{equation}
\label{MarFel2} <f,\mu>=\int_{\mathcal{X}}f(x)\mu(dx)
\end{equation}

A linear operator
$T:\mathcal{M}(\mathcal{X})\longmapsto\mathcal{M}(\mathcal{X})$ is
positive if $T\mu\geq 0$ for $\mu\geq 0$. It can be shown
that such positive operators are necessarily bounded.

A linear operator
$T:\mathcal{M}(\mathcal{X})\longmapsto\mathcal{M}(\mathcal{X})$ is
a contraction if $\|T\|\leq 1$.

A positive contraction
$T:\mathcal{M}(\mathcal{X})\longmapsto\mathcal{M}(\mathcal{X})$ is
a Markov operator if $\|T\mu\|=\|\mu\|, \forall \mu\geq
0$.

\begin{definition}[Markov-Feller Operator] Consider the linear operator $L:C_{b}(\mathcal{X})\longmapsto C_{b}(\mathcal{X})$ and the Markov operator
$T:\mathcal{M}(\mathcal{X})\longmapsto\mathcal{M}(\mathcal{X})$. The pair $(L,T)$ is a Markov-Feller pair if
\begin{equation}
\label{MarFel3}
<Lf,\mu>=<f,T\mu>, \forall f\in
C_{b}(\mathcal{X}),\mu\in\mathcal{M}(\mathcal{X})
\end{equation}
A Markov operator
$T:\mathcal{M}(\mathcal{X})\longmapsto\mathcal{M}(\mathcal{X})$ is
a Markov-Feller operator if there exists a linear operator
$L:C_{b}(\mathcal{X})\longmapsto C_{b}(\mathcal{X})$ such that
$(L,T)$ is a Markov-Feller pair.
\end{definition}

\textbf{Weak Convergence and Invariant probabilities.}
  Assume $(\mathcal{X},d)$ is a locally compact
separable metric space. Let $\mathcal{P}(\mathcal{X})$ be the
subset of probability measures in $\mathcal{M}(\mathcal{X})$. The sequence $\{\mu_{t}\}_{t\in\mathbb{T}_{+}}$ in
$\mathcal{P}(\mathcal{X})$ converges weakly to $\mu\in \mathcal{P}(\mathcal{X})$ if
\begin{equation}
\label{WC1a}
\lim_{t\rightarrow\infty}<f,\mu_{t}>\,=\,<f,\mu>,~~\forall~f\in
C_{b}(\mathcal{X})
\end{equation}
Weak convergence is denoted by $\mu_{t}\Longrightarrow\mu$ and is
also referred to as convergence in distribution. The weak topology
on $\mathcal{P}(\mathcal{X})$ generated by weak convergence can be metrized. In particular, e.g., \cite{Jacod-Shiryaev}, one has the
Prohorov metric $d_{p}$ on $\mathcal{P}(\mathcal{X})$, such that
the metric space $\left(\mathcal{P}(\mathcal{X}),d_{p}\right)$ is
complete, separable, and a sequence $\{\mu_{t}\}_{t\in\mathbb{T}_{+}}$ in $\mathcal{P}(\mathcal{X})$
converges weakly to $\mu$ in $\mathcal{P}(\mathcal{X})$ \emph{iff}
\begin{equation}
\label{WC1} \lim_{t\rightarrow\infty}d_{p}(\mu_{t},\mu)=0
\end{equation}
%(for details, see, for example, \cite{Jacod-Shiryaev}.)

Let $(L,T)$ be a Markov-Feller pair on $(\mathcal{X},d)$. A
probability measure $\mu\in\mathcal{P}(\mathcal{X})$ is
an invariant probability for $T$ if $T\mu=\mu$. The operator~$T$
is uniquely ergodic if $T$ has exactly one invariant
probability. A probability measure $\mu^{\ast}$ is an
attracting probability for $T$ if, for any
$\mu\in\mathcal{P}(\mathcal{X})$, the sequence
$\{T^{t}\mu\}_{t\in\mathbb{T}_{+}}$ converges weakly to
$\mu^{\ast}$. In other words,
\begin{equation}
\label{WC2} \lim_{t\rightarrow\infty}<f,T^{t}\mu>~=~<f,\mu^{\ast}>\:\:
\forall \,\, f\in C_{b}(\mathcal{X}), \:\mu\in\mathcal{P}(\mathcal{X})
\end{equation}
 It follows that, if $T$ has an
attracting probability $\mu^{\ast}$, then $T$ is uniquely ergodic
(\cite{Zaharopol}.)

\section{Problem Formulation}
\label{prob_form}

\subsection{System Model}
\label{sys_model_label}

We review the model of Kalman filtering with intermittent
observations in~\cite{Bruno}. Let
%Consider the linear discrete-time dynamical system:
\begin{equation}
\label{sys_model} \mathbf{x}_{t+1}=A\mathbf{x}_{t}+\mathbf{w}_{t}
\end{equation}
\begin{equation}
\label{sys_model1} \mathbf{y}_{t}=C\mathbf{x}_{t}+\mathbf{v}_{t}
\end{equation}
Here $\mathbf{x}_{t}\in\mathbb{R}^{N}$ is the signal (state)
vector, $\mathbf{y}_{t}\in\mathbb{R}^{M}$ is the observation
vector, $\mathbf{w}_{t}\in\mathbb{R}^{N}$ and
$\mathbf{v}_{t}\in\mathbb{R}^{M}$ are Gaussian random vectors with
zero mean and covariance matrices $Q\succeq 0$ and $R\gg 0$,
respectively. The sequences
$\{\mathbf{w}_{t}\}_{t\in\mathbb{T}_{+}}$ and
$\{\mathbf{v}_{t}\}_{t\in\mathbb{T}_{+}}$ are uncorrelated and
mutually independent. Also, assume that the initial state
$\mathbf{x}_{0}$ is a zero-mean Gaussian vector with covariance
$P_{0}$. The m.m.s.e.~predictor $\widehat{\mathbf{x}}_{t|t-1}$ of
the signal vector $\mathbf{x}_{t}$ given the observations
$\{\mathbf{y}_{s}\}_{0\leq s< t}$ is the conditional mean. It is
recursively implemented by the Kalman filter.
 The sequence of conditional prediction error
covariances, $\left\{P_{t}\right\}_{t\in\mathbb{T}_{+}}$, is then given by
\begin{eqnarray}
\label{sys_model3}
P_{t}&=&\mathbb{E}\left[\left(\mathbf{x}_{t}-\widehat{\mathbf{x}}_{t|t-1}\right)
\left(\mathbf{x}_{t}-\widehat{\mathbf{x}}_{t|t-1}\right)^{T}\left|\right.\{\mathbf{y}(s)\}_{0\leq
s<t}\right]
\\
\label{sys_model2}
P_{t+1}&=&AP_{t}A^{T}+Q-AP_{t}C^{T}\left(CP_{t}C^{T}+R\right)^{-1}CP_{t}A^{T}
%\end{equation}
%where
%\begin{equation}
\end{eqnarray}
%and
%\begin{equation}
%\label{sys_model4}
%\widehat{\mathbf{x}}_{t|t-1}=\mathbb{E}\left[\mathbf{x}_{t}\left|\right.\{\mathbf{y}(s)\}_{0\leq
%s<t}\right]
%\end{equation}
Under the hypothesis of stabilizability of the pair $(A,Q)$ and
detectability of the pair $(A,C)$, the deterministic sequence
$\left\{P_{t}\right\}_{t\in\mathbb{T}_{+}}$ converges to a unique
value $P^{\ast}$ (which is a fixed point of the algebraic Riccati
equation~(\ref{sys_model2})) from any initial condition $P_{0}$.

This corresponds to the classical perfect
observation scenario, where the estimator has complete knowledge
of the observation packet $\mathbf{y}_{t}$ at every time $t$. With intermittent observations, the observation packets are
dropped randomly (across the communication channel to the
estimator), and the estimator receives observations at random
times. We study the intermittent observation model considered
in~\cite{Bruno}, where the channel randomness is modeled by a
sequence $\left\{\gamma_{t}\right\}_{t\in\mathbb{T}_{+}}$ of i.i.d.~Bernoulli
random variables with mean $\overline{\gamma}$ (note,
$\overline{\gamma}$ then denotes the arrival probability.) Here,
$\gamma_{t}=1$ corresponds to the arrival of the observation packet
$\mathbf{y}_{t}$ at time $t$ to the estimator, whereas a packet
dropout corresponds to $\gamma_{t}=0$. Denote by
$\widetilde{\mathbf{y}}_{t}$ the pair
\begin{equation}
\label{sys_model5}
\widetilde{\mathbf{y}}_{t}=\left(\mathbf{y}_{t}\mathbb{I}_{(\gamma_{t}=1)},\gamma_{t}\right)
\end{equation}
Under the TCP packet acknowledgement protocol
in~\cite{Bruno} (the estimator knows at each time whether the
observation packet arrived or not), the m.m.s.e.~predictor of the
signal is given by:
\begin{equation}
\label{sys_model6}
\widehat{\mathbf{x}}_{t|t-1}=\mathbb{E}\left[\mathbf{x}_{t}\left|\right.
\left\{\widetilde{\mathbf{y}}_{s}\right\}_{0\leq
s< t}\right]
\end{equation}
A modified form of the Kalman filter giving a recursive
implementation of the estimator in eqn.~(\ref{sys_model6}) is
in~\cite{Bruno}. The sequence of
conditional prediction error covariance matrices,
$\left\{P_{t}\right\}_{t\in\mathbb{T}_{+}}$, is updated according to the
following random algebraic Riccati equation (RRE):
\begin{eqnarray}
%\end{equation}
%\begin{equation}
\label{sys_model8}
P_{t}&=&\mathbb{E}\left[\left(\mathbf{x}_{t}-\widehat{\mathbf{x}}_{t|t-1}\right)
\left(\mathbf{x}_{t}-\widehat{\mathbf{x}}_{t|t-1}\right)^{T}\left|\right.\left\{\widetilde{\mathbf{y}}(s)\right\}_{0\leq
s<t}\right]\\
\label{sys_model7}
P_{t+1}&=&AP_{t}A^{T}+Q-\gamma_{t}AP_{t}C^{T}\left(CP_{t}C^{T}+R\right)^{-1}CP_{t}A^{T}
\end{eqnarray}
%and
%\begin{equation}
%\label{sys_model4b}
%\widehat{\mathbf{x}}_{t|t-1}=\mathbb{E}\left[\mathbf{x}_{t}\left|\right.\left\{\widetilde{\mathbf{y}}(s)\right\}_{0\leq
%s<t}\right]
%\end{equation}
Unlike the classical case, the sequence
$\left\{P_{t}\right\}_{t\in\mathbb{T}_{+}}$ is now random (because of its
dependence on the random sequence
$\left\{\gamma_{t}\right\}_{t\in\mathbb{T}_{+}}$.) Thus, for each $t$,
$P_{t}$ is a random element of $S^{N}_{+}$, and we denote by
$\mathbb{\mu}_{t}^{\overline{\gamma},P_{0}}$ its distribution (the
measure it induces on $S^{N}_{+}$.) The superscripts
$\overline{\gamma}$, $P_{0}$ emphasize the dependence of
$\mathbb{\mu}_{t}^{\overline{\gamma},P_{0}}$ on the packet arrival
probability and the initial condition.

In the subsequent sections, we analyze the random sequence
$\left\{P_{t}\right\}_{t\in\mathbb{T}_{+}}$ governed by
the RRE, eqn.~(\ref{sys_model7}), and establish its
asymptotic properties including the weak convergence of the
corresponding sequence
$\left\{\mathbb{\mu}_{t}^{\overline{\gamma},P_{0}}\right\}_{t\in\mathbb{T}_{+}}$
to a unique invariant distribution
$\mathbb{\mu}^{\overline{\gamma}}$ on $S^{N}_{+}$.

Before that, we set notation. Define the functions,
$f_{i}:\mathbb{S}^{N}_{+}\longmapsto\mathbb{S}^{N}_{+},\,i=0,1$, by
\begin{equation}
\label{def_f0} f_{0}(X)=AXA^{T}+Q
\end{equation}
\begin{equation}
\label{def_f1} f_{1}(X) = AXA^{T}+Q-
AXC^{T}\left(CXC^{T}+R\right)^{-1}CXA^{T}
\end{equation}
Also, define
$f:\{0,1\}\times\mathbb{S}^{N}_{+}\longmapsto\mathbb{S}^{N}_{+}$
by
\begin{eqnarray}
\label{def_f}
f(\gamma,X)&=&\mathbb{I}_{0}(\gamma)f_{0}(X)+\mathbb{I}_{1}(\gamma)f_{1}(X)\nonumber
\\ & = & AXA^{T}+Q-
\gamma AXC^{T}\left(CXC^{T}+R\right)^{-1}CXA^{T}
\end{eqnarray}
%The fact that $R\gg 0$ implies:
\begin{proposition}
\label{prop1} For a fixed $\gamma\in\{0,1\}$, if $R\gg 0$, the function
$f(\gamma,X):\mathbb{S}_{+}^{N}\longmapsto\mathbb{S}_{+}^{N}$ is
continuous\footnote{As stated in Subsection IB, we may assume throughout that $\mathbb{S}^{N}$ is equipped with the induced 2-norm. However, as far as topological properties like continuity etc. are considered, the exact norm is not important as long as it makes $\mathbb{S}^{N}$ complete, because all norms on a finite dimensional linear space are equivalent, i.e., generate the same topology.}  in $X$. Also, $f(\cdot)$ is jointly measurable in
$\gamma,X$.
\end{proposition}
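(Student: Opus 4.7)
The plan is to handle the two claims (continuity in $X$ for fixed $\gamma$, and joint measurability) separately, since the first is essentially an exercise in continuity of matrix operations and the second follows from the first because $\{0,1\}$ has a trivial topology.

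For continuity, I would split on the two values of $\gamma$. When $\gamma=0$, we have $f(0,X)=AXA^{T}+Q$, which is affine in $X$ (linear plus constant with respect to the Banach space structure on $\mathbb{S}^{N}$) and hence continuous. When $\gamma=1$, we have $f(1,X)=f_{1}(X)$, and the only non-obvious piece is the middle factor $(CXC^{T}+R)^{-1}$. First I would observe that for $X\in\mathbb{S}_{+}^{N}$ we have $CXC^{T}\succeq 0$, and since $R\gg 0$ by hypothesis, $CXC^{T}+R\gg 0$ for every $X\in\mathbb{S}_{+}^{N}$; in particular the matrix is invertible, so the expression $(CXC^{T}+R)^{-1}$ is well-defined. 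Then I would use the standard fact that matrix inversion is continuous on the open set of invertible matrices (either from the explicit formula in terms of cofactors, or from the Neumann series argument showing that small perturbations of an invertible matrix remain invertible and the inverse varies continuously). The maps $X\mapsto AXA^{T}+Q$, $X\mapsto CXC^{T}+R$, $X\mapsto AXC^{T}$, and $X\mapsto CXA^{T}$ are all affine and thus continuous, and composing with inversion and with matrix multiplication (which is bilinear and continuous on finite-dimensional spaces) yields continuity of $f_{1}$.

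For joint measurability, I would note that $\{0,1\}$ endowed with its power-set $\sigma$-algebra makes every function on $\{0,1\}$ trivially measurable, and the product $\sigma$-algebra on $\{0,1\}\times\mathbb{S}_{+}^{N}$ coincides with the natural Borel structure. Using the decomposition
\begin{equation*}
f(\gamma,X)=\mathbb{I}_{0}(\gamma)\,f_{0}(X)+\mathbb{I}_{1}(\gamma)\,f_{1}(X),
\end{equation*}
the indicator factors $\mathbb{I}_{0},\mathbb{I}_{1}$ are measurable in $\gamma$, while $f_{0},f_{1}$ are Borel measurable in $X$ (being continuous by the first part). Since sums and products of jointly measurable scalar-valued functions are jointly measurable, and the underlying vector space $\mathbb{S}^{N}$ is finite dimensional (so Borel measurability reduces to entry-wise Borel measurability), the joint measurability of $f$ follows.

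The mildly technical point is the continuity of the matrix inversion step at a boundary point $X\in\mathbb{S}_{+}^{N}\setminus\mathbb{S}_{++}^{N}$, but this is handled uniformly because the positive definiteness is contributed by $R$ rather than by $X$: the smallest eigenvalue of $CXC^{T}+R$ stays bounded below by $\lambda_{\min}(R)>0$ for all $X\succeq 0$, so the inversion map is in fact Lipschitz on a neighborhood of every such $X$ in $\mathbb{S}_{+}^{N}$. Once this uniform bound is noted, there is no real obstacle and the remainder of the argument is routine composition of continuous maps and standard measurability facts.
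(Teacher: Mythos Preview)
Your argument is correct. The paper does not actually supply a proof of this proposition; it is stated as a standard fact and used immediately thereafter. Your write-up provides exactly the routine details one would expect: continuity of $f_{0}$ as an affine map, continuity of $f_{1}$ via the well-definedness of $(CXC^{T}+R)^{-1}$ (guaranteed by $R\gg 0$) together with continuity of matrix inversion on the invertible matrices, and joint measurability from the finite decomposition $f(\gamma,X)=\mathbb{I}_{0}(\gamma)f_{0}(X)+\mathbb{I}_{1}(\gamma)f_{1}(X)$ with each summand a product of a $\gamma$-measurable indicator and an $X$-continuous (hence Borel) map. Your remark that $\lambda_{\min}(CXC^{T}+R)\geq\lambda_{\min}(R)>0$ uniformly over $X\succeq 0$ is a nice way to dispose of any worry about boundary points of $\mathbb{S}_{+}^{N}$.
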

For a fixed $\overline{\gamma}$, define the transition probability
operator $\mathbb{Q}^{\overline{\gamma}}:\mathbb{S}^{N}_{+}
\times\mathcal{B}(\mathbb{S}^{N}_{+})\longmapsto
[0,1]$ on the locally compact separable metric space
$\mathbb{S}^{N}_{+}$ by
\begin{equation}
\label{def_trQ}
\mathbb{Q}^{\overline{\gamma}}(X,B)=(1-\overline{\gamma})\mathbb{I}_{B}
\left(f_{0}(X)\right)+\overline{\gamma}\mathbb{I}_{B}\left(f_{1}(X)\right),\,\,\forall
X\in\mathbb{S}^{N}_{+},B\in\mathcal{B}\left(\mathbb{S}^{N}_{+}\right)
\end{equation}
Now, consider the canonical path space of the sequence
$\left\{P_{t}\right\}_{t\in\mathbb{T}_{+}}$,
\begin{equation}
\label{def_canomega}
\Omega^{c}=\times_{t=1}^{\infty}\mathbb{S}^{N}_{+}
\end{equation}
and $\mathcal{F}^{c}$ be the corresponding product
$\sigma$-algebra on $\Omega^{c}$. For fixed $\overline{\gamma}$ and
$P_{0}$, denote $\mathbb{P}^{\overline{\gamma},P_{0}}$ to be the
probability measure induced on
$\left(\Omega^{c},\mathcal{F}^{c}\right)$ by
$\left\{P_{t}\right\}_{t\in\mathbb{T}_{+}}$. Then, in the
sense of distribution induced on path space, the RRE generates a
Markov process $\left\{P_{t}\right\}_{t\in\mathbb{T}_{+}}$ on
$\left(\Omega^{c},\mathcal{F}^{c},\mathbb{P}^{\overline{\gamma},P_{0}}\right)$,
such that
\begin{equation}
\label{canomega1}
\mathbb{P}^{\overline{\gamma},P_{0}}\left(P_{t+1}\in
B\left|\right.P_{t}=X\right)=Q^{\overline{\gamma}}(X,B)
\end{equation}
We denote the expectation operator associated with
$\mathbb{P}^{\overline{\gamma},P_{0}}$ by
$\mathbb{E}^{\overline{\gamma},P_{0}}$. For a fixed
$\overline{\gamma}$, the family of measures
$\left\{\mathbb{P}^{\overline{\gamma},P_{0}}\right\}_{P_{0}\in\mathbb{S}^{N}_{+}}$
on $\left(\Omega^{c},\mathcal{F}^{c}\right)$ is called a Markov
family.

Let $B\left(\mathbb{S}_{+}^{N}\right)$ be the Banach space of real-valued bounded functions on $\mathbb{S}_{+}^{N}$.
For fixed $\overline{\gamma}$, define:
{\small
\begin{eqnarray}
\label{def_Lgamma}
L^{\overline{\gamma}}:B\left(\mathbb{S}_{+}^{N}\right)\longmapsto
B\left(\mathbb{S}^{N}_{+}\right):\:\:
\left(L^{\overline{\gamma}}g\right)(X)&=&\int_{\mathbb{S}_{+}^{N}}g(Y)
\mathbb{Q}^{\overline{\gamma}}(X,dY),\:\forall
g\in B\left(\mathbb{S}_{+}^{N}\right),\,X\in\mathbb{S}_{+}^{N}\\
%\end{equation}
%\begin{equation}
\label{def_Tgamma}
T^{\overline{\gamma}}:\mathcal{M}\left(\mathbb{S}_{+}^{N}\right)
\longmapsto\mathcal{M}\left(\mathbb{S}_{+}^{N}\right):\:\:
\left(T^{\overline{\gamma}}\mu\right)(B)&=&\int_{\mathbb{S}^{N}_{+}}\mathbb{Q}^{\overline{\gamma}}(Y,B)\mu(dY),\:\forall
\mu\in\mathcal{M}\left(\mathbb{S}^{N}_{+}\right),\,B\in\mathcal{B}\left(\mathbb{S}^{N}_{+}\right)
\end{eqnarray}
}
We then have the following proposition (for a proof see the Appendix):
\begin{proposition}
\label{prop_MarkFell} For every $\overline{\gamma}$,
$\left(L^{\overline{\gamma}},T^{\overline{\gamma}}\right)$ is a Markov-Feller
pair on $\mathbb{S}^{N}_{+}$.
\end{proposition}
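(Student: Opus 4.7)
The plan is to verify directly the three items in the definition of a Markov-Feller pair: (i) $L^{\overline{\gamma}}$ sends $C_{b}(\mathbb{S}^{N}_{+})$ into itself, (ii) $T^{\overline{\gamma}}$ is a Markov operator on $\mathcal{M}(\mathbb{S}^{N}_{+})$, and (iii) the duality relation~(\ref{MarFel3}) holds. The substance of the argument is concentrated in~(i); items~(ii) and~(iii) reduce to Fubini-type bookkeeping with the explicit two-atom transition kernel $\mathbb{Q}^{\overline{\gamma}}$.

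First I would unfold the definitions. Since $\mathbb{Q}^{\overline{\gamma}}(X,\cdot)$ is the convex combination of two point masses at $f_{0}(X)$ and $f_{1}(X)$, formula~(\ref{def_Lgamma}) collapses to
\begin{equation*}
(L^{\overline{\gamma}}g)(X)=(1-\overline{\gamma})\,g(f_{0}(X))+\overline{\gamma}\,g(f_{1}(X)), \qquad X\in\mathbb{S}^{N}_{+}.
\end{equation*}
Boundedness is immediate, and if $g\in C_{b}(\mathbb{S}^{N}_{+})$ then, because $f_{0}$ and $f_{1}$ are continuous on $\mathbb{S}^{N}_{+}$ by Proposition~\ref{prop1}, the compositions $g\circ f_{0}$ and $g\circ f_{1}$ are continuous; hence $L^{\overline{\gamma}}g\in C_{b}(\mathbb{S}^{N}_{+})$. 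This is the one place where a non-trivial fact about the dynamics is invoked, and I expect it to be the main obstacle only in the sense that everything else is mechanical: without continuity of the Riccati maps the entire construction would fall apart, so the proof really rests on Proposition~\ref{prop1}.

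Next I would verify that $T^{\overline{\gamma}}$ is a Markov operator. Linearity and positivity follow from~(\ref{def_Tgamma}) because $\mathbb{Q}^{\overline{\gamma}}(Y,B)\in[0,1]$ for every $Y,B$. For any $\mu\geq 0$,
\begin{equation*}
\|T^{\overline{\gamma}}\mu\|=(T^{\overline{\gamma}}\mu)(\mathbb{S}^{N}_{+})=\int_{\mathbb{S}^{N}_{+}}\mathbb{Q}^{\overline{\gamma}}(Y,\mathbb{S}^{N}_{+})\,\mu(dY)=\int_{\mathbb{S}^{N}_{+}}1\,\mu(dY)=\|\mu\|,
\end{equation*}
which is exactly the Markov property; the contraction bound $\|T^{\overline{\gamma}}\|\leq 1$ follows from this on positive measures and from a Jordan decomposition on signed measures.

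Finally, for the duality~(\ref{MarFel3}) I would first check it on indicators $f=\mathbb{I}_{B}$, where both sides reduce to $\int\mathbb{Q}^{\overline{\gamma}}(X,B)\,\mu(dX)$ by~(\ref{def_Tgamma}) and the definition of $L^{\overline{\gamma}}$. Extension to simple functions is by linearity, to nonnegative bounded measurable functions by monotone convergence, and to arbitrary $f\in C_{b}(\mathbb{S}^{N}_{+})\subset B(\mathbb{S}^{N}_{+})$ by splitting $f=f^{+}-f^{-}$; the joint measurability needed to apply Fubini is provided by the second half of Proposition~\ref{prop1}. Combining (i)--(iii) yields that $(L^{\overline{\gamma}},T^{\overline{\gamma}})$ is a Markov-Feller pair, as asserted.
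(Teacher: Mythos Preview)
Your proposal is correct, and both your argument and the paper's rest on the same underlying fact---the continuity of $f_{0}$ and $f_{1}$ supplied by Proposition~\ref{prop1}---but the routes to the Feller property differ. You exploit the explicit two-atom structure of $\mathbb{Q}^{\overline{\gamma}}$ to write $(L^{\overline{\gamma}}g)(X)=(1-\overline{\gamma})g(f_{0}(X))+\overline{\gamma}g(f_{1}(X))$ and conclude continuity of $L^{\overline{\gamma}}g$ directly as a sum of compositions of continuous functions. The paper instead invokes an abstract equivalent characterization of the weak-Feller property (from~\cite{LermaLasserre}): $\mathbb{Q}^{\overline{\gamma}}$ is weak-Feller iff $\liminf_{n}\mathbb{Q}^{\overline{\gamma}}(Y_{n},\mathcal{O})\geq\mathbb{Q}^{\overline{\gamma}}(Y,\mathcal{O})$ for every open $\mathcal{O}$ and every $Y_{n}\to Y$, and then verifies this portmanteau-type inequality term by term using continuity of $f_{0},f_{1}$. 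Your approach is shorter and more transparent for this particular kernel; the paper's has the minor advantage of fitting into a general framework that would apply unchanged to more complicated transition probabilities. On items~(ii) and~(iii), the paper simply cites them as standard consequences of $\mathbb{Q}^{\overline{\gamma}}$ being a transition probability, whereas you spell them out; both treatments are fine.
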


Finally, we note, that
\begin{equation}
\label{prop_MarkFell1}
\mathbb{\mu}_{t}^{\overline{\gamma},P_{0}}=\left(T^{\overline{\gamma}}\right)^{t}\delta_{P_{0}},\: \forall
t
\end{equation}
where $\delta_{P_{0}}$ denotes the Dirac probability measure
concentrated at $P_{0}$.

\subsection{Stability notions and critical probabilities}
\label{def_crits}
\label{def_crit}

%We set some notation here. When the packet arrival probability is
%$\overline{\gamma}$ and $P_{0}$ is the initial condition (assumed
%to be non-random without loss of generality), we denote the
%probability and expectation operators by
%$\mathbb{P}^{\overline{\gamma},P_{0}}\left(\cdot)\right)$ and
%$\mathbb{E}^{\overline{\gamma},P_{0}}\left[\cdot\right]$
%respectively.

There are various stability notions for the random
sequence $\left\{P_{t}\right\}_{t\in\mathbb{T}_{+}}$. In this work, we
consider two: the first is
stochastic boundedness (uniform boundedness in probability); and the second is the more stronger  bounded
in mean stability.
\begin{definition}[Stochastic boundedness]
\label{stoch_bound_def} Consider fixed $\overline{\gamma}$ and
$P_{0}\in\mathbb{S}^{N}_{+}$. The sequence
$\left\{P_{t}\right\}_{t\in\mathbb{T}_{+}}$ is stochastically bounded
(s.b.) if
\begin{equation}
\label{stoch_bound_eqn}
\lim_{N\rightarrow\infty}\sup_{t\in\mathbb{T}_{+}}\mathbb{P}^{\overline{\gamma},
P_{0}}\left(\left\|P_{t}\right\|>N\right)=0
\end{equation}
The corresponding sequence of measures
$\left\{\mathbb{\mu}_{t}^{\overline{\gamma},P_{0}}\right\}_{t\in\mathbb{T}_{+}}$
is said to be tight (see~\cite{Jacod-Shiryaev}.)
\end{definition}
\begin{definition}[Boundedness in mean]
\label{bound_inmean_def} Consider fixed $\overline{\gamma}$ and
$P_{0}\in\mathbb{S}^{N}_{+}$. The sequence
$\left\{P_{t}\right\}_{t\in\mathbb{T}_{+}}$ is bounded in mean
(b.i.m.) if there exists $M^{\overline{\gamma},P_{0}}$, such that,
\begin{equation}
\label{bim_def}
\sup_{t\in\mathbb{T}_{+}}\mathbb{E}^{\overline{\gamma},P_{0}}\left[P_{t}\right]\preceq
M^{\overline{\gamma},P_{0}}
\end{equation}
(Note, that the supremum above is taken w.r.t.~the partial order
in $\mathbb{S}^{n}$.)
\end{definition}
We note here, that the above stability notions, applies to all systems irrespective of properties like stabilizability, detectability. Stochastic boundedness provides a trade-off between the
permissible estimation error margin and performance guarantee
uniformly over all time $t$. Stochastic boundedness is weaker than
bounded in mean stability, as indicated by the following
proposition (proof in the Appendix.)
\begin{proposition}
\label{imp_sb} Consider fixed $\overline{\gamma}$ and
$P_{0}\in\mathbb{S}^{N}_{+}$. If the sequence
$\left\{P_{t}\right\}_{t\in\mathbb{T}_{+}}$ is b.i.m., then it is s.b.
\end{proposition}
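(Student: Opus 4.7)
The plan is to establish stochastic boundedness via a standard Markov-inequality argument, exploiting the fact that the spectral norm on the positive semidefinite cone is dominated by the trace, which is order-preserving.

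First, I would record two elementary facts about $\mathbb{S}_{+}^{N}$ that do the heavy lifting. For any $P\succeq 0$ the induced $2$-norm satisfies $\|P\|\leq \operatorname{tr}(P)$, since the largest eigenvalue of a PSD matrix is bounded by the sum of its (non-negative) eigenvalues. Moreover, the trace is monotone on the PSD cone: if $X\preceq Y$ then $Y-X\succeq 0$, so $\operatorname{tr}(Y-X)\geq 0$, i.e., $\operatorname{tr}(X)\leq \operatorname{tr}(Y)$. Combining these with Fubini/linearity of expectation gives, for every $t\in\mathbb{T}_{+}$,
\begin{equation*}
\mathbb{E}^{\overline{\gamma},P_{0}}\bigl[\|P_{t}\|\bigr]
~\leq~\mathbb{E}^{\overline{\gamma},P_{0}}\bigl[\operatorname{tr}(P_{t})\bigr]
~=~\operatorname{tr}\bigl(\mathbb{E}^{\overline{\gamma},P_{0}}[P_{t}]\bigr)
~\leq~\operatorname{tr}\bigl(M^{\overline{\gamma},P_{0}}\bigr),
\end{equation*}
where the last inequality uses the b.i.m.~hypothesis together with the monotonicity of the trace. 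Note that all expectations are well-defined and finite, since $P_{t}\succeq 0$ a.s.~and the component-wise expectation $\mathbb{E}^{\overline{\gamma},P_{0}}[P_{t}]$ is in $\mathbb{S}^{N}_{+}$ and dominated (entrywise on the diagonal) by $M^{\overline{\gamma},P_{0}}$.

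Next, I would apply Markov's inequality to the non-negative scalar random variable $\|P_{t}\|$. For any $N>0$,
\begin{equation*}
\mathbb{P}^{\overline{\gamma},P_{0}}\bigl(\|P_{t}\|>N\bigr)
~\leq~\frac{\mathbb{E}^{\overline{\gamma},P_{0}}\bigl[\|P_{t}\|\bigr]}{N}
~\leq~\frac{\operatorname{tr}\bigl(M^{\overline{\gamma},P_{0}}\bigr)}{N}.
\end{equation*}
Since the right-hand side is independent of $t$, taking the supremum over $t\in\mathbb{T}_{+}$ preserves the bound, and letting $N\to\infty$ yields
\begin{equation*}
\lim_{N\to\infty}\sup_{t\in\mathbb{T}_{+}}\mathbb{P}^{\overline{\gamma},P_{0}}\bigl(\|P_{t}\|>N\bigr)=0,
\end{equation*}
which is precisely the stochastic boundedness condition~(\ref{stoch_bound_eqn}).

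There is really no serious obstacle here: the only subtlety worth a line of care is the passage from the matrix-valued inequality $\mathbb{E}[P_{t}]\preceq M^{\overline{\gamma},P_{0}}$ to a scalar inequality that Markov can act upon, and that is handled cleanly by trace-monotonicity on the PSD cone plus the norm-trace comparison. I would write the argument in three or four lines following the two displays above.
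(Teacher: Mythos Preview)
Your proposal is correct and essentially identical to the paper's own proof: both bound $\mathbb{E}[\|P_t\|]$ by $\operatorname{tr}(M^{\overline{\gamma},P_0})$ via $\|P\|\le\operatorname{tr}(P)$ and trace-monotonicity on the PSD cone, then apply Markov's (Chebyshev's) inequality to get a $t$-uniform bound that vanishes as $N\to\infty$.
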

For most interesting cases, the stability of the
sequence $\left\{P_{t}\right\}_{t\in\mathbb{T}_{+}}$ depends on
$\overline{\gamma}$. Indeed, as exhibited in~\cite{Bruno}, there
exist critical probabilities marking sharp transitions in the
stability behavior. For the above stability notions,
consider the following critical probabilities:
\begin{eqnarray}
\label{def_critsb}
\overline{\gamma}^{\mbox{\scriptsize sb}}&=&\inf\left\{\overline{\gamma}\in
[0,1]:\:\left\{P_{t}\right\}_{t\in\mathbb{T}_{+}} \,\mbox{is s.b.},\,\forall
P_{0}\in\mathbb{S}^{N}_{+}\right\}\\
%\end{equation}
%\begin{equation}
\label{def_critbim}
\overline{\gamma}^{\mbox{\scriptsize bim}}&=&\inf\left\{\overline{\gamma}\in
[0,1]:\:\left\{P_{t}\right\}_{t\in\mathbb{T}_{+}}\;\mbox{is b.i.m.},\,\forall
P_{0}\in\mathbb{S}^{N}_{+}\right\}
\end{eqnarray}
Thus, $\overline{\gamma}^{\mbox{\scriptsize sb}}$ marks a transition in stochastic
boundedness of the sequence $\left\{P_{t}\right\}_{t\in\mathbb{T}_{+}}$, in
the sense that, if the operating\footnote{Note for
stochastic boundedness, one has to operate strictly above
$\overline{\gamma}^{\mbox{\scriptsize sb}}$, because the infimum in
eqn.~(\ref{def_critsb}) may not be attainable.}
$\overline{\gamma}>\overline{\gamma}^{\mbox{\scriptsize sb}}$, the sequence
$\left\{P_{t}\right\}_{t\in\mathbb{T}_{+}}$ is s.b.~for all initial
conditions $P_{0}$, whereas it explodes if operated below
$\overline{\gamma}^{\mbox{\scriptsize sb}}$. A similar interpretation holds for
$\overline{\gamma}^{\mbox{\scriptsize bim}}$.

In~\cite{Bruno}, upper and lower bounds for
$\overline{\gamma}^{\mbox{\scriptsize bim}}$ were obtained. Precisely, the following was shown:
\begin{result}[\cite{Bruno}]
\label{bruno_res} For $\left(A,Q^{1/2}\right)$ stabilizable, $(A,C)$
detectable, and $A$ unstable, then $\exists\, \overline{\gamma}^{\mbox{\scriptsize bim}}\in
[0,1)$, s.t.
\begin{eqnarray}
\label{bruno_res1}
&&
\lim_{t\rightarrow\infty}\mathbb{E}^{\overline{\gamma},P_{0}}
\left[P_{t}\right]=\infty,\:\mbox{for}\,0\leq\overline{\gamma}\leq\overline{\gamma}^{\mbox{\scriptsize bim}}\,\,\mbox{and}\,\,\exists\,\,
P_{0}\succeq 0
\\
%\end{equation}
%\begin{equation}
\label{bruno_res2}
&&
\left\{P_{t}\right\}_{t\in\mathbb{T}_{+}}
\,\mbox{is b.i.m.},\:\mbox{for}\,\overline{\gamma}^{\mbox{\scriptsize bim}}<\overline{\gamma}\leq
1\,\,\mbox{and}\,\,\forall\,\, P_{0}\succeq 0
\end{eqnarray}
Also, $\overline{\gamma}^{\mbox{\scriptsize bim}}_{l}\leq \overline{\gamma}^{\mbox{\scriptsize bim}} \leq \overline{\gamma}^{\mbox{\scriptsize bim}}_{u}$ where
\begin{eqnarray}
\label{bruno_res3}
\overline{\gamma}^{\mbox{\scriptsize bim}}_{l}&=&1-\frac{1}{\alpha^{2}}
\\
%\end{equation}
%and
%\begin{equation}
\label{bruno_res4}
\overline{\gamma}^{\mbox{\scriptsize bim}}_{u}&=&\inf\left\{\overline{\gamma}\in
[0,1]:\:\exists
(\widehat{K},\widehat{X}),\,\,\mbox{s.t.}\,\,\widehat{X}\gg\phi^{\overline{\gamma}}\,\,\left(\widehat{K},\widehat{X}\right)\right\}
\end{eqnarray}
where $\alpha$ is the absolute value of the largest eigenvalue of
$A$, and the operator $\phi^{\overline{\gamma}}$ is
\begin{equation}
\label{bruno_res5}
\phi^{\overline{\gamma}}\left(K,X\right)=
\left(1-\overline{\gamma}\right)\left(AXA^{T}+Q\right)+\overline{\gamma}\left(FXF^{T}+V\right)
\end{equation}
and $F=A+KC$, $V=Q+KRK^{T}$.
\end{result}
The above result provides  computable upper and lower
bounds on the critical probability $\overline{\gamma}^{\mbox{\scriptsize bim}}$.

The following proposition relates the two critical probabilities:
\begin{proposition}
\label{prop_crel} For
$\left(A,Q^{1/2}\right)$ stabilizable, $(A,C)$ detectable, and $A$
unstable, then
\begin{itemize}
\item[i)] For a general system
$\overline{\gamma}^{\mbox{\scriptsize sb}}\leq\overline{\gamma}^{\mbox{\scriptsize bim}}$.

\item[ii)] If, in addition,  $\left(A,Q^{1/2}\right)$ is stabilizable and $(A,C)$ is detectable, then $\overline{\gamma}^{\mbox{\scriptsize sb}}=0$.
\end{itemize}
\end{proposition}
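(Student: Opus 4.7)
\textbf{Part (i)} follows immediately from Proposition~\ref{imp_sb}. For any $\overline{\gamma}>\overline{\gamma}^{\mbox{\scriptsize bim}}$, the sequence $\{P_t\}_{t\in\mathbb{T}_{+}}$ is bounded in mean for every $P_0\in\mathbb{S}^{N}_{+}$ by~\eqref{def_critbim}, and hence stochastically bounded for every $P_0$ by Proposition~\ref{imp_sb}. Every such $\overline{\gamma}$ therefore lies in the set defining $\overline{\gamma}^{\mbox{\scriptsize sb}}$ in~\eqref{def_critsb}, and taking the infimum yields $\overline{\gamma}^{\mbox{\scriptsize sb}}\leq\overline{\gamma}^{\mbox{\scriptsize bim}}$.

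\textbf{Part (ii) -- strategy.} I would fix arbitrary $\overline{\gamma}\in(0,1]$ and $P_0\in\mathbb{S}^{N}_{+}$ and establish stochastic boundedness of $\{P_t\}$ by combining the deterministic contractive structure of the Riccati operator under stabilizability and detectability with the burst statistics of the Bernoulli arrival process. The Kalman decomposition of $(A,C)$ rewrites $A$ in block-triangular form with observable subblock $(A_{11},C_1)$ and stable unobservable subblock $A_{22}$ satisfying $\rho(A_{22})<1$. Two deterministic facts then follow: (a) a direct block computation shows that $[f(\gamma,X)]_{22}\preceq A_{22}X_{22}A_{22}^{T}+Q_{22}$ for both $\gamma\in\{0,1\}$ (an observation can only shrink the posterior), so iterating and using $\rho(A_{22})<1$ bounds $[P_t]_{22}$ uniformly in $t$ by a constant depending only on $P_0$; (b) by observability of $(A_{11},C_1)$, the information-form identity $f_1(X)=A(X^{-1}+C^{T}R^{-1}C)^{-1}A^{T}+Q$ (for $X\gg 0$) together with the accumulated observability-Gramian contribution implies the existence of an integer $k_0$ and a matrix $M_0\in\mathbb{S}^{N}_{+}$ such that $f_1^{k_0}(X)\preceq M_0$ uniformly over $X$ in the reachable set (i.e., with $X_{22}$ bounded by the constant from (a)).

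Given (a)--(b), whenever a burst of $k_0$ consecutive arrivals ends at some time $s\leq t$, one has $P_s\preceq M_0$, and by monotonicity of $f_0$ and $f_1$ the pathwise bound $P_t\preceq f_0^{t-s}(M_0)$ holds, giving $\|P_t\|\leq c_1\|A\|^{2(t-s)}$ for a constant $c_1$ depending on $M_0,Q$. For i.i.d.~Bernoulli arrivals with parameter $\overline{\gamma}>0$, the age $\tau_t$ of the most recent $k_0$-burst before time $t$ satisfies the uniform tail bound $\mathbb{P}(\tau_t>L)\leq(1-\overline{\gamma}^{k_0})^{\lfloor L/k_0\rfloor}$ (by partitioning into disjoint length-$k_0$ blocks). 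Combining, for any $\epsilon>0$ there exists $N$ large enough that $\mathbb{P}(\|P_t\|>N)<\epsilon$ uniformly in $t$ (the deterministic initial-condition contribution $\|f_0^{t}(P_0)\|$ for small $t$ is absorbed by enlarging $N$). This is stochastic boundedness, yielding $\overline{\gamma}^{\mbox{\scriptsize sb}}=0$.

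\textbf{Main obstacle.} The hardest step is the uniform Riccati saturation bound (b): the block coupling in the Riccati operator does not decouple exactly across the observable/unobservable split, so the saturation must be shown in the full block-triangular setting. The key inputs are the observability-Gramian lower bound $\sum_{i=0}^{k_0-1}(A_{11}^{i})^{T}C_{1}^{T}R^{-1}C_{1}A_{11}^{i}\gg 0$, the information-form representation of $f_1$, and a Schur-complement control on the off-diagonal block $\|X_{12}\|^{2}\leq\|X_{11}\|\,\|X_{22}\|$. Carefully combining these -- a computation analogous to classical arguments of Anderson--Moore for discrete-time Kalman filters -- yields the required uniform saturation and closes the argument.
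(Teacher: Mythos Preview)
Your Part~(i) is correct and identical to the paper's one-line argument via Proposition~\ref{imp_sb}.

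For Part~(ii), note first that the paper itself does \emph{not} prove the general detectable case: the appendix handles only the special case of invertible $C$, where a single arrival already saturates, $\|f_{1}(X)\|\leq M$ uniformly in $X$ (via the matrix inversion lemma), and the general case is deferred to the follow-up paper~\cite{Riccati-moddev}. Your regeneration-at-bursts scheme is the natural generalization of the paper's single-arrival regeneration argument, and the probabilistic tail bound on the burst age is correct and mirrors the paper's geometric tail on $t-\widetilde{t}$.

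However, your claim (a) is false as stated, and this breaks the deterministic half of the argument. In the standard observability decomposition one has (after a change of basis)
\[
A=\begin{pmatrix} A_{11} & 0 \\ A_{21} & A_{22}\end{pmatrix},\qquad C=(C_{1}\;\;0),
\]
with $(A_{11},C_{1})$ observable and $A_{22}$ stable; the unobservable subspace is $A$-invariant, which forces the zero to sit in the $(1,2)$ block, not the $(2,1)$ block. Consequently, for $\gamma=0$,
\[
[f_{0}(X)]_{22}=[AXA^{T}+Q]_{22}=A_{21}X_{11}A_{21}^{T}+A_{21}X_{12}A_{22}^{T}+A_{22}X_{21}A_{21}^{T}+A_{22}X_{22}A_{22}^{T}+Q_{22},
\]
which is \emph{not} dominated by $A_{22}X_{22}A_{22}^{T}+Q_{22}$: during a long dropout run the observable block $X_{11}$ blows up (since $A_{11}$ inherits the unstable eigenvalues of $A$), and the term $A_{21}X_{11}A_{21}^{T}$ feeds this growth into the $(2,2)$ block. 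Thus $[P_{t}]_{22}$ is not pathwise uniformly bounded, and the restriction ``$X_{22}$ bounded'' in your claim (b) is not available. Your ``main obstacle'' paragraph addresses only the saturation in (b), not this coupling in (a), so the gap is genuine. A correct treatment of the general detectable case has to control the observable and unobservable blocks jointly rather than sequentially; this is precisely what makes the general case nontrivial and why the paper exports it.
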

\begin{proof} The proofs of part [i] and part [ii] under the additional assumption of invertible $C$ are provided in the Appendix. The proof of part [ii] for the general case of stabilizable and detectable systems can be found in the follow-up paper~\cite{Riccati-moddev}.
\end{proof}
From the above it is clear that, in general, any upper bound on
$\overline{\gamma}^{\mbox{\scriptsize bim}}$ is also an upper bound on
$\overline{\gamma}^{\mbox{\scriptsize sb}}$. Part ii) of the above proposition shows that under most reasonable assumptions
$\overline{\gamma}^{\mbox{\scriptsize sb}}=0$. For such systems, operating at any
$\overline{\gamma}>0$ guarantees stochastic boundedness\footnote{We note here that for stable systems, the infimum in the definition of $\overline{\gamma}^{\mbox{\scriptsize{sb}}}$ is attained and the sequence $\{P_{t}\}$ is stochastically bounded for $\overline{\gamma}=0$. However, for unstable systems to achieve stochastic boundedness, it is necessary to operate at $\overline{\gamma}>0$, as $\overline{\gamma}=0$ would imply that no observations arrive in the infinite horizon leading to a.s. unboundedness of the sequence $\{P_{t}\}$.}, whereas
if one needs to remain bounded in mean, one has to operate at
$\overline{\gamma}>1-\frac{1}{\alpha^{2}}$, which can be large if $A$ is highly unstable.This is
important to the system designer, because, if the design criterion
is stochastic boundedness (i.e., boundedness in probability)
rather than boundedness in mean, one may operate at a value of
$\overline{\gamma}$ strictly lower than $\overline{\gamma}^{\mbox{\scriptsize bim}}$.

In the next section, we state and discuss the main results of this paper. Among others, we show that operating
above $\overline{\gamma}^{\mbox{\scriptsize sb}}$ guarantees the existence of a
unique invariant distribution (independent of the initial
condition $P_{0}$) to which the random sequence
$\left\{P_{t}\right\}_{t\in\mathbb{T}_{+}}$ converges weakly.

\section{Main Results: Invariant Distribution}
\label{main_res} %We state the main results here.
 The first result concerns the weak convergence properties of %the sequence
$\left\{P_{t}\right\}_{t\in\mathbb{T}_{+}}$ generated by the RRE.
\begin{theorem}
\label{main_th}Assume: $\left(A,Q^{1/2}\right)$ stabilizable;
$\left(A,C\right)$ detectable;
$Q$ positive definite; fixed %arrival probability
 $\overline{\gamma}$ and
 % $\left\{P_{t}\right\}_{t\in\mathbb{T}_{+}}$ be the sequence of random prediction error covariance matrices generated by the RRE, starting from some initial state
 $P_{0}\in\mathbb{S}_{+}^{N}$. Then:
\begin{itemize}
\item[i)] If $\overline{\gamma}>0$, the sequence $\left\{P_{t}\right\}_{t\in\mathbb{T}_{+}}$ is stochastically bounded and there exists a unique invariant distribution
$\mathbb{\mu}^{\overline{\gamma}}$ s.t.~the sequence
$\left\{P_{t}\right\}_{t\in\mathbb{T}_{+}}$ (or  sequence
$\left\{\mathbb{\mu}_{t}^{\overline{\gamma},P_{0}}\right\}_{t\in\mathbb{T}_{+}}$
of measures) converges weakly to
$\mathbb{\mu}^{\overline{\gamma}}$ from any initial condition
$P_{0}$.

In other words, %if $\overline{\gamma}>\overline{\gamma}^{\mbox{\scriptsize sb}}$,
 the operator $T^{\overline{\gamma}}$ is uniquely ergodic with
attracting probability $\mathbb{\mu}^{\overline{\gamma}}$.
\item[ii)] If, in addition,
$\overline{\gamma}>\overline{\gamma}^{\mbox{\scriptsize bim}}$, the corresponding
unique invariant measure $\mathbb{\mu}^{\overline{\gamma}}$ has
finite mean
\begin{equation}
\label{main_th2}
\int_{\mathbb{S}^{N}_{+}}Y\mathbb{\mu}^{\overline{\gamma}}(dY)<\infty
\end{equation}
\end{itemize}
\end{theorem}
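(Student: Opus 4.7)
The plan is to realize the RRE as the one-step orbit of an order-preserving, strongly sublinear random dynamical system on the solid normal cone $\mathbb{S}^{N}_{+}$, and then invoke the limit-set dichotomy for such RDS (in the spirit of \cite{ArnoldChueshov}) to extract a unique invariant distribution, ruling out the ``escape to infinity'' branch of the dichotomy by stochastic boundedness. Part (ii) will be harvested from part (i) by a routine Fatou argument on top of Result~\ref{bruno_res}.

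For part (i), I would proceed in the following order. First, because the hypotheses include stabilizability of $(A,Q^{1/2})$ and detectability of $(A,C)$, Proposition~\ref{prop_crel}(ii) gives $\overline{\gamma}^{\mbox{\scriptsize sb}}=0$, so any $\overline{\gamma}>0$ forces $\{P_{t}\}$ to be stochastically bounded, equivalently the family $\{\mu^{\overline{\gamma},P_{0}}_{t}\}_{t\in\mathbb{T}_{+}}$ is tight in $\mathcal{P}(\mathbb{S}^{N}_{+})$. Second, I would verify the two structural properties of the random map $f(\gamma,\cdot)$ in \eqref{def_f}: (a) \emph{order preservation}, i.e., $X\preceq Y\Rightarrow f_{i}(X)\preceq f_{i}(Y)$ for $i=0,1$, which for $f_{0}$ is immediate and for $f_{1}$ is a classical fact about the Riccati operator that follows from the information-form identity
\begin{equation*}
f_{1}(X)=A\bigl(X^{-1}+C^{T}R^{-1}C\bigr)^{-1}A^{T}+Q
\end{equation*}
(extended by continuity to singular $X$); and (b) \emph{strong sublinearity}, i.e., for every $\lambda\in(0,1)$ and every $X\succeq 0$,
\begin{equation*}
f_{i}(\lambda X)\gg \lambda f_{i}(X),\qquad i=0,1.
\end{equation*}
For $f_{0}$ this is the one-line computation $f_{0}(\lambda X)-\lambda f_{0}(X)=(1-\lambda)Q\gg 0$, using the hypothesis $Q\gg 0$. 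For $f_{1}$ it combines the well-known concavity of the Riccati operator on $\mathbb{S}^{N}_{+}$ with $f_{1}(0)=Q$: concavity yields $f_{1}(\lambda X)\succeq \lambda f_{1}(X)+(1-\lambda)Q$, and $Q\gg 0$ upgrades $\succeq$ to $\gg$. Third, I would feed these two properties, together with the tightness from step one, into the Arnold--Chueshov limit-set dichotomy developed in Section~\ref{sec:prop_RDS}: order-preserving strongly sublinear RDS admit a dichotomy by which every pull-back orbit either leaves every order interval or converges to a unique invariant random variable; the tight branch of the dichotomy yields a unique invariant law $\mu^{\overline{\gamma}}$ on $\mathbb{S}^{N}_{+}$. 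Fourth, weak convergence from arbitrary $P_{0}$ is obtained by combining uniqueness of $\mu^{\overline{\gamma}}$ with the Markov--Feller property (Proposition~\ref{prop_MarkFell}): every weak subsequential limit of $\{(T^{\overline{\gamma}})^{t}\delta_{P_{0}}\}$ (which exists by tightness) is $T^{\overline{\gamma}}$-invariant, hence equals $\mu^{\overline{\gamma}}$, so the whole sequence converges.

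For part (ii), assume further $\overline{\gamma}>\overline{\gamma}^{\mbox{\scriptsize bim}}$. By Result~\ref{bruno_res} there is a finite $M^{\overline{\gamma},P_{0}}\in\mathbb{S}^{N}_{+}$ with $\mathbb{E}^{\overline{\gamma},P_{0}}[P_{t}]\preceq M^{\overline{\gamma},P_{0}}$ for every $t$. Since $P_{t}\Longrightarrow\mu^{\overline{\gamma}}$ by part (i) and $P_{t}\succeq 0$, applying Fatou's lemma to each bounded continuous test functional of the form $X\mapsto\mathrm{tr}(CX)$ with $C\succeq 0$ (via Skorohod representation, or equivalently to each coordinate $X_{ij}$ through polarization) gives $\int_{\mathbb{S}^{N}_{+}}\mathrm{tr}(CY)\,\mu^{\overline{\gamma}}(dY)\leq \mathrm{tr}(CM^{\overline{\gamma},P_{0}})<\infty$, which is exactly \eqref{main_th2}.

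The genuinely hard step is verifying that the RRE fits precisely into the Arnold--Chueshov framework and that the dichotomy applies with the regularity I need (in particular, identifying the correct notion of ``order-bounded random attractor'' on the solid normal minihedral cone $\mathbb{S}^{N}_{+}$ and matching it to tightness of $\{\mu^{\overline{\gamma},P_{0}}_{t}\}$); the order-preservation and concavity computations above are standard, but the transfer between pull-back convergence of the RDS and weak convergence of the forward Markov semigroup $T^{\overline{\gamma}}$ is where the work lies, and that is what Section~\ref{sec:prop_RDS} and Section~\ref{main_proof} are designed to handle.
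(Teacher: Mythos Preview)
Your first three steps and part~(ii) track the paper closely: the RDS setup, order preservation, strong sublinearity via concavity plus $Q\gg 0$, ruling out the unbounded branch of the Chueshov dichotomy by stochastic boundedness, and the Fatou argument for the mean are all essentially what the paper does (the paper applies Fatou directly to the pull-back orbit, which already converges a.s., but your Skorohod variant is fine).

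The genuine gap is your Step~4. The assertion ``every weak subsequential limit of $\{(T^{\overline{\gamma}})^{t}\delta_{P_{0}}\}$ is $T^{\overline{\gamma}}$-invariant'' is false for Feller chains in general; only subsequential limits of the \emph{Ces\`aro averages} are guaranteed invariant (Krylov--Bogolyubov). A two-point rotation already gives a counterexample: unique invariant measure, tight, Feller, yet $T^{t}\delta_{x}$ oscillates and no subsequential limit is invariant. So uniqueness of $\mu^{\overline{\gamma}}$ together with tightness and the Feller property does \emph{not} by itself force $(T^{\overline{\gamma}})^{t}\delta_{P_{0}}\Rightarrow\mu^{\overline{\gamma}}$.

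The paper closes this gap differently, and this is the step you are missing. Alternative~(b) of the dichotomy only asserts pull-back convergence from initial data $v$ satisfying $0\preceq v(\omega)\preceq\alpha\,u^{\overline{\gamma}}(\omega)$ for some deterministic $\alpha>0$. To bring an arbitrary deterministic $P_{0}$ into this class, the paper proves (Lemma~\ref{eq}) that the almost equilibrium satisfies $u^{\overline{\gamma}}(\omega)\succeq Q$ a.s.; since $Q\gg 0$ by hypothesis, one can pick $\alpha$ with $P_{0}\preceq\alpha Q\preceq\alpha\,u^{\overline{\gamma}}(\omega)$ a.s., modify $P_{0}$ on a null set, and then invoke the dichotomy directly. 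Pull-back a.s.\ convergence then transfers to forward weak convergence via the distributional equivalence of pull-back and forward orbits (Lemma~\ref{orbit_lemma}). That lower bound $u^{\overline{\gamma}}\succeq Q$ is the missing ingredient in your outline; without it, you have no mechanism to pass from ``unique invariant law exists'' to ``every initial condition is attracted.''
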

%
%We discuss the consequences of
 Theorem~\ref{main_th} states that, for stabilizable and detectable systems, if $\overline{\gamma}>0$, the
sequence $\left\{P_{t}\right\}_{t\in\mathbb{T}_{+}}$ converges in
distribution to a unique invariant distribution, irrespective of
the initial condition. In particular, one may operate below
$\overline{\gamma}^{\mbox{\scriptsize bim}}$ and still converge
to a unique invariant distribution. However, operating at
$0<\overline{\gamma}<\overline{\gamma}^{\mbox{\scriptsize bim}}$
may not guarantee that the corresponding invariant distribution
$\mathbb{\mu}^{\overline{\gamma}}$ has finite mean.

We discuss several implications of Theorem~\ref{main_res}. First, we note from the discussion following Proposition~\ref{prop_crel} (especially the footnote) that for stable systems, $\overline{\gamma}=0$ also leads to stochastic boundedness of the sequence $\{P_{t}\}$, which does not hold for unstable systems. Thus for stable systems, the conclusions of Theorem~\ref{main_res} hold not only for $\overline{\gamma}>0$, but alos for $\overline{\gamma}=0$.

Finally, we note that, Theorem~\ref{main_res} as stated above in the context of stabilizable and detectable systems is, in fact, more general. As can be noted from the
proof of Theorem~\ref{main_res} (Subsection~\ref{proof_main_th}), in general, a sufficient condition for the existence and uniqueness of an attracting invariant measure, is stochastic boundedness. Thus, for a general system (for which stabilizability, detectability may not be verified), operating above the critical probability $\overline{\gamma}^{\mbox{\scriptsize{sb}}}$ of stochastic boundedness is sufficient to guarantee weak convergence to a unique invariant distribution. In the case of stabilizable and detectable systems, by Proposition~\ref{prop_crel}, $\overline{\gamma}^{\mbox{\scriptsize{sb}}}=0$ and hence stochastic boundedness is ensured by operating under any $\overline{\gamma}>0$.

The second result explicitly determines the support of the
invariant measure $\mathbb{\mu}^{\overline{\gamma}}$.
%We note this in the following:
\begin{theorem}
\label{supp_inv}Assume: $\left(A,Q^{1/2}\right)$ stabilizable;
$\left(A,C\right)$ detectable;
$Q$ positive definite. Define the set
$\mathcal{S}\subset\mathbb{S}^{N}_{+}$ by
\begin{equation}
\label{supp_inv1} \mathcal{S}=\left\{f_{i_{1}}\circ
f_{i_{2}}\circ\cdots\circ
f_{i_{s}}\left(P^{\ast}\right)\left|\right.i_{r}\in\{0,1\},\,1\leq r\leq
s,\:s\in\mathbb{T}_{+}\right\}
\end{equation}
where $P^{\ast}$ is the fixed point of the (deterministic) Riccati equation
Then\footnote{In eqn.~\ref{supp_inv1} $s$ can take the
value 0, implying $P^{\ast}\in\mathcal{S}$.}, if $0<\overline{\gamma}<1$,
\begin{equation}
\label{supp_inv2}
\mbox{supp}\left(\mathbb{\mu}^{\overline{\gamma}}\right)=\mbox{cl}(\mathcal{S})
\end{equation}
where $\mbox{cl}(\mathcal{S})$ denotes the topological closure of
$\mathcal{S}$ in $\mathbb{S}^{N}_{+}$.
In particular, we have
\begin{equation}
\label{main_th1}
\mathbb{\mu}^{\overline{\gamma}}
\left(\left\{Y\in\mathbb{S}^{N}_{+}\left|\right.Y\succeq
P^{\ast}\right\}\right)=1
\end{equation}
\end{theorem}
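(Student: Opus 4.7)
The plan is to establish (\ref{supp_inv2}) by proving the two set-inclusions separately and then to derive (\ref{main_th1}) by essentially the same machinery. Write $S = \mbox{supp}(\mathbb{\mu}^{\overline{\gamma}})$. Invariance of $\mathbb{\mu}^{\overline{\gamma}}$ under $T^{\overline{\gamma}}$, together with (\ref{def_trQ})-(\ref{def_Tgamma}), immediately yields the pull-back identity
\begin{equation}
\mathbb{\mu}^{\overline{\gamma}}(B) = (1-\overline{\gamma})\,\mathbb{\mu}^{\overline{\gamma}}\bigl(f_0^{-1}(B)\bigr) + \overline{\gamma}\,\mathbb{\mu}^{\overline{\gamma}}\bigl(f_1^{-1}(B)\bigr), \quad B\in\mathcal{B}(\mathbb{S}^N_+),
\end{equation}
which will drive all of the support arguments below.

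To show $\mbox{cl}(\mathcal{S})\subseteq S$, I would first argue that $S$ is forward invariant under both $f_0$ and $f_1$. Fixing $X_0\in S$ and any open neighborhood $U$ of $f_i(X_0)$, continuity of $f_i$ (Proposition~\ref{prop1}) makes $f_i^{-1}(U)$ an open set containing $X_0$, so $\mathbb{\mu}^{\overline{\gamma}}(f_i^{-1}(U))>0$; the pull-back identity together with the hypothesis $0<\overline{\gamma}<1$ then forces $\mathbb{\mu}^{\overline{\gamma}}(U)>0$, so $f_i(X_0)\in S$. Since $S$ is non-empty (as the support of a probability measure), I pick any $Y_0\in S$; the forward-invariance just proved gives $f_1^{t}(Y_0)\in S$ for every $t$, and the standard deterministic Riccati convergence under stabilizability and detectability yields $f_1^{t}(Y_0)\to P^{\ast}$, whence $P^{\ast}\in S$ because $S$ is closed. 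Iterating the forward-invariance starting from $P^{\ast}$ delivers $\mathcal{S}\subseteq S$, and closedness of $S$ gives $\mbox{cl}(\mathcal{S})\subseteq S$.

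For the reverse inclusion I would run the chain from the deterministic start $P_0=P^{\ast}$; every sample path then lies in $\mathcal{S}$ by the very definition (\ref{supp_inv1}), so $\mathbb{\mu}_{t}^{\overline{\gamma},P^{\ast}}(\mbox{cl}(\mathcal{S}))=1$ for every $t$. Because $\mbox{cl}(\mathcal{S})$ is closed and $\mathbb{\mu}_{t}^{\overline{\gamma},P^{\ast}}\Longrightarrow\mathbb{\mu}^{\overline{\gamma}}$ by Theorem~\ref{main_th}, the Portmanteau inequality for closed sets gives $\mathbb{\mu}^{\overline{\gamma}}(\mbox{cl}(\mathcal{S}))\geq\limsup_{t}\mathbb{\mu}_{t}^{\overline{\gamma},P^{\ast}}(\mbox{cl}(\mathcal{S}))=1$, which forces $S\subseteq\mbox{cl}(\mathcal{S})$. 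Assertion (\ref{main_th1}) will follow from the same Portmanteau device applied to the closed set $K=\{Y\in\mathbb{S}^{N}_{+}:Y\succeq P^{\ast}\}$: the order-preserving property of $f_0,f_1$ (established in Section~\ref{sec:prop_RDS}), together with $f_1(P^{\ast})=P^{\ast}$ and the pointwise inequality $f_0(X)\succeq f_1(X)$, shows $K$ is forward invariant under both maps, so starting from $P_0=P^{\ast}$ the chain never leaves $K$ and weak convergence closes the argument.

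I expect the forward-invariance step $f_i(S)\subseteq S$ to be the main obstacle, since it is the only place where continuity of the $f_i$, the strict two-sided positivity $0<\overline{\gamma}<1$, and the correct direction of the pull-back identity must be combined simultaneously. Once this step and the $f_1$-iterate trick for depositing $P^{\ast}$ inside $S$ are in hand, the remaining inclusion and (\ref{main_th1}) are routine Portmanteau bookkeeping.
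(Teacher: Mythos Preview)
Your proof is correct, and it takes a genuinely different route from the paper's. The paper invokes a structural result for Markov--Feller pairs (Theorem~\ref{th_Zaharopol}, quoted from Zaharopol) which identifies $\mbox{supp}(\mu^{\overline{\gamma}})$ with $\bigcap_{X}\sigma(X)$, where $\sigma(X)=\mbox{Li}_{t\to\infty}\mbox{supp}\bigl((T^{\overline{\gamma}})^t\delta_X\bigr)$, and then verifies both inclusions by exhibiting convergent sequences inside the topological lower limits. You bypass that machinery entirely: the inclusion $\mbox{cl}(\mathcal{S})\subseteq\mbox{supp}(\mu^{\overline{\gamma}})$ comes from the pull-back identity and continuity (forward invariance of the support), followed by the $f_1$-iterate trick to land $P^\ast$ inside; the reverse inclusion, and likewise (\ref{main_th1}), follow from Portmanteau applied to the chain started at $P^\ast$. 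Your argument is more elementary and self-contained---it uses only the invariance of $\mu^{\overline{\gamma}}$, continuity of $f_0,f_1$, deterministic Riccati convergence, and weak convergence from Theorem~\ref{main_th}---whereas the paper's route imports a ready-made support characterization but then has to unwind it via explicit sequence constructions. Your treatment of (\ref{main_th1}) via forward invariance of $[P^\ast,\infty)$ is essentially the same content as the paper's induction on $s$, just packaged differently.
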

%We discuss the consequences of
 Theorem~\ref{supp_inv} states that, for
$0<\overline{\gamma}<1$,
%the support of the invariant distribution is independent of the value of
  $\mbox{supp}\left(\mathbb{\mu}^{\overline{\gamma}}\right)$ is independent of $\overline{\gamma}$ and is given by the closure of the countable
set $\mathcal{S}$ (but the distribution is dependent on the value
of $\overline{\gamma}$.) If $\overline{\gamma}=1$, it
reduces to the deterministic Kalman filtering, and the invariant measure is a Dirac mass at $P^{\ast}$.

The fact that the invariant measure is concentrated on the conic
interval $\left[P^{\ast},\infty\right)$, where $P^{\ast}$ is the fixed point
of the algebraic Riccati equation, eqn.~(\ref{sys_model2}), is
quite natural (but not obvious), as one cannot expect to obtain
better performance with intermittent observations.

The set $\mathcal{S}$ is not generally dense
in $\left[P^{\ast},\infty\right)$ and the support is an unbounded fractured
(many holes) subset of $\mathbb{S}^{N}_{+}$. We study a
scalar example to show, both
analytically and numerically, that the invariant measure exhibits
fractal properties, i.e., the support of the measure is a highly
fractured subset of the positive reals and exhibits
self-similarity.

The next three sections are devoted to the proofs of
Theorems~\ref{main_th},\ref{supp_inv}. The proof of
Theorem~\ref{main_th} relies on the theory of random dynamical
systems (RDS), and Theorem~\ref{supp_inv} uses the Markov-Feller
property of the transition operator. Section~\ref{RDS_form}
summarizes results on RDS and models the RRE as an RDS. Section~\ref{sec:prop_RDS} establishes
properties of the RRE as an RDS. We complete
the proof of Theorem~\ref{main_th} in Subsection~\ref{proof_main_th}, whereas
Theorem~\ref{proof_supp_inv} is proved in Subsection~\ref{proof_supp_inv}.

\section{Random Dynamical System Formulation}
\label{RDS_form} We start by defining a random dynamical system
(RDS). We follow the notation
in~\cite{ArnoldChueshov,Chueshov}.
\begin{definition}[RDS]\label{defn_RDS} A RDS with (one-sided) time
$\mathbb{T}_{+}$ and state space $\mathcal{X}$ is the pair
$(\theta,\varphi)$: %with the following properties:
\begin{itemize}
\item[\textbf{A)}] A metric dynamical system
$\theta=\left(\Omega,\mathcal{F},\mathbb{P},
\left\{\theta_{t},t\in\mathbb{T}\right\}\right)$
with two-sided time $\mathbb{T}$, i.e., a probability space
$(\Omega,\mathcal{F},\mathbb{P})$ with a family of transformations
$\{\theta_{t}:\Omega\longmapsto\Omega\}_{t\in\mathbb{T}}$ such
that\footnote{The function $id_{\Omega}$ denotes the identity map on $\Omega$, i.e., for all $\omega\in\Omega$, $id_{\Omega}(\omega)=\omega$.}
\begin{itemize}
\item[\textbf{A.1)}]$\theta_{0}=id_{\Omega},
\:\,\theta_{t}\circ\theta_{s}=\theta_{t+s},\:\,\forall
t,s\in\mathbb{T}$
\item[\textbf{A.2)}]$(t,\omega)\longmapsto\theta_{t}\omega$ is
measurable.
\item[\textbf{A.3)}]$\theta_{t}\mathbb{P}=\mathbb{P}\:\forall
t\in\mathbb{T}$, i.e., $\mathbb{P}\left(\theta_{t}B\right)=\mathbb{P}(B)$ for all $B\in\mathcal{F}$ and all $t\in\mathbb{T}$.
\end{itemize}
\item[\textbf{B)}] A cocycle $\varphi$ over $\theta$ of continuous
mappings of $\mathcal{X}$ with time $\mathbb{T}_{+}$, i.e., a
measurable mapping
\begin{equation}
\label{def_RDS} \varphi:\mathbb{T}_{+}\times\Omega\times
\mathcal{X}\rightarrow \mathcal{X}, \:(t,\omega,X)\longmapsto\varphi(t,\omega,X)
\end{equation}
%such that
\begin{itemize}
\item[\textbf{B.1)}] The mapping
$X\longmapsto\varphi(t,\omega,X)\equiv\varphi(t,\omega)X$ is
continuous in $X$  $\forall\,t\in\mathbb{T}_{+},\,\omega\in\Omega$.
\item[\textbf{B.2)}] The mappings
$\varphi(t,\omega)\doteq\varphi(t,\omega,\cdot)$ satisfy the
cocycle property:
\begin{equation}
\label{def_RDS1}
\varphi(0,\omega)=id_{\mathcal{X}},\:\:\varphi(t+s,\omega)=
\varphi(t,\theta_{s}\omega)\circ\varphi(s,\omega),\:\:\forall\,
t,s\in\mathbb{T}_{+},\,\,\omega\in\Omega
\end{equation}
%for all $t,s\in\mathbb{T}_{+}$ and $\omega\in\Omega$.
\end{itemize}
\end{itemize}
\end{definition}
%Although we consider in this paper discrete time RDS, the general notion of RDS, as defined in~\cite{Chueshov}, applies equally well to dynamical systems with continuous time.
 In a RDS, randomness is captured by the  space
$(\Omega,\mathcal{F},\mathbb{P})$. Iterates indexed by $\omega$
indicate pathwise construction. For example, if $X_{0}$ is the
deterministic  state %of the system of interest at time
 at $t=0$, the state at $t\in\mathbb{T}_{+}$ is %given by
\begin{equation}
\label{def_RDS2} X_{t}(\omega)=\varphi\left(t,\omega,X_{0}\right)
\end{equation}
The measurability assumptions %in the definition above
  guarantee that the state $X_{t}$ is a well-defined random variable.
Also,  the iterates are defined for non-negative
(one-sided) time; however, the family of transformations
$\left\{\theta_{t}\right\}$ is two-sided, which is purely for technical
convenience, as will be seen later.

We now show that the sequence $\left\{P_{t}\right\}$ generated by the RRE can be modeled as the sequence of iterates (in
the sense of distributional equivalence) of a suitably defined
RDS.

Fix $\overline{\gamma}$ and define:
$\left(\widetilde{\Omega},\widetilde{\mathcal{F}},
\widetilde{\mathbb{P}}^{\overline{\gamma}}\right)$,
where $\widetilde{\Omega}=\{0,1\}$,
$\widetilde{\mathcal{F}}=2^{\{0,1\}}$ and
$\widetilde{\mathbb{P}}^{\overline{\gamma}}(\{1\})=\overline{\gamma}$; and
 the product space,
$\left(\Omega,\mathcal{F},\mathbb{P}^{\overline{\gamma}}\right)$, where
$\Omega=\times_{t\in\mathbb{T}}\widetilde{\Omega}$ and
$\mathcal{F}$ and $\mathbb{P}^{\overline{\gamma}}$ are the
 product $\sigma$-algebra and the product
measure\footnote{Note the difference between the measure
$\mathbb{P}^{\overline{\gamma}}$ and the measures
$\mathbb{P}^{\overline{\gamma},P_{0}}$ defined in
Subsection~\ref{sys_model_label}.}. From the
construction, a sample point $\omega\in\Omega$ is a
two-sided binary sequence and, since
$\mathbb{P}^{\overline{\gamma}}$ is the product of
$\widetilde{\mathbb{P}}^{\overline{\gamma}}$, the projections are
i.i.d.~binary random variables with probability of one being
$\overline{\gamma}$. Define the family of transformations
$\left\{\theta^{R}_{t}\right\}_{t\in\mathbb{T}}$ on $\Omega$ as the family of
left-shifts
\begin{equation}
\label{def_RDS3} \theta_{t}^{R}\omega(\cdot)=\omega(t+\cdot),\:\forall
t\in\mathbb{T}
\end{equation}
With this, the space
$\left(\Omega,\mathcal{F},\mathbb{P}^{\overline{\gamma}},
\left\{\theta_{t}^{R},t\in\mathbb{T}\right\}\right)$
is the canonical path space of a two-sided stationary (in
fact, i.i.d.) sequence equipped with the left-shift operator;
hence, (e.g., \cite{Kallenberg}) it satisfies the
Assumptions~\textbf{A.1)-A.3)} to be a metric dynamical system;
in fact, it is ergodic.

Recall the Riccati iterates $f(\gamma,X)$ in eqn.~(\ref{def_f}).
Define the function
$\widetilde{f}:\Omega\times\mathbb{S}_{+}^{N}\longmapsto\mathbb{S}_{+}^{N}$
by
\begin{equation}
\label{def_RDS4} \widetilde{f}(\omega,X)=f(\omega(0),X)
\end{equation}
 Since the
projection map from $\omega$ to $\omega(0)$ is measurable
(continuous) and $f(\cdot)$ is jointly measurable in $\gamma,X$
(Proposition~\ref{prop1}), it follows that $\widetilde{f}(\cdot)$ is
jointly measurable in $\omega,X$. Define the function
$\varphi^{R}:\mathbb{T}_{+}\times\Omega\times\mathbb{S}_{+}
\longmapsto\mathbb{S}_{+}$
by
\begin{eqnarray}
\label{def_RDS5} \varphi^{R}(0,\omega,X)&=&X,\:\hspace{4.05cm}\forall\,\,\omega,X\\
%\end{equation}
%\begin{equation}
\label{RDS6}
\varphi^{R}(1,\omega,X)&=&\widetilde{f}(\omega,X),\:\hspace{3.1cm}\forall\,\,\omega,X\\
%\end{equation}
%\begin{equation}
\label{def_RDS7}
\varphi^{R}(t,\omega,X)&=&\widetilde{f}
\left(\theta^{R}_{t-1}\omega,\varphi^{R}(t-1,\omega,X)\right),\:\forall\,\,
t>1,\omega,X
\end{eqnarray}
It follows from the measurability of the transformations
$\left\{\theta_{t}^{R}\right\}$, the measurability of $\widetilde{f}(\cdot)$,
and the fact that $\mathbb{T}_{+}$ is countable that the function
$\varphi^{R}(t,\omega,X)$ is jointly measurable in $t,\omega,X$.
Finally, $\varphi^{R}(\cdot)$ defined above satisfies
Assumption~\textbf{B.2)} by virtue of Proposition~\ref{prop1}, and
Assumption~\textbf{B.3)} follows by the construction given by
eqns.~(\ref{def_RDS5}-\ref{def_RDS7}). Thus, the pair
$\left(\theta^{R},\varphi^{R}\right)$ is an RDS over $\mathbb{S}_{+}^{N}$.
Given a deterministic initial condition
$P_{0}\in\mathbb{S}_{+}^{N}$, it follows that the sequence
$\left\{P_{t}\right\}_{t\in\mathbb{T}_{+}}$ generated by the RRE
eqn.~(\ref{sys_model7}) is equivalent in the sense of distribution
to the sequence
$\left\{\varphi^{R}(t,\omega,P_{0})\right\}_{t\in\mathbb{T}_{+}}$ generated
by the iterates of the above constructed RDS, i.e.,
\begin{equation}
\label{def_RDS100}
P_{t}\ndtstile{}{d}\varphi^{R}(t,\omega,P_{0}),\:\forall
t\in\mathbb{T}_{+}
\end{equation}
Indeed, by studying eqn.~(\ref{def_RDS7}), we note that the iterate $\varphi^{R}(t,\omega,P_{0})$ at time $t$ is obtained by applying the map $f_{\omega_{t-1}}$ to $\varphi^{R}(t-1,\omega,P_{0})$ and by construction, the random variable $\omega_{t-1}$ is 1 with probability $\overline{\gamma}$ and 0 with probability $1-\overline{\gamma}$.
Thus, investigating the distributional properties of
$\left\{P_{t}\right\}_{t\in\mathbb{T}_{+}}$ is equivalent to analyzing the
distributional properties of
$\left\{\varphi^{R}(t,\omega,P_{0})\right\}_{t\in\mathbb{T}_{+}}$, which we
carry out in the rest of the paper.

In the sequel, we use the pair $(\theta,\varphi)$ to denote a
generic RDS and $\left(\theta^{R},\varphi^{R}\right)$ for the one constructed
above for the RRE.

\section{Properties of the RDS $(\theta,\varphi)$}
\label{sec:prop_RDS}

\subsection{Facts about generic RDS}
\label{facts_genRDS} We review concepts on RDS
(see~\cite{ArnoldChueshov,Chueshov} for details.)
 Consider a generic RDS $(\theta,\varphi)$ with state space
$\mathcal{X}$ as in Definition~\ref{defn_RDS}. Assume that $\mathcal{X}$ is a non-empty subset of a real Banach
space $V$ with a closed, convex, solid, normal (w.r.t.~the Banach
space norm), minihedral cone $V_{+}$. Denote by $\preceq$ the
partial order induced by $V_{+}$ in $\mathcal{X}$ and $<<$ denotes
the corresponding strong order. Although the development that
follows may hold for arbitrary $\mathcal{X}\subset V$, in the
sequel, we assume $\mathcal{X}=V_{+}$ (which is true for the RDS
$\left(\theta^{R},\varphi^{R}\right)$ modeling the RRE.)

\begin{definition}[Order-Preserving RDS]
\label{Order} A RDS $(\theta,\varphi)$ with state space $V_{+}$
is order-preserving if
\begin{equation}
X\preceq
Y\:\Longrightarrow\:\varphi(t,\omega,X)\preceq\varphi(t,\omega,Y),\:\forall
t\in\mathbb{T}_{+},\:\omega\in\Omega,\:X,Y\in V_{+}
\end{equation}
\end{definition}

\begin{definition}[Sublinearity]\label{sublinearity} An order-preserving RDS $(\theta,\varphi)$ with state space $V_{+}$ is
sublinear if for every $X\in V_{+}$ and $\lambda\in (0,1)$
we have
\begin{equation}
\label{prop_RDS}
\lambda\varphi(t,\omega,X)\preceq\varphi(t,\omega,\lambda
X),\:\forall t>0,\:\omega\in\Omega
\end{equation}
%The RDS is strictly sublinear if strict inequality in
%eqn.~(\ref{prop_RDS}) holds for $X\in\mbox{int}\,V_{+}$, i.e.,
%% for $X\in\mbox{int}\,V_{+}$,
%\begin{equation}
%\label{prop_RDS1}
%\lambda\varphi(t,\omega,X)\prec\varphi(t,\omega,\lambda
%X),\:\forall t>0,\:\omega\in\Omega,\:X\in\mbox{int}\,V_{+}
%\end{equation}
The RDS is strongly sublinear if in addition to
eqn.~(\ref{prop_RDS}), we have
\begin{equation}
\label{prop_RDS2}
\lambda\varphi(t,\omega,X)\ll\varphi(t,\omega,\lambda X),\:\forall
t>0,\:\omega\in\Omega,\:X\in\mbox{int}\,V_{+}
\end{equation}
\end{definition}

\begin{definition}[Equilibrium]\label{equilibrium} A random
variable $u:\Omega\longmapsto V_{+}$ is called an equilibrium
(fixed point, stationary solution) of the RDS $(\theta,\varphi)$
if it is invariant under $\varphi$, i.e.,
\begin{equation}
\label{prop_RDS3}
\varphi\left(t,\omega,u(\omega)\right)=u\left(\theta_{t}\omega\right),\:\forall
t\in\mathbb{T}_{+},\:\omega\in\Omega
\end{equation}
If eqn.~(\ref{prop_RDS3}) holds $\forall\,\omega\in\Omega$,
except on  set of $\mathbb{P}$ measure zero, $u$ is an
almost equilibrium.
\end{definition}
Since the transformations $\left\{\theta_{t}\right\}$ are
measure-preserving, i.e.,
$\theta_{t}\mathbb{P}=\mathbb{P},\:\forall t$, we have
\begin{equation}
\label{prop_RDS4} u\left(\theta_{t}\omega\right)\ndtstile{}{d}
u(\omega),\:\forall t
\end{equation}
By eqn.~(\ref{prop_RDS3}), for an
almost equilibrium $u$, the iterates in the sequence
$\left\{\varphi\left(t,\omega,u(\omega)\right)\right\}_{t\in\mathbb{T}_{+}}$
have the same distribution, which is the distribution of $u$.

%\begin{definition}[Part]\label{part} $X, Y \in V_{+}$ are equivalent, $X\sim Y$, if there exists $\alpha\geq 1$ such that $\alpha^{-1}X\preceq
%Y\preceq\alpha X$. The equivalence classes in
%$V_{+}$ under this equivalence relation  are the parts of $V_{+}$
%
%The part $C_{v}$ generated by a random variable $v:\Omega\longmapsto V_{+}$ is
% the collection of random variables $u:\Omega\longmapsto V_{+}$ such that there exists deterministic $\alpha_{u}\geq 1$ with
%\begin{equation}
%\label{part1} \alpha_{u}^{-1}v(\omega)\preceq
%u(\omega)\preceq\alpha_{u}v(\omega),\:\forall\omega\in\Omega
%\end{equation}
%\end{definition}

\begin{definition}[Orbit]\label{orbit} For a random variable $u:\Omega\longmapsto
V_{+}$, we define the \emph{forward} orbit $\eta^{f}_{u}(\omega)$
emanating from $u(\omega)$ as the random set
$\left\{\varphi\left(t,\omega,u(\omega)\right)\right\}_{t\in\mathbb{T}_{+}}$.
The forward orbit gives the sequence of iterates of the RDS
starting at $u$.

Although $\eta^{f}_{u}$ is the object of  interest,
for technical convenience (as will be seen later), we also define the
\emph{pull-back} orbit $\eta^{b}_{u}(\omega)$ emanating from $u$
as the random set
$\left\{\varphi\left(t,\theta_{-t}\omega,u\left(\theta_{-t}\omega\right)
\right)\right\}_{t\in\mathbb{T}_{+}}$.
\end{definition}
We establish asymptotic properties for
the pull-back orbit $\eta^{b}_{u}$. This is because it is more convenient and because analyzing $\eta_{u}^{b}$ leads to understanding the asymptotic distributional properties for $\eta^{f}_{u}$. In fact, the random sequences
$\left\{\varphi\left(t,\omega,u(\omega)\right)\right\}_{t\in\mathbb{T}_{+}}$
and $\left\{\varphi\left(t,\theta_{-t}\omega,u\left(\theta_{-t}\omega\right)
\right)\right\}_{t\in\mathbb{T}_{+}}$ are equivalent in distribution. In other words,
\begin{equation}
\label{orbit1}
\varphi\left(t,\omega,u(\omega)\right)\ndtstile{}{d}\varphi
\left(t,\theta_{-t}\omega,u\left(\theta_{-t}\omega\right)\right),\:\forall
t\in\mathbb{T}_{+}
\end{equation}
This follows from
$\theta_{t}\mathbb{P}=\mathbb{P},\:\forall t\in\mathbb{T}$ (hence the random objects $\omega$ and $\theta_{t}\omega$ possess the same distribution.) Thus,
in particular, we have the following assertion.
\begin{lemma}
\label{orbit_lemma} Let the sequence
$\left\{\varphi\left(t,\theta_{-t}\omega,u\left(\theta_{-t}\omega\right)
\right)\right\}_{t\in\mathbb{T}_{+}}$
converge in distribution to a measure $\mu$ on $V_{+}$, where
$u:\Omega\longmapsto V_{+}$ is a random variable. Then the
sequence
$\left\{\varphi\left(t,\omega,u(\omega)\right)\right\}_{t\in\mathbb{T}_{+}}$
also converges in distribution to the measure $\mu$.
\end{lemma}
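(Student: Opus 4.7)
The plan is to reduce the claim to the distributional identity already noted in eqn.~(\ref{orbit1}): for each fixed $t\in\mathbb{T}_{+}$, the random element $\varphi(t,\omega,u(\omega))$ and its pull-back counterpart $\varphi(t,\theta_{-t}\omega,u(\theta_{-t}\omega))$ have the same law on $V_{+}$. Once this identity is in hand, the lemma is immediate, because weak convergence of a sequence of random variables depends only on the sequence of one-dimensional marginal laws: if one of the two sequences converges weakly to $\mu$ and the other has identical marginals at every $t$, then the latter also converges weakly to $\mu$.

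To set up the identity rigorously, I would define the measurable map $\Phi_{t}:\Omega\to V_{+}$ by $\Phi_{t}(\omega)=\varphi(t,\omega,u(\omega))$; its measurability follows from the joint measurability of $\varphi$ built into Definition~\ref{defn_RDS} combined with the measurability of $u$. The pull-back iterate then factors as the composition $\Phi_{t}\circ\theta_{-t}$, since $\varphi(t,\theta_{-t}\omega,u(\theta_{-t}\omega))=\Phi_{t}(\theta_{-t}\omega)$. By Assumption~A.1, $\theta_{-t}$ is well-defined as the inverse of $\theta_{t}$, and by Assumption~A.3 it preserves $\mathbb{P}$. Consequently the pushforward law of $\Phi_{t}\circ\theta_{-t}$ under $\mathbb{P}$ coincides with the law of $\Phi_{t}$ under $\mathbb{P}$, which is precisely the content of eqn.~(\ref{orbit1}).

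To finish, let $\mu_{t}$ denote this common law. The hypothesis is that $\mu_{t}\Longrightarrow\mu$, so for every $f\in C_{b}(V_{+})$, the change-of-variables formula for the measure-preserving map $\theta_{-t}$ gives
\begin{equation*}
\mathbb{E}\bigl[f(\varphi(t,\omega,u(\omega)))\bigr]=\int_{V_{+}}f\,d\mu_{t}\longrightarrow\int_{V_{+}}f\,d\mu,
\end{equation*}
which is exactly weak convergence of the forward orbit to $\mu$. There is no substantive obstacle in the argument: the only delicate point is the invertibility and $\mathbb{P}$-invariance of $\theta_{-t}$, both of which are guaranteed by the metric dynamical system axioms A.1 and A.3. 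This is also precisely the reason that the family $\{\theta_{t}\}$ is required in the definition of an RDS to be indexed by two-sided time $\mathbb{T}$ even though the cocycle $\varphi$ only evolves forward.
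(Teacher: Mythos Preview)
Your proof is correct and follows exactly the approach the paper takes: the paper does not give a separate proof of this lemma, but simply presents it as an immediate consequence of the distributional identity in eqn.~(\ref{orbit1}), which in turn is justified by the measure-preserving property $\theta_{t}\mathbb{P}=\mathbb{P}$. You have correctly filled in the details behind that one-line justification.
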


We now introduce some notions of boundedness of RDS, which will be
used in the sequel.

\begin{definition}[Boundedness]\label{boundedness} Let
$a:\Omega\longmapsto V_{+}$ be a random variable. The pull-back
orbit $\eta_{a}^{b}(\omega)$ emanating from $a$ is
bounded on $U\in\mathcal{F}$ if there exists a random variable $C$
on $U$ s.t.
\begin{equation}
\label{boundedness1}
\left\|\varphi\left(t,\theta_{-t}\omega,a\left(\theta_{-t}\omega\right)
\right)\right\|\leq
C(\omega),\:\forall t\in\mathbb{T}_{+},~\omega\in U
\end{equation}
\end{definition}

\begin{definition}[Conditionally Compact RDS]\label{cond_comp} An
RDS $(\theta,\varphi)$ in $V_{+}$ is conditionally
compact if for any $U\in\mathcal{F}$ and pull-back orbit
$\eta_{a}^{b}(\omega)$ that is bounded on $U$ there exists a
family of compact sets $\{K(\omega)\}_{\omega\in U}$ s.t.
\begin{equation}
\label{cond_comp1}
\lim_{t\rightarrow\infty}\mbox{dist}\left(\varphi\left(t,\theta_{-t}\omega,
a\left(\theta_{-t}\omega\right)\right),K(\omega)\right)=0,\:\omega\in
U
\end{equation}
\end{definition}
It is to be noted that conditionally compact is a topological
property of the space $V_{+}$. In particular, an RDS in a finite
dimensional space $V_{+}$ is conditionally compact.

We now state a limit set dichotomy result for a class of
sublinear, order-preserving RDS.
\begin{theorem}[Corollary 4.3.1. in~\cite{Chueshov}]\label{LSD}
Let $V$ be a separable Banach space with a normal solid cone
$V_{+}$. Assume that $(\theta,\varphi)$ is a strongly sublinear
conditionally compact order-preserving RDS over an ergodic metric
dynamical system $\theta$. Suppose that $\varphi(t,\omega,0)\gg 0$
for all $t>0$ and $\omega\in\Omega$. Then precisely one of the
following applies:
\begin{itemize}
\item[\textbf{(a)}] For any $X\in V_{+}$ we have
\begin{equation}
\label{LSD1}
\mathbb{P}\left(\lim_{t\rightarrow\infty}\left\|\varphi\left(t,\theta_{-t}\omega,X\right)\right\|=\infty\right)=1
\end{equation}
\item[\textbf{(b)}] There exists a unique almost equilibrium
$u(\omega)\gg 0$ defined on a $\theta$-invariant set\footnote{A
set $A\in\mathcal{F}$ is called $\theta$-invariant if
$\theta_{t}A=A$ for all $t\in\mathbb{T}$.}
$\Omega^{\ast}\in\mathcal{F}$ with
$\mathbb{P}\left(\Omega^{\ast}\right)=1$ such that, for any random
variable $v(\omega)$ possessing the property $0\preceq
v(\omega)\preceq\alpha u(\omega)$ for all $\omega\in\Omega^{\ast}$
and deterministic $\alpha>0$, the following holds:
\begin{equation}
\label{LSD2}
\lim_{t\rightarrow\infty}\varphi\left(t,\theta_{-t}\omega,v\left(\theta_{-t}\omega\right)\right)=u(\omega),\:\omega\in\Omega^{\ast}
\end{equation}
\end{itemize}
\end{theorem}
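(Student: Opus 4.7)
The plan is to prove the dichotomy by combining four ingredients: monotonicity of the pullback orbit starting from the cone apex, a $\theta$-invariance (0-1) argument from ergodicity, conditional compactness for extracting a limit, and strong sublinearity, which supplies a strict contraction in Birkhoff's part metric on $V_+$ to rule out multiple limits.

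First, I would verify that for each fixed $\omega$ the pullback orbit $u_t(\omega) := \varphi(t,\theta_{-t}\omega, 0)$ is monotone increasing in $t$. Using the cocycle identity with the decomposition $\varphi(t+1,\omega') = \varphi(t,\theta_1\omega')\circ\varphi(1,\omega')$ at $\omega' = \theta_{-(t+1)}\omega$ and $X=0$,
\begin{equation*}
u_{t+1}(\omega) = \varphi\bigl(t,\theta_{-t}\omega,\,\varphi(1,\theta_{-(t+1)}\omega,0)\bigr) \succeq \varphi(t,\theta_{-t}\omega,0) = u_t(\omega),
\end{equation*}
since $\varphi(1,\theta_{-(t+1)}\omega,0)\succeq 0$ and $\varphi$ is order-preserving. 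Because $V_+$ is normal, a monotone pullback orbit is either norm-bounded (and then convergent by conditional compactness together with closedness of $V_+$) or has norm diverging to $\infty$.

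Next, I would show that $\Omega_{\mathrm{bdd}} := \{\omega : \sup_t\|u_t(\omega)\|<\infty\}$ is $\theta$-invariant, by re-expressing $u_t(\theta_s\omega)$ through the cocycle in terms of $\varphi(\cdot,\omega,\cdot)$ and exploiting the monotonicity just established; by ergodicity of $\theta$, $\mathbb{P}(\Omega_{\mathrm{bdd}})\in\{0,1\}$. In the case $\mathbb{P}(\Omega_{\mathrm{bdd}})=0$, for any $X\in V_+$ order-preservation gives $\varphi(t,\theta_{-t}\omega,X)\succeq u_t(\omega)$, and semi-monotonicity of the norm (from normality) forces $\|\varphi(t,\theta_{-t}\omega,X)\|\to\infty$ almost surely, which is case $(a)$. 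In the complementary case $\mathbb{P}(\Omega_{\mathrm{bdd}})=1$, define $u(\omega) := \lim_{t\to\infty} u_t(\omega)$ on a $\theta$-invariant set $\Omega^{\ast}$ of full measure; the hypothesis $\varphi(t,\omega,0)\gg 0$ for all $t>0$ guarantees $u(\omega)\gg 0$. Passing to the limit in the cocycle identity (using continuity of $\varphi(1,\omega,\cdot)$) yields $\varphi(1,\omega,u(\omega)) = u(\theta_1\omega)$ on $\Omega^{\ast}$, which extends to all $t\in\mathbb{T}_{+}$ by iteration and confirms the almost-equilibrium property.

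Finally, to establish the basin-of-attraction statement for any $v$ with $0\preceq v(\omega)\preceq\alpha u(\omega)$ on $\Omega^{\ast}$, I would introduce Birkhoff's part metric on the solid interior: for $X,Y\gg 0$, $p(X,Y) := \log\inf\{\beta/a : aY\preceq X\preceq \beta Y\}$. Order-preservation renders $\varphi(t,\omega,\cdot)$ non-expansive in $p$, while strong sublinearity renders it strictly contractive on pairs comparable to interior points. Sandwiching $\varphi(t,\theta_{-t}\omega,v(\theta_{-t}\omega))$ between $\varphi(t,\theta_{-t}\omega,0)$ and $\varphi(t,\theta_{-t}\omega,\alpha u(\theta_{-t}\omega)) = \alpha u(\omega) \cdot (\text{correction})$ then applying strong sublinearity along the orbit shows $p(\varphi(t,\theta_{-t}\omega,v(\theta_{-t}\omega)),u(\omega))\to 0$, yielding the convergence in $(b)$ and uniqueness. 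The main obstacle is to quantify this contraction uniformly along $\theta$-orbits: a naive per-step contraction factor depends on $\omega$, so one must invoke ergodicity (typically via Kingman's subadditive ergodic theorem applied to $\log$ of the contraction coefficients) to extract a strictly negative almost-sure Lyapunov exponent for $p$. This is the technical heart of the Chueshov framework and is where the strong sublinearity (as opposed to mere sublinearity) is essential.
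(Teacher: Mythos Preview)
The paper does not prove this statement at all: Theorem~\ref{LSD} is quoted verbatim as Corollary~4.3.1 from Chueshov's monograph and is used as a black box in the proof of Lemma~\ref{step}. There is therefore no ``paper's own proof'' to compare your proposal against.

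That said, your sketch is broadly the right shape for how this result is established in the monotone RDS literature. The monotonicity of the pullback orbit from $0$, the $0$--$1$ law via $\theta$-invariance and ergodicity, and the extraction of the equilibrium $u$ from conditional compactness are all standard steps. Two remarks on the final part: first, the expression $\varphi(t,\theta_{-t}\omega,\alpha u(\theta_{-t}\omega)) = \alpha u(\omega)\cdot(\text{correction})$ is not how the argument runs --- rather, one uses strong sublinearity directly to show that $\varphi(t,\theta_{-t}\omega,\alpha u(\theta_{-t}\omega)) \ll \alpha\,\varphi(t,\theta_{-t}\omega,u(\theta_{-t}\omega)) = \alpha u(\omega)$ strictly, and iterates this squeezing. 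Second, the appeal to Kingman's subadditive ergodic theorem is not typically needed in Chueshov's treatment of the strongly sublinear case; the part-metric contraction can be handled more directly by exploiting the strict inequality from strong sublinearity together with the equilibrium property of $u$, without extracting a Lyapunov exponent. So your proposal is a plausible outline, but the last paragraph overstates the machinery required.
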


\subsection{Properties of the Riccati RDS}
\label{prop_Ric_RDS} In this subsection we establish some
properties of the RDS $\left(\theta^{R},\varphi^{R}\right)$ modeling the
RRE.

\begin{lemma}
\label{prop_R} The RDS $\left(\theta^{R},\varphi^{R}\right)$ with state space
$\mathbb{S}_{+}^{N}$ is order-preserving. In other words,
\begin{equation}
\label{prop_R1} X\preceq
Y~\Longrightarrow~\varphi^{R}(t,\omega,X)\preceq\varphi^{R}(t,\omega,Y),\:\forall
t\in\mathbb{T}_{+},~\omega\in\Omega,~X,Y\in\mathbb{S}_{+}^{N}
\end{equation}
Also, if $Q$ is positive definite, i.e., $Q\gg 0$, it is strongly
sublinear.
\end{lemma}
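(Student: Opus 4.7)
The plan is to establish both properties first for the single-step maps $f_{0}, f_{1}$ defined in (\ref{def_f0})--(\ref{def_f1}), and then bootstrap to the cocycle via induction on $t$, using the representation $\varphi^{R}(t,\omega,X) = f_{\omega(t-1)}\circ\cdots\circ f_{\omega(0)}(X)$ that follows directly from (\ref{def_RDS5})--(\ref{def_RDS7}).

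For order-preservation of the single-step maps, $f_{0}$ is trivial since $f_{0}(Y)-f_{0}(X) = A(Y-X)A^{T}$. For $f_{1}$, I would invoke the matrix inversion lemma on the interior $\mathbb{S}^{N}_{++}$ to rewrite $f_{1}(X) = A(X^{-1}+C^{T}R^{-1}C)^{-1}A^{T}+Q$, so that order-preservation follows from applying the order-reversing map $Z\mapsto Z^{-1}$ twice (with the order-preserving shift by $C^{T}R^{-1}C$ in between). The extension to $X,Y$ on the boundary of $\mathbb{S}^{N}_{+}$ is by continuity of $f_{1}$ (Proposition~\ref{prop1}): approximate by $X+\epsilon I, Y+\epsilon I\in\mathbb{S}^{N}_{++}$ and let $\epsilon\downarrow 0$. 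Order-preservation of $\varphi^{R}(t,\omega,\cdot)$ then follows by induction on $t$, since a composition of order-preserving maps is order-preserving.

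For strong sublinearity, the heart of the argument is a direct expansion. One checks immediately that $f_{0}(\lambda X) - \lambda f_{0}(X) = (1-\lambda)Q$, and, setting $M := CXC^{T} \succeq 0$,
\[
f_{1}(\lambda X) - \lambda f_{1}(X) = (1-\lambda)Q + \lambda A X C^{T}\bigl[(M+R)^{-1} - \lambda(\lambda M + R)^{-1}\bigr]C X A^{T}.
\]
The bracketed matrix is positive semidefinite: since $R\gg 0$ and $\lambda\in(0,1)$, one has $\lambda(M+R) \preceq \lambda M + R$, and taking inverses (which reverses $\preceq$ on $\mathbb{S}^{N}_{++}$) yields $\lambda(\lambda M + R)^{-1} \preceq (M+R)^{-1}$. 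Hence the congruence term is $\succeq 0$, and the additive $(1-\lambda)Q \gg 0$ forces $f_{i}(\lambda X) - \lambda f_{i}(X) \gg 0$ for \emph{every} $X\in\mathbb{S}^{N}_{+}$, not only in the interior---this uniformity is crucial for the induction below.

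To lift to the cocycle, I induct on $t$. The base case $t=1$ is exactly the single-step result. For $t>1$, set $Y := \varphi^{R}(t-1,\omega,X)$ and $Y' := \varphi^{R}(t-1,\omega,\lambda X)$; by the inductive hypothesis $\lambda Y \ll Y'$, so in particular $\lambda Y \preceq Y'$. Applying the single-step strong sublinearity at the (possibly boundary) point $Y\in\mathbb{S}^{N}_{+}$, followed by order-preservation of $f_{\omega(t-1)}$, gives
\[
\lambda\varphi^{R}(t,\omega,X) = \lambda f_{\omega(t-1)}(Y) \ll f_{\omega(t-1)}(\lambda Y) \preceq f_{\omega(t-1)}(Y') = \varphi^{R}(t,\omega,\lambda X),
\]
and $\ll$ followed by $\preceq$ remains $\ll$ since $\mathbb{S}^{N}_{+} + \mbox{int}\,\mathbb{S}^{N}_{+} \subset \mbox{int}\,\mathbb{S}^{N}_{+}$. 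The main obstacle I anticipate is verifying the bracketed inequality for $f_{1}$ and handling $X$ on the boundary of the cone in the continuity argument for order-preservation; the conceptual payoff is that the additive $Q\gg 0$ upgrades sublinearity to \emph{strong} sublinearity uniformly on all of $\mathbb{S}^{N}_{+}$, which is exactly what makes the inductive step go through without requiring the iterate $Y$ to lie in the interior.
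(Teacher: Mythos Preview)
Your proof is correct and follows the same overall architecture as the paper: establish the single-step properties of $f_{0},f_{1}$, then bootstrap to $\varphi^{R}(t,\omega,\cdot)$ by induction on $t$ using the representation $\varphi^{R}(t,\omega,X)=f_{\omega(t-1)}\circ\cdots\circ f_{\omega(0)}(X)$. The induction step you wrote is essentially verbatim the paper's.

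The differences lie only in how the single-step properties are obtained. For order-preservation of $f_{1}$, the paper simply cites \cite{Bruno} (Lemma~1(c)), whereas you prove it from scratch via the Woodbury identity $f_{1}(X)=A(X^{-1}+C^{T}R^{-1}C)^{-1}A^{T}+Q$ on $\mathbb{S}^{N}_{++}$ followed by a continuity/limiting argument on the boundary. For strong sublinearity, the paper invokes \emph{concavity} of $f_{\gamma}$ (again from \cite{Bruno}) to get $\lambda f_{\gamma}(X)+(1-\lambda)f_{\gamma}(0)\preceq f_{\gamma}(\lambda X)$ and then uses $f_{\gamma}(0)\succeq Q\gg 0$; you instead compute $f_{1}(\lambda X)-\lambda f_{1}(X)$ directly and isolate the $(1-\lambda)Q$ term plus a congruence with the bracket $(M+R)^{-1}-\lambda(\lambda M+R)^{-1}\succeq 0$. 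Your route is more self-contained and gives the same uniform-on-$\mathbb{S}^{N}_{+}$ strengthening (not just on the interior) that the paper also notes is needed for the induction. The paper's concavity argument is slightly more conceptual---it explains \emph{why} the $(1-\lambda)Q$ gap appears without any matrix algebra---but both arrive at the identical inequality $\lambda f_{\gamma}(X)\ll f_{\gamma}(\lambda X)$ for all $X\in\mathbb{S}^{N}_{+}$.
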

\begin{proof}
We establish order-preserving.
Eqn.~(\ref{prop_R1}) holds for $t=0$, because by definition
\begin{equation}
\label{prop_R2}
\varphi^{R}(0,\omega,\cdot)=id_{S^{N}_{+}},\:\forall\omega\in\Omega
\end{equation}
Consider $t=1$. From eqn.~(\ref{RDS6}) we have
\begin{equation}
\label{prop_R3}
\varphi^{R}(1,\omega,X)=f(\omega(0),X),\:\forall\omega\in\Omega,~X\in\mathbb{S}_{+}^{N}
\end{equation}
where $f(\cdot)$ is defined in eqn.~(\ref{def_f}).
From~\cite{Bruno} (Lemma 1, part (c)), we note that, for fixed
$\gamma\in\{0,1\}$, the function
$f_{\gamma}(\cdot)=f(\gamma,\cdot)$ is order-preserving in $X$,
i.e.,
\begin{equation}
\label{prop_R4} X\preceq Y~\Longrightarrow~f_{\gamma}(X)\preceq
f_{\gamma}(Y),\:\forall X,Y\in\mathbb{S}_{+}^{N}
\end{equation}
Hence, for a given $\omega\in\Omega$, we have from
eqns.~(\ref{prop_R3},\ref{prop_R4}), if $X\preceq Y$,
\begin{eqnarray}
\label{prop_R5} \varphi^{R}(1,\omega,X)
%& = &
=
f_{\omega(0)}(X)
%\nonumber  \\ & \preceq &
\preceq
f_{\omega(0)}(Y)
%\nonumber \\ & = &
=
\varphi^{R}(1,\omega,Y)
\end{eqnarray}
Thus, the order-preserving property is established for $t=1$. For
$t>1$, we have from eqn.~(\ref{def_RDS7})
\begin{equation}
\label{prop_R6} \varphi^{R}(t,\omega,X)=f_{\omega(t-1)}\circ
f_{\omega(t-2)}\circ\cdots\circ f_{\omega(0)}(X)
\end{equation}
For $\omega\in\Omega$, the functions
$\{f_{\omega(i)}(\cdot)\}_{0\leq t-1}$ are order-preserving by
eqn.~(\ref{prop_R4}). Since the composition of order-preserving
functions remains order-preserving, from
eqn.~(\ref{prop_R6})  the function $\varphi^{R}(t,\omega,\cdot)$
is order-preserving in $X$. This establishes the order-preserving
of the RDS $\left(\theta^{R},\varphi^{R}\right)$.

We now establish strong sublinearity  when $Q\gg 0$. Fix $\gamma\in\{0,1\}, \,\lambda\in (0,1)$. Then, from the concavity of
$f_{\gamma}(\cdot)$ (Lemma 1, part (e) in~\cite{Bruno}), we have

\begin{equation}
\label{prop_R7} \lambda
f_{\gamma}(X)+(1-\lambda)f_{\gamma}(0)\preceq f_{\gamma}(\lambda
X),\:\: \forall\,\,X\in\mathbb{S}_{+}^{N}
\end{equation}
Again, from~\cite{Bruno} (Lemma 1 part (f)), we have
\begin{equation}
\label{prop_R8} Q\preceq f_{\gamma}(0)
\end{equation}
Under the assumption $Q\gg 0$ and $\lambda\in (0,1)$, we have from
eqn.~(\ref{prop_R8})
\begin{eqnarray}
\label{prop_R9} (1-\lambda)f_{\gamma}(0)
%& \succeq &
\succeq
(1-\lambda)Q\nonumber %\\ & \gg &
\gg 0
\end{eqnarray}
From eqns.~(\ref{prop_R7},\ref{prop_R9}), we then have for every
$\lambda\in (0,1)$ and $\gamma\in\{0,1\}$
\begin{equation}
\label{prop_R10} \lambda f_{\gamma}(X)\ll f_{\gamma}(\lambda
X),\:X\in\mathbb{S}_{+}^{N}
\end{equation}
We then have from eqn.~(\ref{prop_R10}) for all
$X\in\mathbb{S}_{+}^{N}$, $\lambda\in (0,1)$, $\omega\in\Omega$, and
$t=1$
\begin{eqnarray}
\label{prop_R11} \lambda\varphi^{R}(1,\omega,X)
%& = &
= \lambda
f_{\omega(0)}(X)
%\nonumber \\ & \ll &
\ll f_{\omega(0)}(\lambda X)
%\nonumber \\ & = &
= \varphi^{R}(1,\omega,\lambda X)
\end{eqnarray}
which establishes strong sublinearity for $t=1$ (the
above is \emph{stronger} than strong sublinearity, as given by
Definition~\ref{sublinearity}, since the latter requires $\ll$ to
hold only for $X\in\mathbb{S}_{++}^{N}$.) We extend it to $t>1$ by
induction. Assume that the property in eqn.~(\ref{prop_R11})
(which implies strong sublinearity) holds for $t=s>0$. We now show
that it holds for $t=s+1$. Indeed, for
$X\in\mathbb{S}_{+}^{N}$
{\small
\begin{eqnarray}
\label{prop_R12} \lambda\varphi^{R}(s+1,\omega,X)
& = &
\lambda
f_{\omega(s)}\left(\varphi^{R}(s,\omega,X)\right)
% \nonumber \\ & \ll &
\ll
f_{\omega(s)}\left(\lambda\varphi^{R}(s,\omega,X)\right)
%\nonumber
\\ \nonumber & \preceq &
%\preceq
 f_{\omega(s)}\left(\varphi^{R}(s,\omega,\lambda
X)\right)
%\nonumber \\ & = &
= \varphi^{R}(s+1,\omega,\lambda X)
\end{eqnarray}
}
where the second step follows from eqn.~(\ref{prop_R10}) and the
third  from the induction step
\begin{equation}
\label{prop_R13}
\lambda\varphi^{R}(s,\omega,X)\ll\varphi^{R}(s,\omega,\lambda X)
\end{equation}
and the fact that $f_{\omega(s)}(\cdot)$ is order-preserving. Thus
we have strong sublinearity.
\end{proof}

%\begin{lemma}
%\label{prop_Rcc-lemma} The RDS $\left(\theta^{R},\varphi^{R}\right)$ is
%conditionally compact.
%\end{lemma}
%\begin{proof}Let $a:\Omega\longmapsto\mathbb{S}_{+}$ be a random
%variable whose pull-back orbit $\eta_{a}^{b}(\omega)$ is bounded
%on $U\in\mathcal{F}$. Then for $\omega\in U$ the set $A(\omega)$ is a bounded subset of $\mathbb{S}_{+}^{N}$, where
%\begin{equation}
%\label{prop_Rcc-eqn}
%A(\omega)=\{\varphi^{R}\left(t,\theta_{-t}\omega,a\left(\theta_{-t}\omega\right)\right):t\in\mathbb{T}_{+}\}
%\end{equation}
% Under the induced
%2-norm, any bounded set in $\mathbb{S}_{+}$ belongs to a conic
%interval, or more generally a compact set. Thus,
%$\forall\, \omega\in U$, there exists a compact set $K(\omega)$ in
%$\mathbb{S}_{+}^{N}$ s.t.
%\begin{equation}
%\label{prop_Rcc} A(\omega)\subset K(\omega)
%\end{equation}
%Choosing this family of compact sets, we see that the condition in
%Definition~\ref{cond_comp} is satisfied, and hence the RDS
%$\left(\theta^{R},\varphi^{R}\right)$ is conditionally compact.
%\end{proof}

\section{Proofs of Theorems~\ref{main_th},\ref{supp_inv}}
\label{main_proof}

\subsection{Proof of Theorem~\ref{main_th}}
\label{proof_main_th}

%We have the following assertion for the RDS $\left(\theta^{R},\varphi^{R}\right)$.

%\begin{lemma}
%\label{step} Consider the RDS $\left(\theta^{R},\varphi^{R}\right)$ and
%assume $Q\in\mathbb{S}_{++}$. Then for all deterministic
%$X_{0}\in\mathbb{S}_{++}$, the pull-back orbit $\eta^{b}_{X_{0}}$
%emanating from $X_{0}$ does not leave $\mathbb{S}_{++}$, i.e.,
%\begin{equation}
%\label{step1}
%\varphi^{R}(t,\theta^{R}_{-t},X_{0})\in\mathbb{S}_{++},\:\forall
%t\geq 0
%\end{equation}
%and precisely one of the following three cases applies:
%\begin{itemize}
%\item[\textbf{(a)}] $\forall X_{0}\in\mathbb{S}_{++}$, the
%pull-back orbit $\eta^{b}_{X_{0}}$
%emanating from $X_{0}$ is unbounded.\\
%\item[\textbf{(b)}] $\forall X_{0}\in\mathbb{S}_{++}$, the
%pull-back orbit $\eta^{b}_{X_{0}}$ emanating from $X_{0}$ is
%bounded, but the closure of $\eta^{b}_{X_{0}}$ does
%not belong to $\mathbb{S}_{++}$.\\
%\item[\textbf{(c)}] there exists a unique almost equilibrium $u\in
%\mathbb{S}_{++}$ measurable w.r.t. $\mathcal{F}$, and
%$\forall~X_{0}\in\mathbb{S}_{++}$ the pull-back orbit emanating
%from $X_{0}$ converges to $u$, i.e.,
%\begin{equation}
%\label{step2}
%\lim_{t\rightarrow\infty}\varphi\left(t,\theta_{-t}\omega,X_{0}\right)=u(\omega),\:\mathbb{P}~
%a.s.
%\end{equation}
%\end{itemize}
%\end{lemma}

\begin{lemma}
\label{step} Consider the RDS $\left(\theta^{R},\varphi^{R}\right)$. Assume:
$\overline{\gamma}\in
(\overline{\gamma}^{\mbox{\scriptsize sb}},1]$; $Q$ positive
definite. Then there exists  unique almost equilibrium
$u^{\overline{\gamma}}(\omega)\gg 0$ defined on a
$\theta^{R}$-invariant set $\Omega^{\ast}\in\mathcal{F}$ with
$\mathbb{P}^{\overline{\gamma}}\left(\Omega^{\ast}\right)=1$ s.t.~for any random variable $v(\omega)$ possessing the property
$0\preceq v(\omega)\preceq\alpha u^{\overline{\gamma}}(\omega) \,\forall\, \omega\in\Omega^{\ast}$ and deterministic $\alpha>0$, the
following holds:
\begin{equation}
\label{step1}
\lim_{t\rightarrow\infty}\varphi^{R}
\left(t,\theta^{R}_{-t}\omega,v\left(\theta^{R}_{-t}\omega\right)
\right)=u^{\overline{\gamma}}(\omega),\:\omega\in\Omega^{\ast}
\end{equation}
\end{lemma}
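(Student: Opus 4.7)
The plan is to invoke the limit-set dichotomy Theorem~\ref{LSD} (Corollary 4.3.1 of~\cite{Chueshov}) directly to the RDS $(\theta^{R},\varphi^{R})$ and rule out its divergent alternative using the stochastic boundedness hypothesis $\overline{\gamma}>\overline{\gamma}^{\mbox{\scriptsize sb}}$. Once alternative (a) of Theorem~\ref{LSD} is excluded, the conclusion will be exactly alternative (b), which is precisely eqn.~(\ref{step1}) together with existence and uniqueness of the almost equilibrium $u^{\overline{\gamma}}$.

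First, I would verify that every hypothesis of Theorem~\ref{LSD} is in place. The ambient space $\mathbb{S}^{N}$ is a separable Banach space and the cone $\mathbb{S}^{N}_{+}$ is closed, convex, solid, normal, and minihedral, as recalled in Subsection~\ref{notprel}. The metric dynamical system $\theta^{R}$ built from the Bernoulli shift is ergodic, as noted in Section~\ref{RDS_form}. Lemma~\ref{prop_R} gives order-preserving and, crucially under the hypothesis $Q\gg 0$, strong sublinearity of $\varphi^{R}$. Conditional compactness is automatic because $\mathbb{S}^{N}_{+}$ is finite dimensional, per the remark after Definition~\ref{cond_comp}. Finally, the condition $\varphi^{R}(t,\omega,0)\gg 0$ for all $t>0$ and $\omega\in\Omega$ is immediate from the explicit form of the iterates in eqn.~(\ref{prop_R6}): one checks that $f_{\gamma}(X)\succeq Q\gg 0$ for every $\gamma\in\{0,1\}$ and $X\succeq 0$, and composition preserves this lower bound.

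Next, I would rule out alternative (a) of Theorem~\ref{LSD} by choosing the test point $X=0$ and exploiting the distributional identity of forward and pull-back orbits. By Lemma~\ref{orbit_lemma} (more precisely, the identity~(\ref{orbit1}) that underlies it), for every $t\in\mathbb{T}_{+}$,
\begin{equation}
\varphi^{R}(t,\theta^{R}_{-t}\omega,0)\ndtstile{}{d}\varphi^{R}(t,\omega,0),
\end{equation}
and the right-hand side is equal in distribution to the RRE iterate $P_{t}$ initialized at $P_{0}=0$. Since $\overline{\gamma}>\overline{\gamma}^{\mbox{\scriptsize sb}}$, this forward sequence is stochastically bounded, hence so is the pull-back sequence on the left-hand side. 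Almost-sure divergence in norm would imply divergence in probability, contradicting stochastic boundedness; therefore alternative (a) fails at $X=0$ and, by the dichotomy, alternative (b) must hold.

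The main subtlety I expect is precisely this reduction from the pathwise statement in alternative (a) to the distributional control furnished by stochastic boundedness; one must be careful that the pull-back orbits, while evaluated at the shifted sample point $\theta^{R}_{-t}\omega$, nevertheless share the one-dimensional marginals of the forward orbit (this is exactly the content of~(\ref{orbit1}), itself a consequence of $\theta^{R}_{t}\mathbb{P}^{\overline{\gamma}}=\mathbb{P}^{\overline{\gamma}}$). Given alternative (b), one extracts an almost equilibrium $u^{\overline{\gamma}}(\omega)\gg 0$ on a $\theta^{R}$-invariant full-measure set $\Omega^{\ast}$, together with the pull-back convergence statement in eqn.~(\ref{step1}) for any $v$ dominated by $\alpha u^{\overline{\gamma}}$, which is exactly what the lemma claims.
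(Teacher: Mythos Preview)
Your proposal is correct and follows essentially the same route as the paper: verify the hypotheses of Theorem~\ref{LSD} (cone properties, ergodicity of $\theta^{R}$, order-preserving and strong sublinearity from Lemma~\ref{prop_R}, conditional compactness by finite-dimensionality, and $\varphi^{R}(t,\omega,0)\succeq Q\gg 0$), then rule out alternative~(a) by combining the distributional identity between pull-back and forward orbits with stochastic boundedness. The paper carries out the contradiction for a generic $P_{0}$ rather than the specific choice $X=0$, but since alternative~(a) is a ``for all $X$'' statement, exhibiting failure at any single point suffices, and the underlying argument (a.s.\ divergence implies $\sup_{t}\mathbb{P}(\|P_{t}\|>N)=1$ for every $N$, contradicting tightness) is identical.
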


\begin{proof}
From Lemma~\ref{prop_R}, $\left(\theta^{R},\varphi^{R}\right)$ is
strongly sublinear  and order-preserving. It is conditionally
compact because the space $\mathbb{S}_{+}^{N}$ is finite
dimensional. Also, the cone $\mathbb{S}_{+}^{N}$ satisfies the
conditions required in the hypothesis of Theorem~\ref{LSD}. From
Lemma~1f) in~\cite{Bruno}, we note for $t>0$
\begin{eqnarray}
\label{step2} \varphi^{R}\left(t,\omega,0\right)
% & = &
=
f_{\omega(t-1)}\left(\varphi^{R}(t-1,\omega,0)\right)
%\nonumber \\  & \succeq &
\succeq  Q
%\nonumber \\ & \gg &
\gg
 0
\end{eqnarray}
Thus the hypotheses of Theorem~\ref{LSD} are satisfied and
precisely one of the assertions~\textbf{a)} or~\textbf{b)} holds.
We show assertion~\textbf{a)} does not hold. Assume
that~\textbf{a)} holds on the contrary. Then, there exists
$P_{0}\in\mathbb{S}_{+}^{N}$ such that
\begin{equation}
\label{step3}
\mathbb{P}^{\overline{\gamma}}\left(\lim_{t\rightarrow\infty}
\left\|\varphi^{R}\left(t,\theta^{R}_{-t}\omega,P_{0}\right)\right\|=
\infty\right)=1
\end{equation}
%Then by Fatou's lemma
%\begin{eqnarray}
%\label{step4}
%\liminf_{t\rightarrow\infty}\mathbb{E}\left[\left\|\varphi^{R}\left(t,\theta_{-t}\omega,X_{0}\right)\right\|\right]
%& \geq &
%\mathbb{E}\left[\lim_{t\rightarrow\infty}\left\|\varphi^{R}\left(t,\theta_{-t}\omega,X_{0}\right)\right\|\right]\nonumber
%\\ & = & \infty
%\end{eqnarray}
%Again, by Theorem 2 in~\cite{Bruno} there exists
%$M_{X_{0}}\in\mathbb{S}_{++}$ such that for all $t\geq 0$
%\begin{equation}
%\label{step5}
%\mathbb{E}\left[\varphi^{R}\left(t,\omega,X_{0}\right)\right]\preceq
%M_{X_{0}}
%\end{equation}
%Hence there exists $0<m_{X_{0}}<\infty$ such that
%\begin{equation}
%\label{step6}
%\mathbb{E}\left[\left\|\varphi^{R}\left(t,\omega,X_{0}\right)\right\|\right]\leq
%m_{X_{0}}<\infty
%\end{equation}
%However, by Lemma~\ref{orbit_lemma} and
%\begin{eqnarray}
%\label{step7}
%\limsup_{t\rightarrow\infty}\mathbb{E}\left[\left\|\varphi^{R}\left(t,\theta_{-t}\omega,X_{0}\right)\right\|\right]
%& = &
%\limsup_{t\rightarrow\infty}\mathbb{E}\left[\left\|\varphi^{R}\left(t,\omega,X_{0}\right)\right\|\right]\nonumber
%\\ & \leq & m_{X_{0}}
%\end{eqnarray}
Then, for every $N\in\mathbb{T}_{+}$, we have
\begin{equation}
\label{step4}
\lim_{t\rightarrow\infty}\left\|\varphi^{R}
\left(t,\theta^{R}_{-t}\omega,P_{0}\right)\right\|>N,
\:\:\mathbb{P}^{\overline{\gamma}}\:\mbox{a.s.}
\end{equation}
In other words, the sequence
$\left\{\mathbb{I}_{(N,\infty)}\left(\left\|\varphi^{R}
\left(t,\theta^{R}_{-t}\omega,P_{0}\right)
\right\|\right)\right\}_{t\in\mathbb{T}_{+}}$
satisfies
\begin{equation}
\label{step5}
\mathbb{P}^{\overline{\gamma}}\left(\lim_{t\rightarrow\infty}
\mathbb{I}_{(N,\infty)}\left(\left\|\varphi^{R}
\left(t,\theta^{R}_{-t}\omega,P_{0}\right)\right\|\right)=1\right)=1
\end{equation}
Since convergence a.s.~implies convergence in probability, we have
for every $\varepsilon>0$,
\begin{equation}
\label{step6}
\lim_{t\rightarrow\infty}\mathbb{P}^{\overline{\gamma}}
\left(\left|\mathbb{I}_{(N,\infty)}\left(\left\|\varphi^{R}
\left(t,\theta^{R}_{-t}\omega,P_{0}\right)\right\|\right)-
1\right|\leq\varepsilon\right)=1
\end{equation}
Since $\mathbb{I}_{(N,\infty)}(\cdot)$ takes the values
$\{0,1\}$, eqn.~(\ref{step6}) implies
\begin{equation}
\label{step7}
\lim_{t\rightarrow\infty}\mathbb{P}^{\overline{\gamma}}
\left(\mathbb{I}_{(N,\infty)}\left(\left\|\varphi^{R}
\left(t,\theta^{R}_{-t}\omega,P_{0}\right)\right\|\right)=1\right)=1
\end{equation}
We thus have
\begin{equation}
\label{step8}
\lim_{t\rightarrow\infty}\mathbb{P}^{\overline{\gamma}}
\left(\left\|\varphi^{R}\left(t,\theta^{R}_{-t}\omega,P_{0}\right)
\right\|>N\right)=1
\end{equation}
Since the above holds for every $N\in\mathbb{T}_{+}$, we have
\begin{equation}
\label{step9}\lim_{N\rightarrow\infty}\sup_{t\in\mathbb{T}_{+}}
\mathbb{P}^{\overline{\gamma}}\left(\left\|\varphi^{R}
\left(t,\theta^{R}_{-t}\omega,P_{0}\right)\right\|>N\right)=1
\end{equation}
On the other hand,
$\overline{\gamma}>\overline{\gamma}^{\mbox{\scriptsize sb}}$ and
Lemma~\ref{orbit_lemma} both imply
\begin{eqnarray}
\label{step10}
\lim_{N\rightarrow\infty}\sup_{t\in\mathbb{T}_{+}}\mathbb{P}^{\overline{\gamma}}
\left(\left\|\varphi^{R}\left(t,\theta^{R}_{-t}\omega,P_{0}\right)\right\|>N\right)
& = &
\lim_{N\rightarrow\infty}\sup_{t\in\mathbb{T}_{+}}\mathbb{P}^{\overline{\gamma}}
\left(\left\|\varphi^{R}\left(t,\omega,P_{0}\right)\right\|>N\right)\nonumber
\\ & = & \lim_{N\rightarrow\infty}\sup_{t\in\mathbb{T}_{+}}
\mathbb{P}^{\overline{\gamma},P_{0}}\left(\left\|P_{t}\right\|>N\right)\nonumber
\\ & = & 0
\end{eqnarray}
This contradicts eqn.~(\ref{step4}) and~\textbf{a)} does not hold. Thus~\textbf{b)}
holds, and we have the result.
\end{proof}

Lemma~\ref{step} establishes the existence of a unique almost
equilibrium $u^{\overline{\gamma}}$ if
$\overline{\gamma}^{\mbox{\scriptsize sb}}<\overline{\gamma}\leq 1$. From the
distributional equivalence of pull-back and forward orbits, it
follows that, for the Markov-Feller pair
$(L^{\overline{\gamma}},T^{\overline{\gamma}})$,
$T^{\overline{\gamma}}$ is uniquely ergodic. However, to show that
the measure induced by $u^{\overline{\gamma}}$ on
$\mathbb{S}^{N}_{+}$ is attracting for $T^{\overline{\gamma}}$,
eqn.~(\ref{step1}) must hold for all initial $v$.
Lemma~\ref{step} establishes convergence for a restricted class of
initial conditions $v$. We need %the following result
  to extend it to general initial conditions.
\begin{lemma}
\label{eq} For $\overline{\gamma}^{\mbox{\scriptsize sb}}<\overline{\gamma}\leq 1$
let $u^{\overline{\gamma}}$ be an almost equilibrium of the RDS
$\left(\theta^{R},\varphi^{R}\right)$. Then
\begin{equation}
\label{eq1}
\mathbb{P}^{\overline{\gamma}}\left(\omega:u^{\overline{\gamma}}(\omega)\succeq
Q\right)=1
\end{equation}
\end{lemma}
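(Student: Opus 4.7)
The plan is to use the equilibrium identity at $t=1$ together with the basic lower bound $f_\gamma(X) \succeq Q$ from Lemma 1f) of \cite{Bruno}, and then transfer the resulting a.s.\ inequality from $\theta^R_1\omega$ back to $\omega$ by invoking the measure-preserving property of $\theta^R_1$.

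More concretely, since $u^{\overline{\gamma}}$ is an almost equilibrium, there exists a $\theta^R$-invariant set $\Omega^{\ast}$ of full $\mathbb{P}^{\overline{\gamma}}$-measure on which
\begin{equation}
\varphi^{R}\bigl(1,\omega,u^{\overline{\gamma}}(\omega)\bigr)=u^{\overline{\gamma}}\bigl(\theta^{R}_{1}\omega\bigr).
\end{equation}
By the definition of $\varphi^R$ in eqn.~(\ref{RDS6}), the left-hand side equals $f_{\omega(0)}\bigl(u^{\overline{\gamma}}(\omega)\bigr)$. The same reasoning used to derive eqn.~(\ref{prop_R8}) shows that, in fact, for \emph{every} $X \in \mathbb{S}^N_+$ and $\gamma \in \{0,1\}$, $f_\gamma(X) \succeq Q$ (for $\gamma=0$ this is obvious from the definition of $f_0$; for $\gamma=1$ it follows from $X - XC^T(CXC^T+R)^{-1}CX \succeq 0$, which is a standard matrix inversion lemma consequence for $R\gg 0$). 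Therefore, on $\Omega^{\ast}$,
\begin{equation}
u^{\overline{\gamma}}\bigl(\theta^{R}_{1}\omega\bigr) = f_{\omega(0)}\bigl(u^{\overline{\gamma}}(\omega)\bigr) \succeq Q.
\end{equation}

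It follows that the event $\{\omega:\, u^{\overline{\gamma}}(\theta^R_1\omega)\succeq Q\}$ contains $\Omega^{\ast}$, and hence has $\mathbb{P}^{\overline{\gamma}}$-measure one. Now by Assumption~\textbf{A.3)} of the metric dynamical system, $\theta^R_1$ preserves $\mathbb{P}^{\overline{\gamma}}$, so the random variables $u^{\overline{\gamma}}(\omega)$ and $u^{\overline{\gamma}}(\theta^R_1\omega)$ have identical distributions on $\mathbb{S}^N_+$. Equivalently, applying the change of variables $\omega' = \theta^R_1\omega$,
\begin{equation}
\mathbb{P}^{\overline{\gamma}}\bigl(u^{\overline{\gamma}}(\omega)\succeq Q\bigr) = \mathbb{P}^{\overline{\gamma}}\bigl(u^{\overline{\gamma}}(\theta^R_1\omega)\succeq Q\bigr) = 1,
\end{equation}
which is precisely eqn.~(\ref{eq1}).

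This proof is essentially a one-step calculation; no serious obstacle arises. The only subtlety worth flagging is to make sure that the lower bound $f_\gamma(X)\succeq Q$ is invoked as a \emph{pointwise in $X$} statement (stronger than what is written as eqn.~(\ref{prop_R8})), so that it applies to the \emph{random} argument $u^{\overline{\gamma}}(\omega)$; this is immediate from the algebraic identity above and requires no additional hypothesis beyond $R\gg 0$, which is part of the standing assumptions of the paper.
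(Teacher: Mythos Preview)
Your proof is correct and follows essentially the same approach as the paper's own proof: both apply the equilibrium identity at $t=1$, invoke the pointwise lower bound $f_{\gamma}(X)\succeq Q$ (Lemma~1f) in~\cite{Bruno}) to conclude $u^{\overline{\gamma}}(\theta^{R}_{1}\omega)\succeq Q$ a.s., and then use the measure-preserving property of $\theta^{R}_{1}$ to transfer this to $u^{\overline{\gamma}}(\omega)$. Your remark that the bound must be applied pointwise in $X$ (not just at $X=0$ as in eqn.~(\ref{prop_R8})) is a useful clarification but is exactly what the paper also uses via the cited lemma.
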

\begin{proof}
By the definition of an almost equilibrium (see
Definition~\ref{equilibrium}) we have
\begin{eqnarray}
\label{eq2}
\mathbb{P}^{\overline{\gamma}}\left(\omega:\varphi^{R}
\left(1,\omega,u^{\overline{\gamma}}(\omega)\right)\succeq
Q\right)
%& = &
=
\mathbb{P}^{\overline{\gamma}}
\left(\omega:u^{\overline{\gamma}}(\theta_{1}^{R\overline{\gamma}}\omega)\succeq
Q\right)
%\nonumber \\ & = &
=
\mathbb{P}^{\overline{\gamma}}\left(\omega:u^{\overline{\gamma}}(\omega)\succeq
Q\right)
\end{eqnarray}
Again, by Lemma~1f) in~\cite{Bruno}, we have
$\mathbb{P}^{\overline{\gamma}}$~a.s.
\begin{eqnarray}
\label{eq3}
\varphi^{R}\left(1,\omega,u^{\overline{\gamma}}(\omega)\right)
% & = &
 =
  f_{\omega(0)}(u^{\overline{\gamma}}(\omega))
  %\nonumber \\  & \succeq &
\succeq    Q
\end{eqnarray}
Since eqn.~(\ref{eq3}) holds
$\mathbb{P}^{\overline{\gamma}}$~a.s., we have
\begin{equation}
\label{eq4}
\mathbb{P}^{\overline{\gamma}}\left(\omega:\varphi^{R}
\left(1,\omega,u^{\overline{\gamma}}(\omega)\right)\succeq
Q\right)=1
\end{equation}
The Lemma then follows from eqns.~(\ref{eq2},\ref{eq4}).
\end{proof}
\begin{proof}[Proof of Theorem~\ref{main_th}] We now complete the proof of Theorem~\ref{main_th}. The key step consists of finding a suitable modification $\widetilde{X}(\omega)$ of the initial condition $P_{0}$, such that $\widetilde{X}(\omega)=P_{0}$ a.s. and there exists a deterministic $\alpha>0$ satisfying $0\preceq\widetilde{X}(\omega)\preceq\alpha u^{\overline{\gamma}}(\omega)$. In that case, we can invoke Lemma~\ref{step} to establish weak convergence of the sequence $\{\varphi^{R}\left(t,\omega,\widetilde{X}(\omega)\right)\}_{t\in\mathbb{T}_{+}}$ with initial condition $\widetilde{X}(\omega)$ to $\mathbb{\mu}^{\overline{\gamma}}$. Since $\widetilde{X}(\omega)$ is a.s. equal to $P_{0}$, this would allow us to deduce the weak convergence of the desired sequence $\{\varphi^{R}\left(t,\omega,P_{0}\right)\}_{t\in\mathbb{T}_{+}}$. We detail such a construction in the following.

For $\overline{\gamma}>\overline{\gamma}^{\mbox{\scriptsize{sb}}}$ (note that $\overline{\gamma}^{\mbox{\scriptsize{sb}}}=0$ under the assumptions of Theorem~\ref{main_res}) let $\mu^{\overline{\gamma}}$
be the distribution of the unique almost equilibrium in
Lemma~\ref{step}. By Lemma~\ref{eq} we have
\begin{equation}
\label{main_res3}
\mu^{\overline{\gamma}}\left(\mathbb{S}_{++}^{N}\right)=1
\end{equation}
since $u^{\overline{\gamma}}(\omega)\preceq Q\gg 0$ a.s.
Let $P_{0}\in\mathbb{S}_{+}^{N}$ be an arbitrary initial state. By
construction of the RDS $\left(\theta^{R},\varphi^{R}\right)$, the sequences
$\left\{P_{t}\right\}_{t\in\mathbb{T}_{+}}$ and
$\{\varphi^{R}\left(t,\omega,P_{0}\right)\}_{t\in\mathbb{T}_{+}}$
are distributionally equivalent, i.e.,
\begin{equation}
\label{main_res4}
P_{t}\ndtstile{}{d}\varphi^{R}\left(t,\omega,P_{0}\right)
\end{equation}
Recall $\Omega^{\ast}$ as the $\theta^{R}$-invariant set with
$\mathbb{P}^{\overline{\gamma}}(\Omega^{\ast})=1$ in
Lemma~\ref{step} on which the almost equilibrium
$u^{\overline{\gamma}}$ is defined. By Lemma~\ref{eq}, there
exists $\Omega_{1}\subset\Omega^{\ast}$ with
$\mathbb{P}^{\overline{\gamma}}(\Omega_{1})=1$, such that
\begin{equation}
\label{main_res5} u^{\overline{\gamma}}(\omega)\succeq
Q,\:\omega\in\Omega_{1}
\end{equation}
Define the random variable
$\widetilde{X}:\Omega\longmapsto\mathbb{S}_{+}^{N}$ by
\begin{equation}
\label{main_res6} \left\{ \begin{array}{ll}
                    P_{0} & \mbox{if $\omega\in\Omega_{1}$} \\
                    0 & \mbox{if $\omega\in\Omega_{1}^{c}$}
                   \end{array}
          \right.
\end{equation}
Now choose $\alpha>0$ sufficiently large, such that,
\begin{equation}
\label{main_res7} P_{0}\preceq\alpha Q
\end{equation}
This is possible because $Q\gg 0$. Then
\begin{equation}
\label{main_res8} 0\preceq\widetilde{X}(\omega)\preceq\alpha
u^{\overline{\gamma}}(\omega),\:\omega\in\Omega^{\ast}
\end{equation}
Indeed, we have
\begin{eqnarray}
\label{main_res9} 0\preceq
P_{0}&=&\widetilde{X}(\omega)\preceq\alpha Q\preceq\alpha
u^{\overline{\gamma}}(\omega),\:\omega\in\Omega_{1}\\
%\end{equation}
%and
%\begin{equation}
\label{main_res10} 0=\widetilde{X}(\omega)&\preceq& \alpha
u^{\overline{\gamma}}(\omega),\:\omega\in\Omega\backslash\Omega_{1}
\end{eqnarray}
Then, by Lemma~\ref{step}
\begin{equation}
\label{main_res11}
\lim_{t\rightarrow\infty}\varphi^{R}\left(t,\theta^{R}_{-t}\omega,\widetilde{X}
\left(\theta^{R}_{-t}\omega\right)\right)=
u^{\overline{\gamma}}(\omega),\:\omega\in\Omega^{\ast}
\end{equation}
Since convergence $\mathbb{P}^{\overline{\gamma}}$~a.s.~implies
convergence in distribution, we have
\begin{equation}
\label{main_res12}
\varphi^{R}\left(t,\theta^{R}_{-t}\omega,\widetilde{X}
\left(\theta^{R}_{-t}\omega\right)\right)
\Longrightarrow\mu^{\overline{\gamma}}
\end{equation}
as $t\rightarrow\infty$, where $\Longrightarrow$ denotes weak
convergence or convergence in distribution. Then by
Lemma~\ref{orbit_lemma}, the sequence
$\left\{\varphi^{R}\left(t,\omega,\widetilde{X}(\omega)\right)
\right\}_{t\in\mathbb{T}_{+}}$
also converges in distribution to the unique stationary
distribution $\mu^{\overline{\gamma}}$, i.e., as
$t\rightarrow\infty$
\begin{equation}
\label{main_res13}
\varphi^{R}\left(t,\omega,\widetilde{X}(\omega)
\right)\Longrightarrow\mu^{\overline{\gamma}}
\end{equation}

Now, since $\mathbb{P}^{\overline{\gamma}}(\Omega_{1})=1$, by
eqn.~(\ref{main_res6})
\begin{equation}
\label{main_res14}
\varphi^{R}\left(t,\omega,P_{0}\right)=
\varphi^{R}\left(t,\omega,\widetilde{X}(\omega)\right),
\:\mathbb{P}^{\overline{\gamma}}\:a.s.,\:t\in\mathbb{T}_{+}
\end{equation}
which implies
\begin{equation}
\label{main_res15}
\varphi^{R}\left(t,\omega,P_{0}\right)
\ndtstile{}{d}\varphi^{R}\left(t,\omega,\widetilde{X}(\omega)\right),
\:t\in\mathbb{T}_{+}
\end{equation}
From eqns.~(\ref{main_res13},\ref{main_res15}), we then have as
$t\rightarrow\infty$
\begin{equation}
\label{main_res16a}
\varphi^{R}\left(t,\omega,P_{0}\right)\Longrightarrow\mu^{\overline{\gamma}}
\end{equation}
which together with eqn.~(\ref{main_res4}) implies
\begin{equation}
\label{main_res16} P_{t}\Longrightarrow\mu^{\overline{\gamma}}
\end{equation}
as $t\rightarrow\infty$. This completes part~i) of
Theorem~\ref{main_th}.

For part~ii), we note that, if
$\overline{\gamma}>\overline{\gamma}^{\mbox{\scriptsize bim}}$, by~\cite{Bruno} (see
Result~\ref{bruno_res}), there exists
$M^{\overline{\gamma},P_{0}}$, such that
\begin{equation}
\label{main_res17}
\sup_{t\in\mathbb{T}_{+}}\mathbb{E}^{\overline{\gamma},P_{0}}[P_{t}]\preceq
M^{\overline{\gamma},P_{0}}
\end{equation}
From eqn.~(\ref{main_res11}), we have by Fatou's lemma
\begin{equation}
\label{main_res18}
\liminf_{t\rightarrow\infty}\mathbb{E}^{\overline{\gamma}}
\left[\left\|\varphi^{R}\left(t,\theta^{R}_{-t}\omega,\widetilde{X}
\left(\theta^{R}_{-t}\omega\right)\right)\right\|\right]
\geq
\mathbb{E}^{\overline{\gamma}}\left[\left\|u(\omega)\right\|\right]
\end{equation}
Using Lemma~\ref{orbit_lemma} and standard manipulations
{\small
\begin{eqnarray}
\label{main_res19}
\int_{S^{n}_{+}}\left\|Y\right\|\mu^{\overline{\gamma}}(dY) & = &
\mathbb{E}^{\overline{\gamma}}\left[\left\|u(\omega)\right\|\right]
\nonumber
% \\ & \leq &
\leq
\liminf_{t\rightarrow\infty}\mathbb{E}^{\overline{\gamma}}
\left[\left\|\varphi^{R}\left(t,\theta^{R}_{-t}\omega,\widetilde{X}
\left(\theta^{R}_{-t}\omega\right)\right)\right\|\right]\nonumber
\\ & = & \liminf_{t\rightarrow\infty}\mathbb{E}^{\overline{\gamma}}
\left[\left\|\varphi^{R}\left(t,\omega,\widetilde{X}(\omega)
\right)\right\|\right]\nonumber
%\\ & = &
 =  \liminf_{t\rightarrow\infty}\mathbb{E}^{\overline{\gamma}}
\left[\left\|\varphi^{R}\left(t,\omega,P_{0}\right)\right\|\right]\nonumber
\\ & = &
\liminf_{t\rightarrow\infty}\mathbb{E}^{\overline{\gamma}}
\left[\left\|P_{t}\right\|\right]\nonumber
%\\ & \leq &
\leq \liminf_{t\rightarrow\infty}\mathbb{E}^{\overline{\gamma}}
\left[\mbox{Tr}\,\left(P_{t}\right)\right]\leq
\mbox{Tr}\,M^{\overline{\gamma},P_{0}}
%& \\ & < &
<
\infty
\end{eqnarray}
} which establishes part ii).
\end{proof}

\subsection{Proof of Theorem~\ref{supp_inv}}
\label{proof_supp_inv} We state a result on the properties of
invariant probabilities of Markov-Feller operators needed for the
proof. Consider a locally compact separable metric space
$(\mathcal{X},d)$ and define the topological lower limit of a
sequence $\left\{B_{n}\right\}_{n\in\mathbb{T}_{+}}$ of subsets of
$\mathcal{X}$ by
\begin{equation}
\label{supp_inv20}
\mbox{Li}_{n\rightarrow\infty}B_{n}=\left\{x\in\mathcal{X}
\left|\right.\exists\:\mbox{a sequence}\:\left\{x_{n}\right\},\:\mbox{s.t.}\: x_{n}\in B_{n},\:\forall n,\:\mbox{and $\left\{x_{n}\right\}$ converges to $x$}\right\}
\end{equation}
which, by definition, is closed. We then have the following result
from~\cite{Zaharopol}.
\begin{theorem}[Theorem 1.3.1~\cite{Zaharopol}]
\label{th_Zaharopol} Let $(L,T)$ be a Markov-Feller pair defined
on $(\mathcal{X},d)$. For all $x\in\mathcal{X}$, consider the
sequence of measures $\left\{T^{t}\delta_{x}\right\}_{t\in\mathbb{T}_{+}}$
and define
\begin{eqnarray}
\label{th_Zaharopol1}
\sigma(x)&=&\mbox{Li}_{t\rightarrow\infty}\mbox{supp}\left(T^{t}\delta_{x}\right)\\
%\end{equation}
%and
%\begin{equation}
\label{th_Zaharopol2} \sigma &=& \cap_{x\in\mathcal{X}}\sigma(x)
\end{eqnarray}
Then, if $(L,T)$ has an attractive probability $\mu$, we have
\begin{equation}
\label{th_Zaharopol3} \mbox{supp}(\mu)=\sigma
\end{equation}
\end{theorem}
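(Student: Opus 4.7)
The plan is to establish the two inclusions $\mbox{supp}(\mu)\subseteq\sigma$ and $\sigma\subseteq\mbox{supp}(\mu)$ using three ingredients: (i) the weak convergence $T^{t}\delta_{x}\Longrightarrow\mu$ for every $x\in\mathcal{X}$, which is the defining property of the attractive probability $\mu$; (ii) the invariance $T\mu=\mu$, a standard consequence of $\mu$ being attractive for a Markov--Feller pair; and (iii) the Portmanteau characterization of weak convergence.

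For $\mbox{supp}(\mu)\subseteq\sigma$: Fix $y\in\mbox{supp}(\mu)$ and an arbitrary $x\in\mathcal{X}$. For every $\varepsilon>0$ we have $\mu(B_{\varepsilon}(y))>0$, and the open-set form of the Portmanteau theorem yields
\[
\liminf_{t\to\infty}(T^{t}\delta_{x})(B_{\varepsilon}(y))\;\geq\;\mu(B_{\varepsilon}(y))\;>\;0.
\]
Hence, for each $n\in\mathbb{N}$ there exists $t_{n}$ such that $B_{1/n}(y)\cap\mbox{supp}(T^{t}\delta_{x})\neq\emptyset$ whenever $t\geq t_{n}$. A diagonal selection produces a sequence $\{y_{t}\}$ with $y_{t}\in\mbox{supp}(T^{t}\delta_{x})$ and $y_{t}\to y$, certifying $y\in\sigma(x)$. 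Since $x$ was arbitrary, $y\in\bigcap_{x}\sigma(x)=\sigma$.

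The reverse inclusion is where I expect the main obstacle: weak convergence $T^{t}\delta_{x}\Longrightarrow\mu$ alone does not preclude $\mbox{supp}(T^{t}\delta_{x})$ from clustering at points outside $\mbox{supp}(\mu)$ (vanishing mass can accumulate on closed sets of $\mu$-measure zero), so $\sigma(x)$ can be strictly larger than $\mbox{supp}(\mu)$ for individual $x$. The proof must exploit the intersection over all $x$ in conjunction with the invariance of $\mu$ in order to rule out such pathological accumulation.

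For $\sigma\subseteq\mbox{supp}(\mu)$: Let $y\notin\mbox{supp}(\mu)$, and choose $\varepsilon>0$ with $\mu(B_{\varepsilon}(y))=0$. Iterating invariance gives $T^{t}\mu=\mu$ for every $t\in\mathbb{T}_{+}$, and the disintegration identity $(T^{t}\mu)(B)=\int_{\mathcal{X}}(T^{t}\delta_{x})(B)\,\mu(dx)$, obtained by induction from the Markov representation applied to $\mu=\int\delta_{x}\,\mu(dx)$, yields
\[
\int_{\mathcal{X}}(T^{t}\delta_{x})(B_{\varepsilon}(y))\,\mu(dx)\;=\;\mu(B_{\varepsilon}(y))\;=\;0
\]
for every $t$. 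Consequently $N_{t}=\{x:(T^{t}\delta_{x})(B_{\varepsilon}(y))>0\}$ is $\mu$-null, and so is the countable union $N=\bigcup_{t\in\mathbb{T}_{+}}N_{t}$. Pick any $x_{0}\in\mathcal{X}\setminus N$, which is nonempty because $\mu$ is a probability measure. Then $(T^{t}\delta_{x_{0}})(B_{\varepsilon}(y))=0$ for all $t$, so $\mbox{supp}(T^{t}\delta_{x_{0}})\subseteq\mathcal{X}\setminus B_{\varepsilon}(y)$ for all $t$. Since $\mathcal{X}\setminus B_{\varepsilon}(y)$ is closed and does not contain $y$, any convergent sequence $y_{t}\in\mbox{supp}(T^{t}\delta_{x_{0}})$ has its limit outside $B_{\varepsilon}(y)$. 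Therefore $y\notin\sigma(x_{0})$, which gives $y\notin\sigma$ and completes the argument.
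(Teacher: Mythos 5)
Your argument is correct, and it is worth noting that the paper itself offers no proof to compare against: this statement is imported verbatim as Theorem~1.3.1 of the cited monograph of Zaharopol, so what you have produced is a self-contained proof of an external result. Both inclusions are sound. For $\mbox{supp}(\mu)\subseteq\sigma$, the Portmanteau lower bound on open balls plus the fact that a finite Borel measure on a separable metric space assigns zero mass to the complement of its support (so positive mass on an open ball forces the ball to meet the support) gives exactly the diagonal sequence you need, and your construction respects the definition of the topological lower limit. For $\sigma\subseteq\mbox{supp}(\mu)$, the combination of invariance $T^{t}\mu=\mu$, the disintegration $(T^{t}\mu)(B)=\int(T^{t}\delta_{x})(B)\,\mu(dx)$, and the countable union of null sets correctly produces a single $x_{0}$ with $\mbox{supp}(T^{t}\delta_{x_{0}})\cap B_{\varepsilon}(y)=\emptyset$ for all $t$, which rules out $y\in\sigma(x_{0})$; your observation that the intersection over $x$ (not just one fixed $x$) is what makes this direction work is apt. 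The only point you gloss over is the disintegration identity itself: in the paper's abstract formulation a Markov--Feller pair is an operator pair satisfying $\langle Lf,\mu\rangle=\langle f,T\mu\rangle$, not a priori a transition kernel, so one should verify that $Q_{t}(x,B)=(T^{t}\delta_{x})(B)$ is Borel measurable in $x$ and that $T^{t}\mu=\int Q_{t}(x,\cdot)\,\mu(dx)$. This follows because $x\mapsto\langle f,T^{t}\delta_{x}\rangle=(L^{t}f)(x)$ is continuous for $f\in C_{b}(\mathcal{X})$, open sets can be monotonically approximated by such $f$ (giving lower semicontinuity of $x\mapsto(T^{t}\delta_{x})(U)$ and measurability for general Borel sets by a monotone class argument), and two finite Borel measures on a metric space agreeing on $C_{b}(\mathcal{X})$ coincide; in the concrete setting of the paper the operator $T^{\overline{\gamma}}$ is defined directly from the kernel $\mathbb{Q}^{\overline{\gamma}}$, so there the identity is immediate. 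With that routine measurability remark added, your proof is complete.
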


We now complete the proof of Theorem~\ref{supp_inv}.

\begin{proof}[Proof of Theorem~\ref{supp_inv}]
Fix $\overline{\gamma}^{\mbox{\scriptsize sb}}<\overline{\gamma}<1$ and recall the Markov-Feller pair $\left(L^{\overline{\gamma}},T^{\overline{\gamma}}\right)$
(eqns.~(\ref{def_Lgamma},\ref{def_Tgamma}) and
Proposition~\ref{prop_MarkFell}.) By Theorem~\ref{main_th},
$\mu^{\overline{\gamma}}$ is an attractive probability for the
pair $\left(L^{\overline{\gamma}},T^{\overline{\gamma}}\right)$. We now use
Theorem~\ref{th_Zaharopol} to obtain the support of
$\mu^{\overline{\gamma}}$.

For $X\in\mathbb{S}^{N}_{+}$, let
\begin{equation}
\label{supp_inv21}
\sigma(X)=\mbox{Li}_{t\rightarrow\infty}\mbox{supp}
\left(\left(T^{\overline{\gamma}}\right)^{t}\delta_{X}\right)
\end{equation}
Then
\begin{equation}
\label{supp_inv23}
\mbox{supp}\left(\mu^{\overline{\gamma}}\right)=
\cap_{X\in\mathbb{S}^{N}_{+}}\sigma(X)
\end{equation}
We first show that
\begin{equation}
\label{supp_inv24}
\mbox{cl}(\mathcal{S})\subset\mbox{supp}\left(\mu^{\overline{\gamma}}\right)
\end{equation}
where
\begin{equation}
\label{supp_inv25} \mathcal{S}=\left\{f_{i_{1}}\circ
f_{i_{2}}\circ\cdots\circ
f_{i_{s}}\left(P^{\ast}\right)\left|\right.i_{r}\in\{0,1\},\:1\leq r\leq
s,\:s\in\mathbb{T}_{+}\right\}
\end{equation}

To this end, consider $X\in\mathbb{S}_{+}^{N}$. It follows from
the properties of $T^{\overline{\gamma}}$ that
\begin{equation}
\label{supp_inv22}
\mbox{supp}\left(\left(T^{\overline{\gamma}}\right)^{t}\delta_{X}\right)=
\left\{f_{i_{1}}\circ f_{i_{2}}\circ\cdots\circ
f_{i_{t}}\left(X\right)\left|\right.i_{r}\in\{0,1\},\:1\leq r\leq
t\right\}
\end{equation}
Indeed, starting from $X$, the only set of points %that can be
reached with non-zero probability (and exhaustively) are the ones
obtained by applying $t$ arbitrary compositions of $f_{0}$ and
$f_{1}$ on $X$. Since a set with finite cardinality is closed, we
have the R.H.S. of eqn.~(\ref{supp_inv22}) as the support of
$\left(T^{\overline{\gamma}}\right)^{t}$.

Recall $P^{\ast}\in\mathbb{S}^{N}_{++}$ to be the deterministic
fixed point of the algebraic Riccati equation. Consider the
sequence $\left\{f_{1}^{t}(X)\right\}_{t\in\mathbb{T}_{+}}$ in
$\mathbb{S}^{N}_{+}$. It follows from eqn.~(\ref{supp_inv22}) that
\begin{equation}
\label{supp_inv26} f_{1}^{t}(X)\in
\mbox{supp}\left(\left(T^{\overline{\gamma}}\right)^{t}\delta_{X}\right),\:\forall
t
\end{equation}
Also, from the properties of the algebraic Riccati equation, we
have
\begin{equation}
\label{supp_inv27} \lim_{t\rightarrow\infty}f_{1}^{t}(X)=P^{\ast}
\end{equation}
Hence, by the definition of topological lower limit,
\begin{equation}
\label{supp_inv28} P^{\ast}\in\sigma(X)
\end{equation}
We now show that, for every $s\in\mathbb{T}_{+}$ and
$i_{r}\in\{0,1\},\:1\leq r\leq s$,
\begin{equation}
\label{supp_inv29} f_{i_{1}}\circ f_{i_{2}}\circ\cdots\circ
f_{i_{s}}\left(P^{\ast}\right)\in\sigma(X)
\end{equation}
Indeed, define the sequence, $\left\{X_{t}\right\}_{t\in\mathbb{T}_{+}}$ as
\begin{equation}
\label{supp_inv30} X_{t}=\left\{ \begin{array}{ll}
                    f_{i_{1}}\circ f_{i_{2}}\circ\cdots\circ
f_{i_{t}}\left(X\right) & \mbox{if $t\leq s$} \\
                    f_{i_{1}}\circ f_{i_{2}}\circ\cdots\circ
f_{i_{s}}\circ f_{1}^{t-s}(X) & \mbox{if $t>s$}
                   \end{array}
          \right.
\end{equation}
Clearly,
$X_{t}\in\mbox{supp}\left(\left(T^{\overline{\gamma}}\right)^{t}\delta_{X}\right),
\:\forall t$. Since the sequence $\left\{f_{1}^{t-s}(X)\right\}_{t>s}$ converges to $P^{\ast}$, we have
\begin{eqnarray}
\label{supp_inv31} \lim_{t\rightarrow\infty}X_{t} & = &
\lim_{t>s,t\rightarrow\infty}X_{t}\nonumber %\\ & = &
=
\lim_{t>s,t\rightarrow\infty}f_{i_{1}}\circ
f_{i_{2}}\circ\cdots\circ
f_{i_{s}}\left(f_{1}^{t-s}(X)\right)\nonumber \\ & = &
f_{i_{1}}\circ f_{i_{2}}\circ\cdots\circ
f_{i_{s}}\left(\lim_{t>s,t\rightarrow\infty}f_{1}^{t-s}(X)\right)\nonumber
%\\ & = &
=  f_{i_{1}}\circ f_{i_{2}}\circ\cdots\circ
f_{i_{s}}\left(P^{\ast}\right)
\end{eqnarray}
where the continuity of $f_{i_{1}}\circ f_{i_{2}}\circ\cdots\circ
f_{i_{s}}$ (being the composition of continuous functions,
Proposition~\ref{prop1}) permits bringing the limit inside.

Thus, the sequence $\left\{X_{t}\right\}_{t\in\mathbb{T}_{+}}$
converges to $f_{i_{1}}\circ f_{i_{2}}\circ\cdots\circ
f_{i_{s}}\left(P^{\ast}\right)$ and
$X_{t}\in\mbox{supp}\left(\left(T^{\overline{\gamma}}\right)^{t}\delta_{X}\right),
\:\forall
t$. Hence, $f_{i_{1}}\circ f_{i_{2}}\circ\cdots\circ
f_{i_{s}}\left(P^{\ast}\right)\in\sigma(X)$. It then follows
\begin{equation}
\label{supp_inv32} \mathcal{S}\subset\sigma(X),\:\forall
X\in\mathbb{S}_{+}^{N}
\end{equation}
Since the set $\sigma(X)$ is closed, we have
\begin{equation}
\label{supp_inv33}
\mbox{cl}\left(\mathcal{S}\right)\subset\sigma(X),\:\forall
X\in\mathbb{S}_{+}^{N}
\end{equation}
which implies by eqn.~(\ref{supp_inv23})
\begin{equation}
\label{supp_inv34}
\mbox{cl}(\mathcal{S})\subset\mbox{supp}\left(\mu^{\overline{\gamma}}\right)
\end{equation}
To obtain the reverse inclusion, we note that
\begin{eqnarray}
\label{supp_inv35} \sigma\left(P^{\ast}\right) %& = &
=
\mbox{Li}_{t\rightarrow\infty}\mbox{supp}
\left(\left(T^{\overline{\gamma}}\right)^{t}
\delta_{P^{\ast}}\right)\nonumber
%\\ & \subset &
\subset
\mbox{cl}\left(\cup_{t\in\mathbb{T}_{+}}
\mbox{supp}\left(\left(T^{\overline{\gamma}}\right)^{t}
\delta_{P^{\ast}}\right)\right)\nonumber
%\\ & = &
= \mbox{cl}(\mathcal{S})
\end{eqnarray}
Here the first step follows from the fact that, if
$Y\in\mbox{Li}_{t\rightarrow\infty}\mbox{supp}
\left(\left(T^{\overline{\gamma}}\right)^{t}
\delta_{P^{\ast}}\right)$,
then $Y$ is a limit point of
$\cup_{t\in\mathbb{T}_{+}}\mbox{supp}\left(\left(T^{\overline{\gamma}}\right)^{t}
\delta_{P^{\ast}}\right)$
and hence belongs to its closure. The last step is obvious from
eqn.~(\ref{supp_inv22}).
We thus have
\begin{eqnarray}
\label{supp_inv36} \mbox{supp}\left(\mu^{\overline{\gamma}}\right)
%& \subset &
 \subset
 \sigma\left(P^{\ast}\right)%\nonumber \\ & \subset &
 \subset
  \mbox{cl}(\mathcal{S})
\end{eqnarray}
which establishes the other inclusion and we have
\begin{equation}
\label{supp_inv37}
\mbox{supp}\left(\mu^{\overline{\gamma}}\right)=\mbox{cl}(\mathcal{S})
\end{equation}

It remains to establish eqn.~(\ref{main_th1}). To this end, we
first show that
\begin{equation}
\label{supp_inv38} f_{i_{1}}\circ f_{i_{2}}\circ\cdots\circ
f_{i_{s}}\left(P^{\ast}\right)\succeq P^{\ast}\:\: \forall\,\,s\in\mathbb{T}_{+},\,\, i_{r}\in\{0,1\},\,\,1\leq r\leq s
\end{equation}
We prove this by an inductive argument on $s$. Clearly, the above
holds for $s=1$, as
\begin{equation}
\label{supp_inv39} f_{0}\left(P^{\ast}\right)=AP^{\ast}A^{T}+Q\succeq
P^{\ast}
\end{equation}
($A$ is unstable and $Q>> 0$) and
\begin{equation}
\label{supp_inv40} f_{1}\left(P^{\ast}\right)=P^{\ast}
\end{equation}
Assume the claim holds for $s\leq t$. We now show  it
holds for $s=t+1$. By the induction step,
\begin{equation}
\label{supp_inv41} f_{i_{2}}\circ\cdots\circ
f_{i_{t+1}}\left(P^{\ast}\right)\succeq P^{\ast}
\end{equation}
If $i_{1}=0$
\begin{eqnarray}
\label{supp_inv42a} f_{i_{1}}\circ f_{i_{2}}\circ\cdots\circ
f_{i_{t+1}}\left(P^{\ast}\right)
%& = &
=
 f_{0}\left(f_{i_{2}}\circ\cdots\circ
f_{i_{t+1}}\left(P^{\ast}\right)\right)
%\nonumber \\ & \succeq &
\succeq
f_{0}\left(P^{\ast}\right)
%\nonumber \\ & \succeq &
\succeq
P^{\ast}
\end{eqnarray}
which follows from the order preserving property of $f_{1}$ and
eqn.~(\ref{supp_inv39}).
If $i_{1}=1$
\begin{eqnarray}
\label{supp_inv42} f_{i_{1}}\circ f_{i_{2}}\circ\cdots\circ
f_{i_{t+1}}\left(P^{\ast}\right)
%& = &
= f_{1}\left(f_{i_{2}}\circ\cdots\circ
f_{i_{t+1}}\left(P^{\ast}\right)\right)
%\nonumber \\ & \succeq &
\succeq
f_{1}\left(P^{\ast}\right)\nonumber
%\\ & = &
=
 P^{\ast}
\end{eqnarray}
which follows from the order preserving property of $f_{0}$ and
eqn.~(\ref{supp_inv40}).

We thus have for $i_{r}\in\{0,1\},\:1\leq r\leq t+1$
\begin{equation}
\label{supp_inv43} f_{i_{1}}\circ f_{i_{2}}\circ\cdots\circ
f_{i_{t+1}}\left(P^{\ast}\right)\succeq P^{\ast}
\end{equation}
and the claim in eqn.~(\ref{supp_inv38}) follows.

To complete the proof, we note from the above,
\begin{equation}
\label{supp_inv44} \mathcal{S}\subset \left[P^{\ast},\infty\right)
\end{equation}
Since the conic interval $\left[P^{\ast},\infty\right)$ is closed, we have
\begin{equation}
\label{supp_inv45} \mbox{cl}(\mathcal{S})\subset \left[P^{\ast},\infty\right)
\end{equation}
and eqn.~(\ref{main_th1}) follows as
\begin{equation}
\label{supp_inv46}
\mu^{\overline{\gamma}}\left(\mbox{cl}(\mathcal{S})\right)=1
\end{equation}
$\mbox{cl}(\mathcal{S})$ being the support of
$\mu^{\overline{\gamma}}$.

\end{proof}

\section{A scalar example and numerical studies}
\subsection{Scalar Example} \label{scal_num} We investigate in detail a scalar
system, for which we qualitatively characterize the structure of
the support of the invariant distributions. In the general
(non-scalar) case, Theorem~\ref{supp_inv} explicitly characterizes
the support set of the invariant distributions and shows, in
particular, that $\mbox{supp}\left(\mu^{\overline{\gamma}}\right)$ is
independent of $\overline{\gamma}$ as long as
$\overline{\gamma}^{\mbox{\scriptsize sb}}<\overline{\gamma}<1$. In this section, by
studying a scalar example, we show that the set $\mathcal{S}$ is,
in general, not a dense subset of the conic interval
$\left[P^{\ast},\infty\right)$. In fact, for the example we consider, the
support set $\mathcal{S}$ is a highly fractured subset of the
interval $\left[P^{\ast},\infty\right)$ with a self-similar structure (to be
explained below), thus exhibiting fractal-like properties.

Consider the scalar system, $A=\sqrt{2}$, $C=Q=R=1$. The
functions $f_{0},f_{1}$ then reduce to
\begin{equation}
\label{scal_num1} f_{0}(X)=2X+1
\end{equation}
\begin{equation}
\label{scal_num2} f_{1}(X)=3-\frac{2}{X+1}
\end{equation}
and the fixed point of the algebraic Riccati equation is given by
\begin{equation}
\label{scal_num3} P^{\ast}=1+\sqrt{2}
\end{equation}
The next proposition shows that $\mbox{supp}\left(\mu^{\overline{\gamma}}\right)$ is not dense in
$[1+\sqrt{2},\infty)$ and exhibits self-similarity.
\begin{proposition}
\label{prop_scal} Define %the set $\mathcal{S}_{0}$ by
%\begin{equation}
%\label{prop_scal1}
$\mathcal{S}_{0}=\mbox{cl}\,(\mathcal{S})\cap [1+\sqrt{2},3]$
%\end{equation}
  %the sets $\{\mathcal{S}_{n}\}_{n\geq 1}$
 and, recursively, % the sets
%as:
%\begin{equation}
%\label{prop_scal2}
   $\mathcal{S}_{n}=\left\{2Y+1,\:Y\in\mathcal{S}_{n-1}\right\},\:n\geq 1$.
%\end{equation}
 We then have
%\begin{equation}
%\label{prop_scal3}
$\mbox{cl}\,\mathcal{S}=\cup_{n\geq 0}\mathcal{S}_{n}$
%\end{equation}
\end{proposition}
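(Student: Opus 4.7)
The proof hinges on the observation that $f_0$ and $f_1$ send the invariant interval $[1+\sqrt{2},\infty)$ into \emph{disjoint} subregions. My first step would be to verify this directly: $f_1$ is increasing, fixes $1+\sqrt{2}$, and is bounded above by $3$, hence maps $[1+\sqrt{2},\infty)$ into $[1+\sqrt{2},3)$; while $f_0(X)=2X+1$ maps $[1+\sqrt{2},\infty)$ into $[3+2\sqrt{2},\infty)$. Since $3+2\sqrt{2}>3$, the two image sets are separated by a positive gap.

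Using this, every element of $\mathcal{S}$, being of the form $f_{i_1}\circ\cdots\circ f_{i_s}(P^\ast)$, can be uniquely rewritten as $f_0^{\,n}(Y)$ by peeling off the longest leading block of $f_0$'s, where $n\ge 0$, $f_0^{\,n}$ denotes the $n$-fold composition of $f_0$, and $Y$ belongs to $\mathcal{S}^{(0)}:=\{P^\ast\}\cup f_1(\mathcal{S})$. I therefore plan to establish the decomposition
\[
\mathcal{S}=\bigsqcup_{n\ge 0} f_0^{\,n}\!\left(\mathcal{S}^{(0)}\right),
\]
whose pieces lie in the pairwise disjoint compact intervals $I_n:=f_0^{\,n}([1+\sqrt{2},3])=[2^n(2+\sqrt{2})-1,\,4\cdot 2^n-1]$. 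Since eqn.~\eqref{supp_inv38} already places all of $\mathcal{S}$ above $P^\ast=1+\sqrt{2}$, and the gap between $I_n$ and $I_{n+1}$ equals $2\sqrt{2}\cdot 2^n$, only finitely many $I_n$ meet any bounded set.

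This local finiteness is what lets me move the closure inside the union:
\[
\mathrm{cl}(\mathcal{S})=\bigsqcup_{n\ge 0}\mathrm{cl}\!\left(f_0^{\,n}(\mathcal{S}^{(0)})\right)=\bigsqcup_{n\ge 0} f_0^{\,n}\!\left(\mathrm{cl}(\mathcal{S}^{(0)})\right),
\]
the last equality using that each $f_0^{\,n}$ is a homeomorphism of $\mathbb{R}$. Intersecting with $[1+\sqrt{2},3]$ picks out exactly the $n=0$ piece, so $\mathcal{S}_0=\mathrm{cl}(\mathcal{S}^{(0)})$. A short induction on $n$ using the recursive definition $\mathcal{S}_n=\{2Y+1:Y\in\mathcal{S}_{n-1}\}=f_0(\mathcal{S}_{n-1})$ then identifies $\mathcal{S}_n=f_0^{\,n}(\mathcal{S}_0)=f_0^{\,n}(\mathrm{cl}(\mathcal{S}^{(0)}))$, yielding $\mathrm{cl}(\mathcal{S})=\cup_{n\ge 0}\mathcal{S}_n$.

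The step I expect to be most delicate is pushing the topological closure inside the disjoint union, since a countable union of closed sets need not be closed in general. The quantitative separation of the $I_n$'s (gaps growing geometrically like $2\sqrt{2}\cdot 2^n$) is precisely what rescues this: any convergent sequence in $\mathcal{S}$ must eventually be trapped inside a single component $f_0^{\,n}(\mathcal{S}^{(0)})$, ruling out stray limit points that sit in gaps or straddle components. Once this separation argument is in place, the remainder is bookkeeping with the recursive definition of $\mathcal{S}_n$ and the continuity of the $f_i$'s.
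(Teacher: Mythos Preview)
Your proposal is correct and follows essentially the same route as the paper: both exploit the gap between $f_{1}([P^{\ast},\infty))\subset[1+\sqrt{2},3)$ and $f_{0}([P^{\ast},\infty))\subset[3+2\sqrt{2},\infty)$ to decompose $\mathcal{S}$ according to the length of the leading block of $f_{0}$'s, and then use the separation of the resulting intervals $I_{n}$ to argue that a convergent sequence in $\mathcal{S}$ is eventually trapped in a single piece. The paper carries this out via a direct Cauchy-sequence argument (using the bound $|Y_{n}-Y_{n_{0}}|<2\sqrt{2}$ to force the leading-$f_{0}$ count $\widetilde{s}_{n}$ to stabilize), whereas you phrase the same step as local finiteness of the cover $\{I_{n}\}$ allowing closure to commute with the disjoint union; these are equivalent formulations of the same idea.
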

Before proving Proposition~\ref{prop_scal}, we
interpret it. First, it reflects the
self-similarity of
$\mbox{supp}\left(\mu^{\overline{\gamma}}\right)$, i.e., it
suffices to know the structure of $\mbox{supp}\left(\mu^{\overline{\gamma}}\right)$ in the interval
$[1+\sqrt{2},3]$; this structure (with proper scaling) is
repeated over space. In particular, if $\mathcal{S}_{0}$ is the
restriction of $\mbox{supp}\left(\mu^{\overline{\gamma}}\right)$ to
$[1+\sqrt{2},3]$, the restriction to
$[3+2\sqrt{2},7]$ is given by $\mathcal{S}_{1}$ (which can be
written alternatively as $f_{0}\left(\mathcal{S}_{0}\right)$) and
is a stretched version of $\mathcal{S}_{0}$, the stretching factor
being 2. More generally, for $n\geq 1$, the restriction of
$\mbox{supp}\left(\mu^{\overline{\gamma}}\right)$ to
$\left[2^{n}(1+\sqrt{2})+2^{n}-1,2^{n}.3+2^{n}-1\right]$ is given by
$\mathcal{S}^{n}$, which is a stretched version of
$\mathcal{S}_{0}$, the stretching factor being $2^{n}$. Thus,  $\mbox{supp}\left(\mu^{\overline{\gamma}}\right)$ consists of stretching
the set $\mathcal{S}_{0}$ by factors of $2^{n}$ and placing them
over the real line.

Proposition~\ref{prop_scal} also shows that
$\mbox{supp}\left(\mu^{\overline{\gamma}}\right)$ is not dense in
$\left[1+\sqrt{2},\infty\right)$ and contains holes. In fact,  for every $n\geq 0$, there is a hole of length
$2^{n+1}\sqrt{2}$ between the sets $\mathcal{S}_{n}$ and
$\mathcal{S}_{n+1}$, corresponding to the interval
$\left(2^{n}.3+2^{n}-1,2^{n+1}(1+\sqrt{2})+2^{n+1}-1\right)$. In other words,
for every $n\geq 0$, the open interval
$\left(2^{n}.3+2^{n}-1,2^{n+1}(1+\sqrt{2})+2^{n+1}-1\right)$ does not belong
to $\mbox{supp}\left(\mu^{\overline{\gamma}}\right)$.

\begin{proof}[Proof of Proposition~\ref{prop_scal}.]
Let $Y\in\mbox{supp}\left(\mu^{\overline{\gamma}}\right)$ and
$Y>3$.\footnote{If $Y\leq 3$, then $Y\in\mathcal{S}_{0}$ trivially
by construction and hence $Y\in\cup_{n\geq 0}\mathcal{S}_{n}$.}
Then, there exists a sequence $\left\{Y_{n}\right\}_{n\in\mathbb{T}_{+}}$
converging to $Y$, s.t.~$Y_{n}\in\mathcal{S}$ for all $n$.
For every $n\in\mathbb{T}_{+}$, $Y_{n}$ can be
represented as
\begin{equation}
\label{prop_scal4} Y_{n}=f_{i_{1}}\circ\cdots\circ
f_{i_{s_{n}}}\left(P^{\ast}\right)
\end{equation}
for some $s_{n}\in\mathbb{T}_{+}$ (depending on $n$) and
$i_{r}\in\{0,1\},\:1\leq r\leq s_{n}$.

Since the sequence $\left\{Y_{n}\right\}_{n\in\mathbb{T}_{+}}$ is convergent,
it is Cauchy, and there exists $n_{0}\in\mathbb{T}_{+}$ such
that
\begin{equation}
\label{prop_scal5} \left|\right. Y_{n}-Y_{n_{0}}\left|\right.< 2\sqrt{2},
\:\forall\,\,\,\, n\geq n_{0}
\end{equation}
Without loss of generality, assume $n_{0}=0$ (otherwise,
work with the sequence starting at $n_{0}$.) For every $n$, define
\begin{eqnarray}
\label{prop_scal6} \widetilde{s}_{n}&=&\min\left\{r\left|\right.i_{r}=1,\:1\leq r\leq s_{n}\right\}\\
%\end{equation}
%and
%\begin{equation}
\label{prop_scal7}
\widetilde{Y}_{n}&=&f_{i_{\widetilde{s}_{n}}}\circ\cdots\circ
f_{i_{s_{n}}}\left(P^{\ast}\right)
\end{eqnarray}
Clearly, $1+\sqrt{2}\leq\widetilde{Y}_{n}\leq 3$ and, by definition,
$\widetilde{Y}_{n}\in\mathcal{S}_{0}$, for all $n$. By basic
manipulations and using eqn.~(\ref{prop_scal5}), it can be shown
\begin{equation}
\label{prop_scal8} \widetilde{s}_{n}=\widetilde{s}_{0},\:\forall n
\end{equation}
We can then represent the sequence
$\left\{Y_{n}\right\}_{n\in\mathbb{T}_{+}}$ as
\begin{equation}
\label{prop_scal9}
Y_{n}=f_{0}^{\widetilde{s}_{0}-1}\left(\widetilde{Y}_{n}\right)
\end{equation}
Since $Y_{n}\rightarrow Y$, and the function
$f_{0}^{\widetilde{s}_{0}-1}(\cdot)$ is one-to-one and continuous, the
sequence $\left\{\widetilde{Y}_{n}\right\}_{n\in\mathbb{T}_{+}}$ must
converge to some $\widetilde{Y}$, i.e.,
\begin{equation}
\label{prop_scal10}
\lim_{n\rightarrow\infty}\widetilde{Y}_{n}=\widetilde{Y}
\end{equation}
It also follows that $1+\sqrt{2}\leq\widetilde{Y}\leq 3$ and
$\widetilde{Y}\in\mathcal{S}_{0}$, the set $\mathcal{S}_{0}$ being
closed. We then have
\begin{equation}
\label{prop_scal11}
Y=f_{0}^{\widetilde{s}_{0}-1}\left(\widetilde{Y}\right)
\end{equation}
which implies $Y\in\mathcal{S}_{\widetilde{s}_{0}-1}$. We thus showed
the inclusion
\begin{equation}
\label{prop_scal12}
\mbox{supp}\left(\mu^{\overline{\gamma}}\right)\subset\cup_{n\geq 0} \mathcal{S}_{n}
\end{equation}
The reverse inclusion is obvious, and we have the claim.
\end{proof}
Proposition~\ref{prop_scal} shows the self-similarity of
$\mbox{supp}\left(\mu^{\overline{\gamma}}\right)$ at scales of
$2^{n}$, where $n\in\mathbb{N}$. A rigorous definition of fractal
(see, for example, \cite{Mandelbrot}) requires self-similarity at
every scale, which we do not pursue here. This explains why we use
`fractal like'  to describe the structure of
$\mbox{supp}\left(\mu^{\overline{\gamma}}\right)$. The fractal
nature of $\mbox{supp}\left(\mu^{\overline{\gamma}}\right)$,
though not obvious, is not very surprising. In fact, it is known
(see, for example,~\cite{LasotaMackey,DiaconisFreedman}) that a
large class of iterated function systems (systems, which generate
a Markov process by random switching between a set of at most
countable functions)\footnote{The RRE can be viewed as an
iterated function system, where the iterations are randomly
switched between the Lyapunov $\left(f_{0}\right)$ and Riccati
$\left(f_{1}\right)$ functions.} leads to fractal invariant
distributions.

Apart from the holes (fractures) in
$\mbox{supp}\left(\mu^{\overline{\gamma}}\right)$ explained by
Proposition~\ref{prop_scal}, the set
$\mbox{supp}\left(\mu^{\overline{\gamma}}\right)$ contains much more
fractures, as observed in the numerical plots of
$\mbox{supp}\left(\mu^{\overline{\gamma}}\right)$
(see Fig.~\ref{plot_support1}.) %,\ref{plot_support2}.)
  It follows from
Proposition~\ref{prop_scal} that a thorough study of
$\mbox{supp}\left(\mu^{\overline{\gamma}}\right)$ requires studying only one
of the sets, $\left\{\mathcal{S}_{n}\right\}_{n\in\mathbb{T}_{+}}$, as the
pattern is repeated over the real line.

In Fig.~\ref{plot_support1} on the top left, the blue region
corresponds to the set $\mathcal{S}_{0}$. The
figure shows that the set contains many fractures (these fractures
are internal to $\mathcal{S}_{0}$ and different from the holes
between consecutive elements of
$\left\{\mathcal{S}_{n}\right\}_{n\in\mathbb{T}_{+}}$ as explained by
Proposition~\ref{prop_scal}.) The blue blobs appearing in the
figure are fractured more finely, but the visualization software
limits the resolution by coalescing disconnected components
separated by small distances into one large blob. A better
visualization is obtained by looking at the set $\mathcal{S}_{1}$, see Fig.~\ref{plot_support1} on the top right,
which is a stretched version (by a factor of 2) of
$\mathcal{S}_{0}$, and more fractures are resolved.
\begin{figure}[ptb]
\begin{center}
\includegraphics[height=2.5in, width=2.5in]{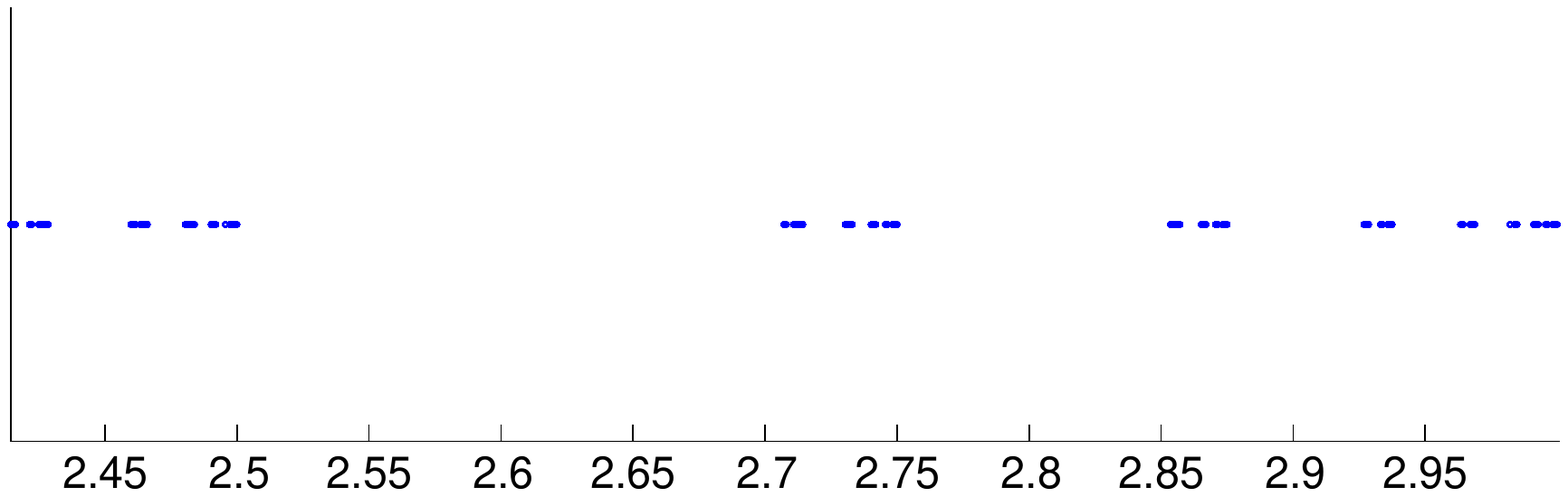}
\includegraphics[height=2.5in, width=3.5in]{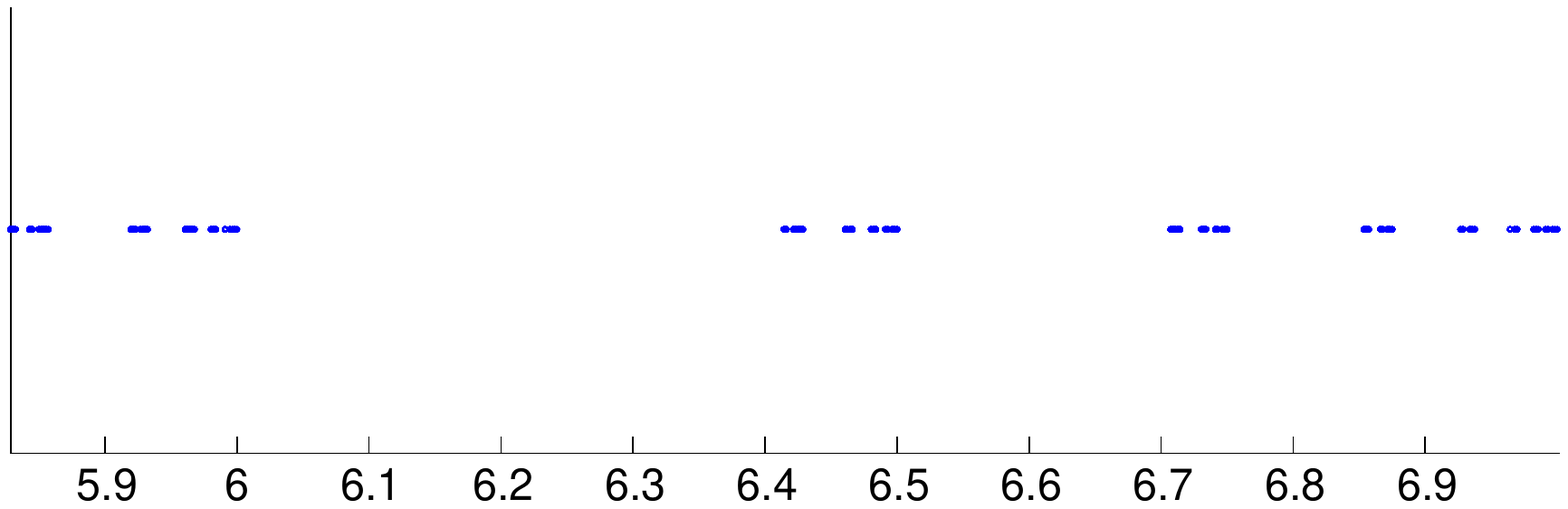}
\vspace{-1.5cm}
\includegraphics[height=3.5in, width=6.5in]{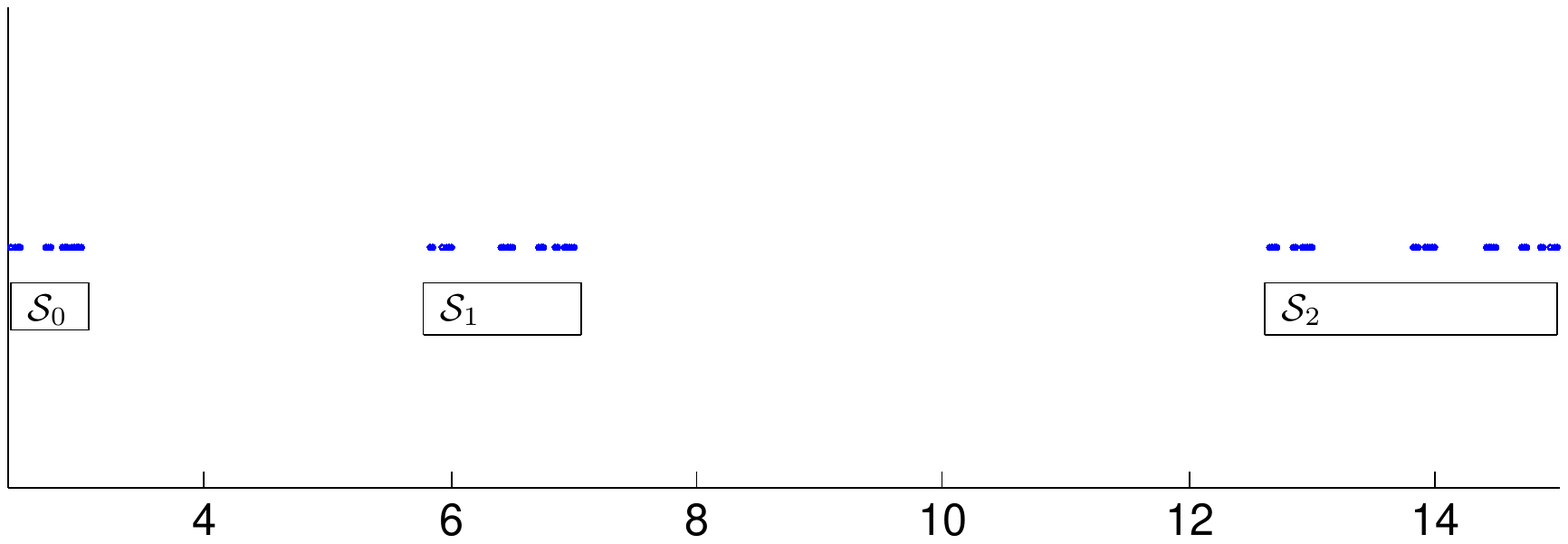}
\vspace{-1.5cm}\caption{Top left: Restriction of $\mu^{\overline{\gamma}}$ to
$[1+\sqrt{2},3]$, i.e., the set $\mathcal{S}_{0}$. Top right:
Restriction of $\mu^{\overline{\gamma}}$ to $[3+2\sqrt{2},7]$,
i.e., the set $\mathcal{S}_{1}$.
Bottom: Restriction of $\mu^{\overline{\gamma}}$ to
$[1+\sqrt{2},15]$, i.e., the set
$\mathcal{S}_{0}\cup\mathcal{S}_{1}\cup\mathcal{S}_{2}$.} \label{plot_support1}
\label{plot_support2}
\end{center}
\end{figure}
%
%\begin{figure}[ptb]
%\begin{center}
%\includegraphics[height=2in, width=8in]{Figures/S_012.pdf}
%\caption{Left: Restriction of $\mu^{\overline{\gamma}}$ to
%$[1+\sqrt{2},15]$, i.e., the set
%$\mathcal{S}_{0}\cup\mathcal{S}_{1}\cup\mathcal{S}_{2}$.}
%\label{plot_support2}
%\end{center}
%\end{figure}
%
Finally, in Fig.~\ref{plot_support1} bottom, we plot
the set $\mu^{\overline{\gamma}}$ restricted to $[1+\sqrt{2},15]$,
i.e., the set
$\mathcal{S}_{0}\cup\mathcal{S}_{1}\cup\mathcal{S}_{2}$. The
figure demonstrates the self-similarity and the inter-set holes
(holes between consecutive elements of
$\{\mathcal{S}_{n}\}_{n\in\mathbb{T}_{+}}$) as explained by
Proposition~\ref{prop_scal}.

\subsection{Numerical Studies}
\label{num_stud} We study numerically the
eigenvalue distribution from $\mu^{\overline{\gamma}}$
($\overline{\gamma}^{\mbox{\scriptsize sb}}<\overline{\gamma}<1$) for a
10-dimensional system. The matrices $A$ and $C$ (of dimensions
$10\times 10$ and $5\times 10$, respectively) are generated
randomly, so that the assumptions of Theorem~\ref{main_th} are
satisfied.

In Fig.~\ref{eig_l} on the left, we plot the cumulative distribution
function (c.d.f.) of the largest eigenvalue $\lambda_{10}(\cdot)$
from $\mu^{\overline{\gamma}}$ for different values of
$\overline{\gamma}^{\mbox{\scriptsize sb}}<\overline{\gamma}<1$. From the figure, we see (as expected) that, as $\overline{\gamma}$ increases to 1, the
distributions approach the Dirac distribution
$\delta_{\lambda_{10}\left(P^{\ast}\right)}$ (the distribution with entire
mass concentrated at $\lambda_{10}\left(P^{\ast}\right)$, the largest
eigenvalue of the deterministic fixed point $P^{\ast}$ of the
algebraic Riccati equation.) The fact that, for
$\overline{\gamma}^{\mbox{\scriptsize sb}}<\overline{\gamma}<1$, the support of the eigenvalue distribution is a subset of
$\left[\lambda_{10}\left(P^{\ast}\right),\infty\right)$ follows from
Theorem~\ref{supp_inv} (eqn.~(\ref{main_th1}).

Similarly, in Fig.~(\ref{eig_l}) on the right, we plot the
c.d.f.~of the trace (which is
the conditional mean-squared error) from $\mu^{\overline{\gamma}}$
for different values of
$\overline{\gamma}^{\mbox{\scriptsize sb}}<\overline{\gamma}<1$. From the figure, we
see (as expected) that, as $\overline{\gamma}$ increases to 1, the
distributions approach the Dirac distribution
$\delta_{\mbox{Tr}\left(P^{\ast}\right)}$.

The eigenvalue distributions can be used in system design for control and estimation problems. Since the RRE converges in distribution to
$\mu^{\overline{\gamma}}$, the system designer can tune the
operating $\overline{\gamma}$ to ensure satisfactory system
performance.
\begin{figure}[ptb]
\begin{center}
\includegraphics[height=2.5in, width=3in]{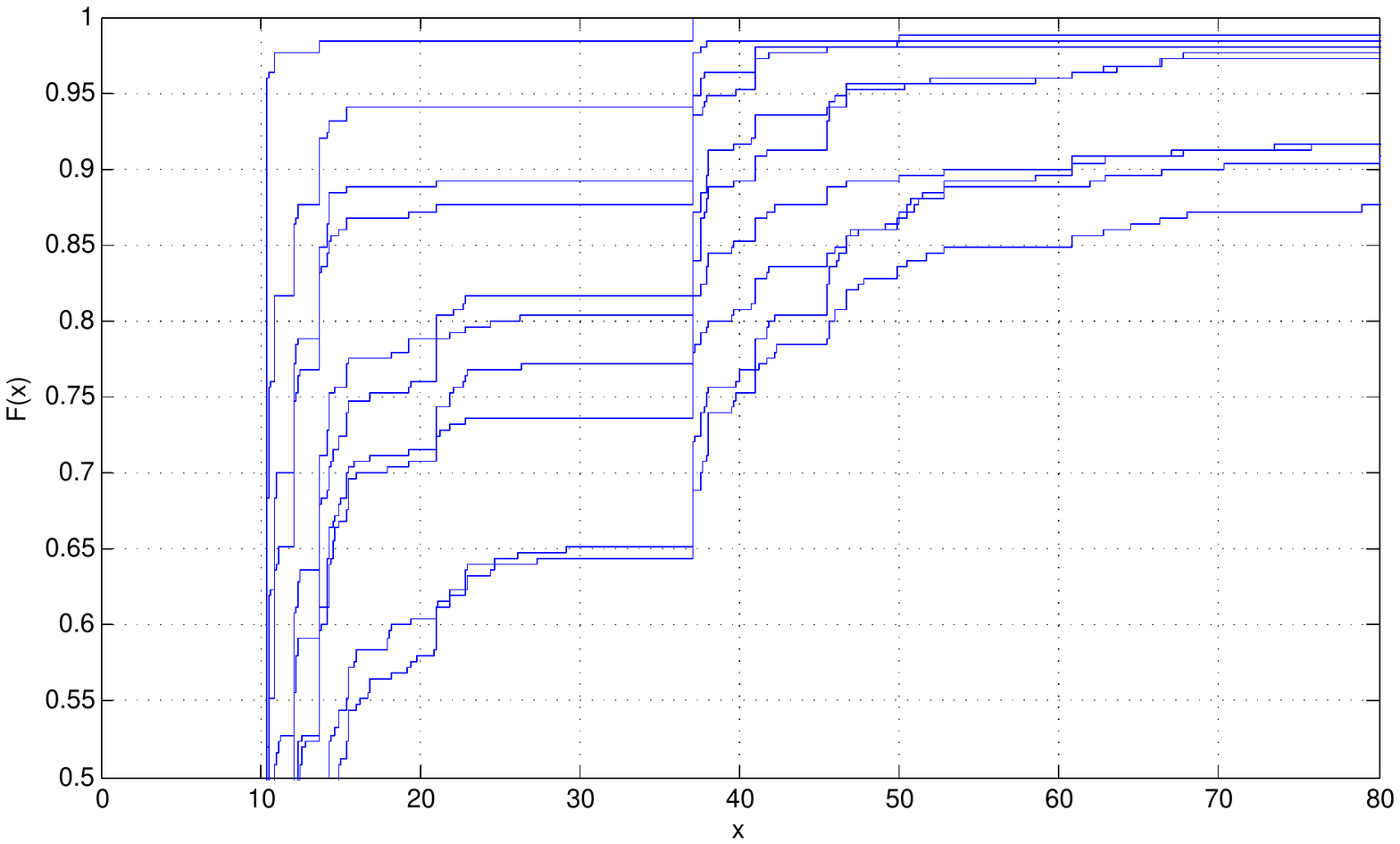}
\includegraphics[height=2.5in, width=3in]{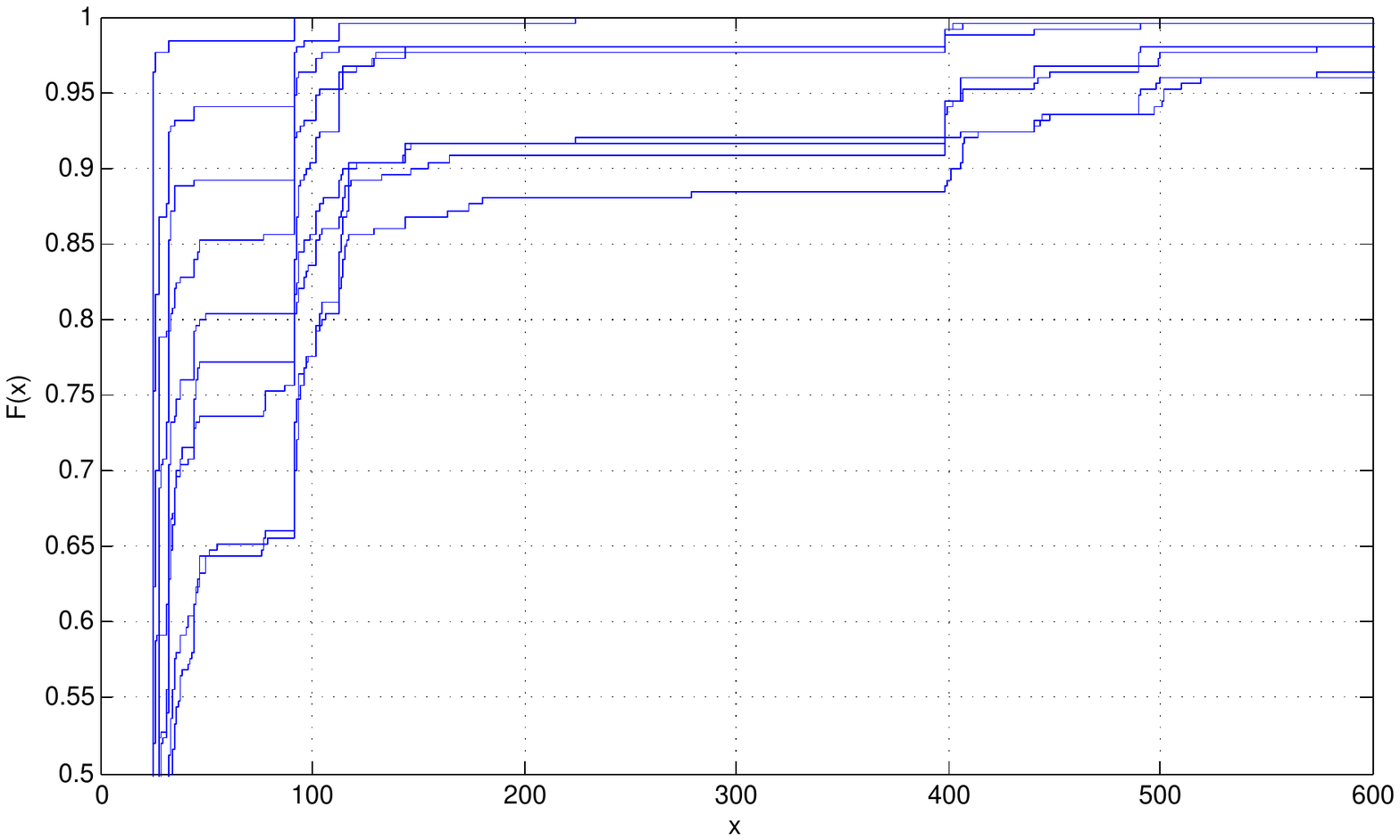}
\vspace{-2cm}\caption{Left: CDF of largest eigenvalue from
$\mu^{\overline{\gamma}}$ for varying $\overline{\gamma}$, as
$\overline{\gamma}$ approaches 1 ($\overline{\gamma}$ increases from right to left.) Right: CDF of trace from
$\mu^{\overline{\gamma}}$ for varying $\overline{\gamma}$, as
$\overline{\gamma}$ approaches 1 ($\overline{\gamma}$ increases from right to left.)} \label{eig_l}
\end{center}
\end{figure}

We end this subsection by remarking on the numerical computation
of moments from the invariant distribution
$\mu^{\overline{\gamma}}$
$\left(\overline{\gamma}^{\mbox{\scriptsize
sb}}<\overline{\gamma}<1.\right)$ We set notation first.

Define $L^{1}\left(\mu^{\overline{\gamma}}\right)$ to be the set of
integrable functions on $\mathbb{S}^{N}_{+}$ w.r.t.~the measure
$\mu^{\overline{\gamma}}$. We then have the following assertion:
\begin{proposition}
\label{prop_sim} Fix $\overline{\gamma}^{\mbox{\scriptsize sb}}<\overline{\gamma}<1$
and let $h\in L^{1}\left(\mu^{\overline{\gamma}}\right)$. Then there exists a
set $\mathcal{S}_{h}^{\overline{\gamma}}\subset\mathbb{S}^{N}_{+}$
with
$\mu^{\overline{\gamma}}\left(\mathcal{S}_{h}^{\overline{\gamma}}\right)=1$,
such that, for every
$P_{0}\in\mathcal{S}_{h}^{\overline{\gamma}}$, the following
holds:
\begin{equation}
\label{prop_sim1}
\lim_{t\rightarrow\infty}\frac{1}{t}\sum_{s=0}^{t-1}h(P_{t})=
\int_{\mathbb{S}^{N}_{+}}h(Y)\mu^{\overline{\gamma}}d(Y),
\:\mathbb{P}^{\overline{\gamma},P_{0}}\:\mbox{a.s.}
\end{equation}
where $\left\{P_{t}\right\}_{t\in\mathbb{T}_{+}}$ is the sequence generated
by the RRE with initial condition $P_{0}$.
\end{proposition}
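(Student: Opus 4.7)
The plan is to obtain Proposition~\ref{prop_sim} as a consequence of Birkhoff's ergodic theorem applied to the stationary Markov chain whose one-dimensional distribution is $\mu^{\overline{\gamma}}$. By Theorem~\ref{main_th} the Markov-Feller operator $T^{\overline{\gamma}}$ is uniquely ergodic with invariant probability $\mu^{\overline{\gamma}}$, so the natural object to study is the stationary version of $\{P_{t}\}_{t\in\mathbb{T}_{+}}$, i.e., the path-space measure $\mathbb{P}_{\mu^{\overline{\gamma}}}^{\overline{\gamma}}$ on $(\Omega^{c},\mathcal{F}^{c})$ defined by $\mathbb{P}_{\mu^{\overline{\gamma}}}^{\overline{\gamma}}=\int_{\mathbb{S}_{+}^{N}}\mathbb{P}^{\overline{\gamma},P_{0}}\,\mu^{\overline{\gamma}}(dP_{0})$. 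The left shift $\Theta$ on $\Omega^{c}$ preserves $\mathbb{P}_{\mu^{\overline{\gamma}}}^{\overline{\gamma}}$ because $\mu^{\overline{\gamma}}$ is invariant.

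First, I would establish that the dynamical system $(\Omega^{c},\mathcal{F}^{c},\mathbb{P}_{\mu^{\overline{\gamma}}}^{\overline{\gamma}},\Theta)$ is ergodic. This is where uniqueness of the invariant probability is used: a standard argument (see, e.g., \cite{Zaharopol}) shows that if $T^{\overline{\gamma}}$ admits a unique invariant probability, then every shift-invariant event in $\mathcal{F}^{c}$ has $\mathbb{P}_{\mu^{\overline{\gamma}}}^{\overline{\gamma}}$-measure zero or one. Indeed, if $B$ were a shift-invariant set of intermediate measure, one could decompose $\mu^{\overline{\gamma}}$ into two invariant probabilities by conditioning on the ``entry'' coordinate, contradicting unique ergodicity. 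Once shift-ergodicity is established, Birkhoff's ergodic theorem applied to the bounded (or integrable) observable $H(\omega)=h(\omega_{0})$, $\omega\in\Omega^{c}$, yields
\begin{equation}
\label{sim_plan1}
\lim_{t\rightarrow\infty}\frac{1}{t}\sum_{s=0}^{t-1}h(P_{s})=\int_{\mathbb{S}_{+}^{N}}h(Y)\,\mu^{\overline{\gamma}}(dY),\quad\mathbb{P}_{\mu^{\overline{\gamma}}}^{\overline{\gamma}}\text{-a.s.}
\end{equation}

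Next, I would convert the ``stationary almost sure'' statement \eqref{sim_plan1} into the ``a.e.\ initial condition'' statement of the proposition. Let $E_{h}\subset\Omega^{c}$ denote the full-measure event on which \eqref{sim_plan1} holds. Since $\mathbb{P}_{\mu^{\overline{\gamma}}}^{\overline{\gamma}}$ is the $\mu^{\overline{\gamma}}$-mixture of the Markov family $\{\mathbb{P}^{\overline{\gamma},P_{0}}\}$, Fubini yields
\[
\int_{\mathbb{S}_{+}^{N}}\mathbb{P}^{\overline{\gamma},P_{0}}(E_{h})\,\mu^{\overline{\gamma}}(dP_{0})=1,
\]
and hence the set
\[
\mathcal{S}_{h}^{\overline{\gamma}}=\bigl\{P_{0}\in\mathbb{S}_{+}^{N}:\mathbb{P}^{\overline{\gamma},P_{0}}(E_{h})=1\bigr\}
\]
has $\mu^{\overline{\gamma}}(\mathcal{S}_{h}^{\overline{\gamma}})=1$. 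For every $P_{0}\in\mathcal{S}_{h}^{\overline{\gamma}}$, the claim \eqref{prop_sim1} then holds $\mathbb{P}^{\overline{\gamma},P_{0}}$-a.s., proving the proposition. Measurability of $\mathcal{S}_{h}^{\overline{\gamma}}$ follows from the measurability of $P_{0}\mapsto\mathbb{P}^{\overline{\gamma},P_{0}}(E_{h})$, which is standard for Markov kernels.

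The step I expect to be the main technical obstacle is deducing shift-ergodicity of $(\Omega^{c},\Theta,\mathbb{P}_{\mu^{\overline{\gamma}}}^{\overline{\gamma}})$ from unique ergodicity of $T^{\overline{\gamma}}$; although classical, one must be careful that the Markov process is extremal among stationary measures, which uses the fact that $T^{\overline{\gamma}}$ has a unique invariant probability rather than just a unique ergodic one. Everything else---Birkhoff's theorem, the Fubini disintegration, and the measurability of $\mathcal{S}_{h}^{\overline{\gamma}}$---is routine once integrability of $h$ against $\mu^{\overline{\gamma}}$ is in hand by hypothesis.
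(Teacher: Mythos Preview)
Your proposal is correct and follows essentially the same route as the paper: the paper's proof is a one-sentence citation, noting that $\mu^{\overline{\gamma}}$ is an ergodic measure (being attractive, hence the unique invariant probability) for the transition operator $\mathbb{Q}^{\overline{\gamma}}$ and then invoking the pathwise ergodic theorem for Markov chains from \cite{LermaLasserre}. Your argument simply unpacks this citation---deducing shift-ergodicity of the stationary path measure from uniqueness of the invariant probability, applying Birkhoff, and disintegrating via Fubini to recover the $\mu^{\overline{\gamma}}$-a.e.\ initial condition statement---so there is no substantive difference in approach.
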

\begin{proof}
It follows from the fact that $\mu^{\overline{\gamma}}$ is an
ergodic measure (being attractive) for the transition probability
operator $\mathbb{Q}^{\overline{\gamma}}$ of the Markov process
$\left\{P_{t}\right\}_{t\in\mathbb{T}_{+}}$ (see, for
example, \cite{LermaLasserre}.)
\end{proof}
Proposition~\ref{prop_sim} has important consequences in computing
moments and probabilities (for example, the probability of escape
from a set can be obtained by using $h$ to be the complementary
indicator function) from the invariant distribution
$\mu^{\overline{\gamma}}$. It says that, for a function
$h:\mathbb{S}^{N}_{+}\longmapsto\mathbb{R}$ with finite
$\mu^{\overline{\gamma}}$-moment, there exists a set
$\mathcal{S}_{h}^{\overline{\gamma}}$ with
$\mu^{\overline{\gamma}}$-probability one, such that, if the
initial condition belongs to
$\mathcal{S}_{h}^{\overline{\gamma}}$, the empirical moment
converges to the $\mu^{\overline{\gamma}}$-moment for every sample
path a.s. Thus, in order to compute a
$\mu^{\overline{\gamma}}$-moment, generating a single instance of
the Markov process suffices, as long as the initial condition
belongs to the set $\mathcal{S}_{h}^{\overline{\gamma}}$. This has
important consequences in moment computation from the invariant
distribution as, under the assumptions of
Proposition~\ref{prop_sim}, one does not need to run costly
simulations to generate the distribution $\mu^{\overline{\gamma}}$
empirically; rather, the generation of a single path would
suffice.

The difficulty in using Proposition~\ref{prop_sim} is that the set
$\mathcal{S}_{h}^{\overline{\gamma}}$ is known only up to
a.s.~equivalence and further depends on $h$. In general,
$\mathcal{S}_{h}^{\overline{\gamma}}$ is not the entire
$\mathbb{S}^{N}_{+}$.\footnote{In fact,
$\mathcal{S}_{h}^{\overline{\gamma}}=\mathbb{S}^{N}_{+}$
\emph{iff} the Markov process is positive Harris recurrent, a
property that iterated function systems do not possess generally
(see~\cite{LermaLasserre}.)} However, Theorem~\ref{supp_inv}
provides us with the support of $\mu^{\overline{\gamma}}$ and
implies, in particular, that for every $\varepsilon>0$ and
$Y\in\mathcal{S}$ the open ball $B_{\varepsilon}(Y)$ has positive
measure. Thus, by choosing initial conditions $P_{0}$ arbitrarily
close to (and including) a $Y\in\mathcal{S}$, one is likely to get
the convergence in Proposition~\ref{prop_sim}.

Also, we note that the set of functions $h\in L^{1}(\mathbb{\mu}^{\overline{\gamma}})$ is not empty. As a matter of fact, all bounded measurable functions $h:\mathbb{S}_{+}^{N}\longmapsto\mathbb{R}$ are contained in $L^{1}(\mathbb{\mu}^{\overline{\gamma}})$, for every $\overline{\gamma}$ which guarantees the existence and uniqueness of $\mathbb{\mu}^{\overline{\gamma}}$ (for example, $\overline{\gamma}>0$ for stabilizable and detectable systems.) In some situations, it may be possible to determine the moments of unbounded functionals by approximating them by a sequence of suitable truncations and then invoking some form of dominated convergence. An important case is the mean evaluation corresponding to $h(Y)=Y$. In that case, by Theorem~\ref{main_res} we note that if we operate at $\overline{\gamma}>\overline{\gamma}^{\mbox{\scriptsize{bim}}}$, the integral w.r.t. the corresponding $\mathbb{\mu}^{\overline{\gamma}}$ exists and hence, one may invoke Proposition~\ref{prop_sim} to compute the mean under the invariant measure.

\section{Conclusions and future work}
\label{conclusion} The paper presents a new
analysis of the Random Riccati Equation. It studies the evolution
of the state error covariance arising from a Kalman filter where
observations can be lost according to an i.i.d.~Bernoulli process. This process
can model an analog erasure channel between the sensor and the
estimator. Following a novel approach based on random dynamical systems, we
provide an exhaustive analysis of the steady state
behavior of the filter.

We showed the existence of critical arrival probabilities
$\overline{\gamma}^{\mbox{\scriptsize sb}}, \overline{\gamma}^{\mbox{\scriptsize bim}}$, such that the
error covariance converges in distribution to a unique steady
state distribution if the arrival probability
$\overline{\gamma}>\overline{\gamma}^{\mbox{\scriptsize sb}}$; this distribution has finite mean
for $\overline{\gamma}>\overline{\gamma}^{\mbox{\scriptsize bim}}$. Additionally, we
provided a characterization of the support of the steady state
distribution, showing its fractal characteristics. The latter
result, combined with ergodicity arguments, provides a method to
numerically evaluate the error covariance steady state
distribution. We feel that our approach is particularly amenable
to addressing general problems of networked control problems, as
they provide a theoretical framework to combine stochastic
processes used in the modeling of communication networks with
differential equations describing the evolution of dynamical
systems.

The approach and results in the paper will easily extend to problems of control over erasure channels. Further, we plan to describe more complex
interactions and tradeoffs between communication and control via
random dynamical systems.

\appendices

\section{Proofs of
Propositions~\ref{prop_MarkFell},\ref{imp_sb},\ref{prop_crel}}
\label{proof_props}
{\small
\begin{proof}[Proposition~\ref{prop_MarkFell}]
The fact that $T^{\overline{\gamma}}$ is a Markov operator is a
standard consequence of $\mathcal{Q}^{\overline{\gamma}}$ being a
transition probability (see, for example, \cite{Zaharopol}.) Also, $L^{\overline{\gamma}}$ is linear.
Thus, we only need to verify that $L^{\overline{\gamma}}$ maps
$C_{b}\left(\mathbb{S}_{+}^{N}\right)$ to
$C_{b}\left(S_{n}^{+}\right)$. For linear operators
generated by eqn.~(\ref{def_Lgamma}), such a property is called
the weak-Feller property of the transition probability
$\mathbb{Q}^{\overline{\gamma}}$ (see~\cite{LermaLasserre}.) It
can be shown (see Proposition 7.2.1.~in~\cite{LermaLasserre}) that
$\mathbb{Q}^{\overline{\gamma}}$ is weak-Feller \emph{iff} for
every sequence $\left\{Y_{n}\right\}_{n\in\mathbb{T}_{+}}$ in
$\mathbb{S}^{N}_{+}$ that converges to some
$Y\in\mathbb{S}^{N}_{+}$, and every open set
$\mathcal{O}\in\mathcal{B}\left(\mathbb{S}^{N}_{+}\right)$
\begin{equation}
\label{prop_MarkFell4}
\liminf_{n\rightarrow\infty}\mathbb{Q}^{\overline{\gamma}}
\left(Y_{n},\mathcal{O}\right)\geq
\mathbb{Q}^{\overline{\gamma}}\left(Y,\mathcal{O}\right)
\end{equation}
To verify eqn.~(\ref{prop_MarkFell4}), we note that by
eqn.~(\ref{def_trQ})
\begin{equation}
\label{prop_MarkFell5}
\mathbb{Q}^{\overline{\gamma}}\left(Y_{n},\mathcal{O}\right)=
\left(1-\overline{\gamma}\right)\mathbb{I}_{\mathcal{O}}
\left(f_{0}\left(Y_{n}\right)\right)+\overline{\gamma}\mathbb{I}_{\mathcal{O}}
\left(f_{1}\left(Y_{n}\right)\right),\:\forall
n\in\mathbb{T}_{+}
\end{equation}
If $f_{0}(Y)\in\mathcal{O}$, from the continuity of
$f_{0}$ (Lemma~\ref{prop1}) and  $\mathcal{O}$
open, $\exists\, n_{0}\in\mathbb{T}_{+}$, s.t.
\begin{equation}
\label{prop_MarkFell6} f_{0}\left(Y_{n}\right)\in\mathcal{O},\:n\geq n_{0}
\end{equation}
which implies
\begin{eqnarray}
\label{prop_MarkFell7}
\lim_{n\rightarrow\infty}\mathbb{I}_{\mathcal{O}}
\left(f_{0}\left(Y_{n}\right)\right)
%& = &
=1 %\nonumber \\ & = &
=\mathbb{I}_{\mathcal{O}}\left(f_{0}(Y)\right)
\end{eqnarray}
On the other hand, if $f_{0}(Y)\notin\mathcal{O}$, we have
\begin{eqnarray}
\label{prop_MarkFell8}
\liminf_{n\rightarrow\infty}\mathbb{I}_{\mathcal{O}}
\left(f_{0}\left(Y_{n}\right)\right)
%& \geq &
\geq 0 %\nonumber \\ & = &
=\mathbb{I}_{\mathcal{O}}\left(f_{0}(Y)\right)
\end{eqnarray}
%Thus, in any case, we have
%\begin{equation}
%\label{prop_MarkFell9a}
%\liminf_{n\rightarrow\infty}\mathbb{I}_{\mathcal{O}}
%\left(f_{0}\left(Y_{n}\right)\right)\geq\mathbb{I}_{\mathcal{O}}
%\left(f_{0}(Y)\right)
%\end{equation}
Similarly, we have
\begin{equation}
\label{prop_MarkFell9}
\liminf_{n\rightarrow\infty}\mathbb{I}_{\mathcal{O}}
\left(f_{1}\left(Y_{n}\right)\right)\geq
\mathbb{I}_{\mathcal{O}}\left(f_{1}(Y)\right)
\end{equation}
and eqn.~(\ref{prop_MarkFell4}) follows.
\end{proof}

\begin{proof}[Proposition~\ref{imp_sb}]
Assume $\left\{P_{t}\right\}_{t\in\mathbb{T}_{+}}$ is b.i.m.~for some
$\overline{\gamma}$ and $P_{0}$, i.e., $\exists\,
M^{\overline{\gamma},P_{0}}$ such that
\begin{equation}
\label{imp_sb3}
\sup_{t\in\mathbb{T}_{+}}\mathbb{E}^{\overline{\gamma},P_{0}}
\left[P_{t}\right]\preceq M^{\overline{\gamma},P_{0}}
\end{equation}
By the positive semidefinitess of the matrices
and  linearity of the trace,
\begin{eqnarray}
\label{imp_sb4}
\forall t\in\mathbb{T}_{+}:\: \mathbb{E}^{\overline{\gamma},P_{0}}\left[\left\|P_{t}\right\|\right]
%& \leq &
\leq \mathbb{E}^{\overline{\gamma},P_{0}}\left[\mbox{Tr}\,P_{t}\right]
%\nonumber \\ & =&
= \mbox{Tr}\left(\mathbb{E}^{\overline{\gamma},P_{0}}
\left[P_{t}\right]\right)
%\nonumber \\ & \leq &
\leq \mbox{Tr}\,M^{\overline{\gamma},P_{0}}
\end{eqnarray}
Chebyshev's inequality then implies
\begin{eqnarray}
\label{imp_sb5}
\mathbb{P}^{\overline{\gamma},P_{0}}\left(\left\|P_{t}\right\|>N\right)&\leq&
\frac{\mbox{Tr}\,M^{\overline{\gamma},P_{0}}}{N}\\
%\end{equation}
%which implies
%\begin{eqnarray}
\label{imp_sb6}
\limsup_{N\rightarrow\infty}\sup_{t\in\mathbb{T}_{+}}
\mathbb{P}^{\overline{\gamma},P_{0}}\left(\left\|P_{t}\right\|>N\right)
& \leq &
\lim_{N\rightarrow\infty}\frac{\mbox{Tr}\,M^{\overline{\gamma},P_{0}}}{N}\nonumber
%\\ & = & 0
=0
\end{eqnarray}
\end{proof}

\begin{proof}[Proposition~\ref{prop_crel}]
Part~i) is obvious from Proposition~\ref{imp_sb}.

For part~ii) we consider the case of unstable $A$. For stable $A$, the proposition is trivial and follows from the fact, that, the unconditional variance of the state sequence reaches a steady state (hence bounded), and a suboptimal estimate $\widehat{x}_{t}\equiv 0$ for all $t$ leads to pathwise boundedness of the corresponding error covariance. In fact, as pointed earlier, in this case, even $\overline{\gamma}=0$ leads to stochastic boundedness of the sequence $\{P_{t}\}$ from every initial condition.

We now prove part [ii] for unstable $A$, under the additional assumption of invertibility of $C$ (the general case being considered in Lemma 13 of~\cite{Riccati-moddev}.) Fix $\overline{\gamma}>0$,
$P_{0}\in\mathbb{S}^{N}_{+}$ and recall the function
$f_{1}(\cdot)$ (eqn.~(\ref{def_f1}))
\begin{equation}
\label{prop_crel4}
f_{1}(X)=AXA^{T}+Q-AXC^{T}\left(CXC^{T}+R\right)^{-1}CXA^{T}
\end{equation}
Since $C$ is invertible, we use the matrix inversion lemma to
obtain
\begin{equation}
\label{prop_crel5} \left(CXC^{T}+R\right)^{-1} =
C^{-T}X^{-1}C^{-1}-C^{-T}X^{-1}C^{-1}
\left(R^{-1}+C^{-T}X^{-1}C^{-1}\right)^{-1}C^{-T}X^{-1}C^{-1}
\end{equation}
By substituting into eqn.~(\ref{prop_crel4}), we have
\begin{equation}
\label{prop_crel6}
f_{1}(X)=AC^{-1}\left(R^{-1}+C^{-T}X^{-1}C^{-1}\right)^{-1}C^{-T}A^{T}+Q
\end{equation}
Using
%\begin{equation}
%\label{prop_crel7}
$
\left\|\left(R^{-1}+C^{-T}X^{-1}C^{-1}\right)^{-1}
\right\|\leq\left\|R\right\|
%\end{equation}
$,
we have
\begin{equation}
\label{prop_crel8}
\left\|f_{1}(X)\right\|\leq\|A\|\|C\|\|C^{-T}\|\|A^{T}\|\|R\|+\left\|Q\right\|
\end{equation}
For $N\in\mathbb{T}_{+}$ and sufficiently large, define
\begin{equation}
\label{prop_crel9}
k(N)=\min\left\{k\in\mathbb{T}_{+}\left|\right.M
\left(\alpha^{2(k-1)}-1\right)+\frac{\left\|Q\right\|\alpha^{2(k-1)}}{\alpha^{2}-1}\geq
N\right\}
\end{equation}
where $\alpha$ is the absolute value of the largest eigenvalue of
$A$  and  $M$ is
\begin{equation}
\label{prop_crel10}
M=\max\left\{\|A\|\|C\|\|C^{-T}\|\|A^{T}\|\|R\|
+\left\|Q\right\|,\left\|P_{0}\right\|\right\}
\end{equation}
As $N\uparrow\infty$, $k(N)\uparrow\infty$. To estimate
$\mathbb{P}^{\overline{\gamma},P_{0}}\left(\left\|P_{t}\right\|>N\right)$,
%for some time $t$,
 define the random time $\widetilde{t}$ by
\begin{equation}
\label{prop_crel11} \widetilde{t}=\max\left\{0< s\leq
t\left|\right.\gamma_{s-1}=1\right\}
\end{equation}
where the maximum of an empty set is taken to be zero. Using the
above arguments, we have
\begin{equation}
\label{prop_crel12} \left\|P_{\widetilde{t}}\right\|\leq M
\end{equation}
Indeed, if $\widetilde{t}=0$, eqn.~(\ref{prop_crel12}) clearly holds by
definition of $M$. On the contrary, if $\widetilde{t}>0$, we have
\begin{eqnarray}
\label{prop_crel13} \left\|P_{\widetilde{t}}\right\| %& = &
=\left\|f_{1}\left(P_{\widetilde{t}-1}\right)\right\|
%\nonumber \\ & \leq &
\leq \|A\|\|C\|\|C^{-T}\|\|A^{T}\|\|R\|
+\left\|Q\right\|
%\nonumber \\ & \leq &
\leq M
\end{eqnarray}
We also have
%\begin{equation}
%\label{prop_crel14}
$
P_{t} =
f_{0}^{t-\widetilde{t}-1}\left(P_{\widetilde{t}}\right)
%\end{equation}
$,
which implies
\begin{eqnarray}
%\label{prop_crel15}
\nonumber
\left\|P_{t}\right\|
%& = &
=
\left\|f_{0}^{t-\widetilde{t}-1}\left(P_{\widetilde{t}}\right)\right\|
%\nonumber \\ & \leq &
\leq
M\alpha^{2(t-\widetilde{t}-1)}+\left\|Q\right\|\sum_{k=1}^{t-\widetilde{t}-1}
\alpha^{2(k-1)}
%\nonumber \\ & = &
=
M\alpha^{2(t-\widetilde{t}-1)}+
\left\|Q\right\|\frac{\alpha^{2(t-\widetilde{t}-1)}-1}{\alpha^{2}-1}
\end{eqnarray}
From this inequality and eqn.~(\ref{prop_crel9}), %\ref{prop_crel15}),
  it follows
\begin{equation}
\label{prop_crel16}
\mathbb{P}^{\overline{\gamma},P_{0}}\left(\left\|P_{t}\right\|>N\right)\leq
\mathbb{P}^{\overline{\gamma},P_{0}}\left(t-\widetilde{t}\geq
k(N)\right)
\end{equation}
We observe that the random time %$\widetilde{t}$ satisfies
%\begin{equation}
%\label{prop_cre17}
$
0\leq\widetilde{t}\leq t
$.
%\end{equation}
In case, $0\leq k\leq t-1$,
\begin{eqnarray}
\label{prop_crel18}
\mathbb{P}^{\overline{\gamma},P_{0}}\left(t-\widetilde{t}=k\right)
%& = &
= \mathbb{P}^{\overline{\gamma},P_{0}}\left(\gamma_{t-k-1}=1,
\{\gamma_{s}=0\}_{t-k\leq s<t}\right)
%\nonumber \\ & = &
=
\overline{\gamma}\left(1-\overline{\gamma}\right)^{k}
%\nonumber \\ & \leq &
\leq \left(1-\overline{\gamma}\right)^{k}
\end{eqnarray}
If $k=t$, we have
\begin{eqnarray}
\label{prop_crel19}
\mathbb{P}^{\overline{\gamma},P_{0}}\left(t-\widetilde{t}=k\right)
% & = &
=
\mathbb{P}^{\overline{\gamma},P_{0}}\left(\{\gamma_{s}=0\}_{0\leq
s<t}\right)
%\nonumber \\ & = &
= \left(1-\overline{\gamma}\right)^{k}
\end{eqnarray}
We thus have the upper bound (possibly loose, but sufficient for our purposes)
\begin{eqnarray}
\label{prop_crel20}
\mathbb{P}^{\overline{\gamma},P_{0}}\left(t-\widetilde{t}\geq
k(N)\right)
% & \leq &
\leq
\sum_{k=k(N)}^{\infty}\left(1-\overline{\gamma}\right)^{k}
%\nonumber \\ & = &
=
\frac{\left(1-\overline{\gamma}\right)^{k(N)}}{\overline{\gamma}}
\end{eqnarray}
%(Note the above bound may be loose, but is sufficient for our purpose.)
%
From eqns.~(\ref{prop_crel16},\ref{prop_crel20}), we have for all
$t$ and sufficiently large~$N$
\begin{equation}
\label{prop_crel21}
\mathbb{P}^{\overline{\gamma},P_{0}}\left(\left\|P_{t}\right\|>N\right)
\leq\frac{\left(1-\overline{\gamma}\right)^{k(N)}}{\overline{\gamma}}
\end{equation}
Since $\overline{\gamma}>0$ and $k(N)\uparrow\infty$ as
$N\uparrow\infty$, it then follows
\begin{equation}
\label{prop_crel22}
\lim_{N\rightarrow\infty}\sup_{t\in\mathbb{T}_{+}}
\mathbb{P}^{\overline{\gamma},P_{0}}\left(\left\|P_{t}\right\|>N\right)=0
\end{equation}
Thus, $\left\{P_{t}\right\}_{t\in\mathbb{T}_{+}}$ is s.b. (for all initial
conditions $P_{0}$) for every $\overline{\gamma}>0$. The
Proposition follows.
\end{proof}
}

\bibliographystyle{IEEEtran}
\bibliography{IEEEabrv,CentralBib}

\end{document}